\newcommand{\comment}[1]{}
\def\einf{{\rm ess \, inf}}
\def\esup{{\rm ess \, sup}}
\newtheorem{theorem}{Theorem}[section]
\newtheorem{lem}{Lemma}[section]
\newtheorem*{theorem*}{Theorem}
\newtheorem{assumption}{Assumption}
\newtheorem*{assumption*}{Assumption}
\newtheorem{proposition}{Proposition}[section]
\newtheorem*{proposition*}{Proposition}
\newtheorem*{corollary*}{Corollary}
\newtheorem{definition}{Definition}[section]
\newtheorem*{proof*}{Proof}
\newtheorem{remark}{Remark}[section]
\begin{document} 
	\setcounter{page}{1}
	
	\title{AHEAD : \textit{Ad-Hoc} Electronic Auction Design}

	\author{
		Joffrey Derchu\footnote{\'Ecole Polytechnique, CMAP; joffrey.derchu@polytechnique.edu},~ Philippe Guillot\footnote{Autorit\'e des March\'es Financiers; p.guillot@amf-france.org},~ Thibaut Mastrolia\footnote{\'Ecole Polytechnique, CMAP; thibaut.mastrolia@polytechnique.edu}~~and Mathieu Rosenbaum\footnote{\'Ecole Polytechnique, CMAP; mathieu.rosenbaum@polytechnique.edu} 
		}

\maketitle

\begin{abstract}
We introduce a new matching design for financial transactions in an electronic market. In this mechanism, called \textit{ad-hoc} electronic auction design (AHEAD), market participants can trade between themselves at a fixed price and trigger an auction when they are no longer satisfied with this fixed price. In this context, we prove that a Nash equilibrium is obtained between market participants. Furthermore, we are able to assess quantitatively the relevance of \textit{ad-hoc} auctions and to compare them with periodic auctions and continuous limit order books. We show that from the investors' viewpoint, the microstructure of the asset is usually significantly improved when using AHEAD.
\end{abstract}

\textbf{Keywords:} Market microstructure, market design, financial regulation, \textit{ad-hoc} auctions, periodic auctions, limit order book, Nash equilibrium.
\section{Introduction}

\subsection{Existing market models: Continuous limit order book and periodic auctions}
The question of a suitable market microstructure enabling an exchange to ensure satisfactory conditions for trading activities of market participants is particularly intricate. The most standard approach, adopted by a large number of exchanges, is the continuous limit order book (CLOB for short). In this setting, market participants can either choose to trade immediately by accepting the price offered by a counterparty in the order book (sending what is called an {\it aggressive} order and thereby reducing the quantity of shares instantly available in the limit order book) or place a {\it passive} order, which waits in the order book to find a counterparty. The recent change in the very nature of market makers, which are nowadays essentially high frequency traders, has triggered a debate on whether CLOBs are the most suitable order matching mechanism, notably in terms of quality of the price formation process. The alternative design which is usually put forward is that of periodic auctions. In this case, transactions occur once the auction terminates. The traded price is the equilibrium price maximising the number of financial instruments traded, determined at the end of the auction period from the imbalance between buy and sell orders accumulated during the duration of the auction. Currently, some auctions are already held at regular intervals in many markets where the main mechanism is a CLOB, typically at the beginning and at the end of the trading day. Moreover, some exchanges organise periodic auctions throughout the day. This is for example the case of BATS-Cboe for European equities.\\

One of the benefits of auctions derives from the fact that they mechanically slow down the market. Doing so, they suppress some obvious flaws due to speed competition of high frequency market makers in a CLOB environment. This is particularly well emphasized in \cite{farmer,budish2} and the influential paper \cite{budish} where a lower bound for auction duration so that speed arbitrages vanish is provided (about 100 ms). In the paper \cite{duzhu}, the authors also consider the issue of determining a suitable time period for the auction duration. To do so, they model the behaviour of microscopic agents who optimise their demand schedules with respect to the available information in the market. They show that the optimal auction duration is linked to the rate of arrival of information.\\

Regarding a suitable market design, each type of market participants has a different view on the question depending on its activity. This is why there is a crucial need for a quantitative analysis enabling us to assess and compare the different mechanisms objectively. This is done in \cite{jusselin2019optimal} where the authors extend the works \cite{fricke2018too} and \cite{garbade}. More precisely, they are able to compare CLOBs and periodic auctions from a price formation process viewpoint using stochastic differential games. They also provide optimal auction durations (a few minutes in practice according to their approach, depending on the asset involved).

\subsection{Going AHEAD}

Auctions and CLOB represent two quite orthogonal approaches in terms of market design. In this paper, we aim to study an hybrid mechanism that we call \textit{ad-hoc} electronic auction design (AHEAD). The idea of \textit{ad-hoc} auctions is to organise a specific type of continuous trading session after each auction. During the continuous session, market participants trade between themselves at a fixed price equal to the last auction's clearing price. Any market participant has the opportunity to end the continuous session when he is no longer satisfied with the price by triggering a new auction. The only constraint imposed by the exchange to market participants for triggering a new auction phase is to commit at least a minimal volume in the auction. In this setting, there can be two reasons to motivate investors for ending the continuous phase. Either they consider the trading price is no longer reasonable or they are not able to trade at this price because of the lack of counterparty from other participants. Our underlying idea for the relevance of this mechanism is that it can provide the best of both worlds, interpolating between CLOB and periodic auctions, by conveying information about a potential price change to all market participants in a timely manner. On the one hand, auction phases enable market participants to source liquidity through a competitive process of price formation. On the other hand, potential local volume disequilibria between the needs of buyers and sellers that do not warrant a price change can be mitigated during the continuous sessions. \\

Note that we focus here on AHEAD implementation on non-fragmented markets, such as small and mid-cap markets (some of these stocks may display a limited fragmentation, however discussions towards the revision of MiFID II in Europe indicate the will of the regulator to impose a unique structure for such assets). In this case, AHEAD could be a decisive model since it improves liquidity aggregation. Furthermore, an auction on an illiquid instrument ending without any transaction would still deliver a change in the clearing price of the instrument.
Direct listings, where building steadily liquidity is the key success factor, represent another situation where AHEAD could prove worthwhile and allow easier access to the financial markets for the small and mid-cap enterprises. Competition between two AHEAD markets could result in something closer to a more stable version of a CLOB: in phases where both venues can trade on a fixed price, there would be situations where one venue would display liquidity at a ``bid" and the other venue at an ``offer" (depending on the chosen make-take fees schedule)\footnote{It would then be critical for the regulator to prevent a ``race to the bottom" between competing venues by setting minimum values for the triggering quantities and the auction durations, as MiFID II did for the tick size.}. In highly fragmented markets, CLOBs and auctions interact by catering to different strategies from market participants. It would be very complex to model such interactions if AHEAD were to be added to the current microstructure, since the market participant mix would probably differ considerably from one venue to another. Such study is left for further research.\\

We consider three agents in our model: two investors, one buyer and one seller, using aggressive orders and one market maker using passive orders. The market maker provides liquidity during both the continuous and auction phases. During the continuous phases, he accepts transactions at the last auction's clearing price provided they are profitable. To assess the profitability of a transaction, the market maker compares the last clearing price and the current efficient price, that is assumed to be observed/built by him continuously\footnote{In a further study, the model could be developed to allow the market maker to manage its inventory by triggering auctions himself and investors to use both passive and aggressive orders.}. Our buyer (resp. seller) investor wishes to buy (resp. sell) a given amount of shares over a given time period. More specifically, we consider that he aims at following a trading intensity target (coming for example from an Almgren-Chriss type algorithm, see \cite{almgren2001optimal}). Thus his goal is to optimise his PnL while staying close to the target. From a mathematical viewpoint, his objective function consists into two terms that he wants to minimise: one measuring his realized trading costs and the other the deviation from the target. To achieve their goal, our investors have access to two controls: the trading rate with which they send their market orders and the triggering times of the auctions. They optimise simultaneously and without communication their strategies. Note that there are of course more than two investors in an actual market. However, since our auction period will be quite short, we expect in practice only a small number of investors to take part in each auction (these investors being probably different from one auction to the other). Note also that, in a live market environment, participants are not restricted to aggressive orders and also compete through passive orders: in an AHEAD market, an aggressive order greater than the liquidity waiting for execution in the order book would become a passive order for the remainder of the order.\\

In this model, we show that the market admits a Nash equilibrium. This implies that \textit{ad-hoc} auctions are a viable design as a trading mechanism. Furthermore, from our theoretical results, we can build a numerical methodology enabling us to compute the optimal strategies and value functions of the investors under various market configurations. This is not only done in the \textit{ad-hoc} auction framework but also under CLOB and periodic auction markets. This allows us to provide a quantitative assessment of the AHEAD market from the investors' viewpoint and to compare it with the CLOB and periodic auction structures.

\subsection{AHEAD contribution}
Our main findings are the following. First AHEAD seems to be systematically preferable than CLOBs from a market taker perspective. This is somehow in line with the results in \cite{budish,jusselin2019optimal} which underline the relevance of auctions compared to CLOBs. Furthermore, based on our computations of the value functions, we conclude that for a large investor, \textit{ad-hoc} auctions are always a suitable design (even compared with periodic auctions), in particular when the other investor is smaller. It enables the large investor to execute part of his orders with the market maker and to launch auctions when he really needs to do so. In addition to that, thanks to the transactions executed with the market maker during the continuous phase, he reduces its volume imbalance with respect to the smaller investor during the auctions phases. For a small investor, strategic considerations play an important role in the comparison between \textit{ad-hoc} and periodic auctions. Essentially, if a small investor is still large enough to be able to trigger auctions without too much relative cost, the \textit{ad-hoc} auction mechanism is beneficial for him. Otherwise, periodic auctions are more attractive from this investor's viewpoint. In practice, in an actual market, the smaller investor could in fact even place passive orders and hence profit from the market impact generated by the larger one. Therefore a very small investor may prefer periodic auctions on instruments with high price viscosity/long queuing time because, in that case, the larger one cannot benefit from the continuous phase to reduce his volume imbalance in comparison to the smaller investor, leading to very favourable auction clearing prices for the latter.\\

The paper is organised as follows. In Section \ref{sec2} we describe the \textit{ad-hoc} auction mechanism and our model. We introduce in Section \ref{sec3} the notion of equilibrium in our framework and provide results about the existence of such equilibrium under various types of assumptions. Numerical experiments and economic insights can be found in Section \ref{sec4}. The proofs are relegated to the Appendix. 

\section{Model}
\label{sec2}
In this section, we introduce our model for a market with \textit{ad-hoc} auctions. We build our mathematical framework and explain how our market participants (the two market takers and the market maker) interact. Then we describe the objectives of those participants in terms of optimisation problems.

\subsection{Framework}
Let $T>0$ be a final horizon time, $h>0$ the auction's duration, $\Omega_c$ the set of continuous functions from $[0,T+h]$ into $\mathbb{R}$, $\Omega_d$ the set of piece-wise constant c\`adl\`ag functions from $[0,T+h]$ into $\mathbb{N}$, and $\Omega = \Omega_c\times(\Omega_d)^2$ with corresponding Borel algebra $\mathcal{F}$. The observable state is the canonical process $(W_t,\Tilde{N}^a_t,\Tilde{N}^b_t)_{ t\in[0,T+h]}$ on the measurable space $(\Omega,\mathcal{F})$ defined for any $t\in[0,T+h] \text{ and }\omega= (w,n^a,n^b)\in\Omega$ by

\begin{equation*}
    W_t(\omega) :=w(t), \Tilde{N}^a_t(\omega) :=n^a(t), \Tilde{N}^b_t(\omega) :=n^b(t),
\end{equation*}
with canonical completed filtration $\mathbb{F}= (\mathcal{F}_t)_{t\in[0,T+h]}= (\mathcal{F}^c_t\otimes(\mathcal{F}^d_t)^{\otimes 2})_{t\in[0,T+h]}$.\\ 

The trading universe is reduced to a single risky asset with observable efficient price $P^*$ given by
\begin{equation*}
    P^*_t:=P^*_0+\sigma W_t\text{, }t\in[0,T+h],
\end{equation*}
with initial price $P^*_0>0$ and constant volatility $\sigma>0$. The probability measure on $\Omega$ will be defined so that $W$ is a Brownian motion. The efficient price is to be understood as a benchmark price that market participants use to measure their trading costs by comparing it with the price they get in their actual transactions, see for example \cite{delattre2013estimating,robert2011new,stoikov2018micro}. The processes $\Tilde{N}^a$ and $\Tilde{N}^b$ will correspond to the quantities of orders sent by our two investors.

\subsection{The market takers}
We consider two investors (market takers) sending aggressive orders only. We call them Player $a$ and Player $b$. Player $a$ only sends buy market orders while Player $b$ only sends sell market orders.
Let $\lambda_->0$ be the minimum intensity of arrival of orders and $\lambda^+>\lambda_-$ the maximum intensity. We equip our filtered space with the probability $\mathbb{P}^W\otimes\mathbb{P}^N$ where $\mathbb{P}^W$ is the Wiener measure and $\mathbb{P}^N$ is the solution to the martingale problem (in the sense of \cite{jacodshiryav})
\begin{equation*}
M_t = (\Tilde{N}^a_t,\Tilde{N}^b_t)^T - t\mathcal{L}_0 \text{ with }\mathcal{L}_0 = (\lambda^0,\lambda^0)^T\text{, }0<\lambda^0<\lambda^+\text{, }t\in[0,T+h]
\end{equation*}
on $((\Omega_d)^2,\mathcal{B}((\Omega_d)^2),((\mathcal{F}^d_t)^{\otimes 2})_{t\in[0,T+h]})$.\\

In our model, Player $a$ and Player $b$ control the intensities of buy and sell orders respectively. The set of admissible controls denoted by $\mathcal{U}$ is defined by all predictable processes with values in $[\lambda_-, \lambda_+]$. For any pair $(\lambda^a,\lambda^b)$ of admissible controls, we associate $\mathbb{P}^{\lambda^a,\lambda^b}$ the measure defined by
\begin{equation*}
    \frac{d\mathbb{P}^{\lambda^a,\lambda^b}}{d\mathbb{P}}
    \bigg|_t= \Psi_t^{\lambda^a,\lambda^b},
\end{equation*}
where $\Psi_t^{\lambda^a,\lambda^b}$ is the Doleans-Dade exponential martingale given by
\begin{equation*}
    \Psi_t^{\lambda^a,\lambda^b} = \exp\Big(\int_0^t \big(\log(\frac{\lambda^a_s}{\lambda^0})d\Tilde{N}^a_s-(\lambda^a_s-\lambda^0)ds+ \log(\frac{\lambda^b_s}{\lambda^0})d\Tilde{N}^b_s-(\lambda^b_s-\lambda^0)ds\big)\Big).
\end{equation*}
Thus, under the measure $\mathbb{P}^{\lambda^a,\lambda^b}$, the processes $(\Tilde{N}^a_s-\int_0^s \lambda^a_u du)_{0\leq s\leq T+h}$, $(\Tilde{N}^b_s-\int_0^s \lambda^b_u du)_{0\leq s\leq T+h}$ are martingales and $(W_s)_{0\leq s\leq T}$ is still a Brownian motion independent of the processes $(\Tilde{N}^a, \Tilde{N}^b)$.
In the following, we denote by $\mathbb{E}^{\lambda^a,\lambda^b}$ the expectation under $\mathbb{P}^{\lambda^a,\lambda^b}$ and we write
$\Psi_{s,t}^{\lambda^a,\lambda^b} =  \Psi_t^{\lambda^a,\lambda^b}/ \Psi_s^{\lambda^a,\lambda^b}$ for $s\leq t$.\\



The market takers can trigger an auction and we focus on analysing the market and the behaviours of the participants until the end of the auction. We do not consider successive auction phases as it would lead to important additional technical difficulties. Furthermore, we may expect that in practice, under AHEAD, the market would be quite regenerative from one phase to the other. We write $\mathcal{T}_{s,t}$ with $0\leq s\leq t\leq T+h$ for the set of stopping times taking values in $[s,t]$ and denote by $\tau^a$ and $\tau^b$ in $\mathcal{T}_{0,T}$ the stopping times chosen by Player $a$ and Player $b$ respectively. An auction starts at time $\tau=\tau^a\wedge\tau^b$, considering that if no player triggers an auction before time $T$, an auction is automatically triggered at time $T$.\\

Let $(\tau,\tilde{\tau})\in \mathcal T_{0,T+h}^2$ be such that $\tau\leq \tilde{\tau}, \; \mathbb P-$a.s. and $\lambda\in \mathcal U$. We denote by $\lambda_{[\tau,\tilde{\tau}]}$ and $\mathcal{U}_{[\tau,\tilde\tau]}$ the restriction of $\lambda$, respectively $\mathcal{U}$, to $[\tau,\tilde{\tau}]$. For any $\lambda\in \mathcal U$ and $\mu\in \mathcal U_{[\tau,T+h]}$, we set $(\lambda\otimes_\tau\mu)_u := \lambda_u\mathbf{1}_{u\leq\tau}+\mu_u\mathbf{1}_{\tau<u},\; u\leq T+h.$  \\

Finally, we introduce a mechanism which forces the market taker who initiates an auction to trade a minimal amount in it. This is obviously because from an exchange or regulator viewpoint, only meaningful auctions are relevant. This means auctions should take place when the price $P$ is no longer satisfactory. Requiring a minimal traded volume tends to make the auction clearing price go against the market participant who has triggered the auction. Consequently, one triggers an auction when really needed. This can also be seen as a constraint or a cost associated with triggering an auction, where the market participant considers this cost is less than the cost of waiting with a passive order placed in the order book. 
Thus we assume that a fixed given number of orders $\hat{n}\in\mathbb{N}$ is automatically recorded by the exchange for a player triggering an auction. In case both players triggers at the same time (which will be unlikely but possible in theory in our discrete setting), we write  $\hat{n}_{ab}$ for this number. We define two $\mathcal{F}_\tau$-measurable random variables, $N^a_+$ and $N^b_+$,  representing the number of orders automatically recorded by the exchange for Player $a$ and Player $b$ when the auction starts, that is
\begin{equation*}
\begin{split}
    N^a_+=& \hat{n}\mathbf{1}_{\tau^a<\tau^b,\tau^a<T}+\hat{n}_{ab}\mathbf{1}_{\tau^a=\tau^b<T}\\
    N^b_+=& \hat{n}\mathbf{1}_{\tau^b<\tau^a,\tau^b<T}+\hat{n}_{ab}\mathbf{1}_{\tau^a=\tau^b<T}.
\end{split}
\end{equation*}
\begin{remark}
In practice, in a continuous-time market, the two players would of course never trigger an auction at the same time as the matching engine needs anyway to process one message first. It is actually a straightforward extension to consider the case where for Player $a$, $\hat{n}_{ab}$ is replaced by a random variable taking values $0$ or $\hat{n}$ with probability $0.5$ and for Player $b$ by $\hat{n}$ minus this variable. We will actually consider such situation in the numerical results of Section \ref{sec4} but keep $\hat{n}_{ab}$ for simplicity for the theoretical developments. In addition, note that we can very well think of a situation where the exchange would let participants trigger auctions only at some (frequent) specific times. 
\end{remark}

\subsection{The market makers}
\subsubsection{Continuous trading phase}
Let $P\in\mathbb{R}$ be a price fixed at $t=0$. During the continuous phase, at time $t$, the market maker accepts an order from Player $a$ (buy order) if $P>P^*_t$. In this case, a unit quantity is traded at price $P$. Symmetrically, he accepts an order from Player $b$ (sell order) if $P<P^*_t$ and then a unit quantity is traded at price $P$. In other words, at time $t$ during the continuous trading phase, Player $a$ pays $P\mathbf{1}_{P>P^*_t} d\Tilde{N}^a_t$ to buy $\mathbf{1}_{P>P^*_t}d\Tilde{N}^a_t$, while Player $b$ earns $P\mathbf{1}_{P<P^*_t} d\Tilde{N}^b_t$ from the selling of $\mathbf{1}_{P<P^*_t}d\Tilde{N}^b_t$.\\

We introduce the processes $N^a$ and $N^b$ describing the number of orders sent by Player $a$ and Player $b$ which are not rejected by the market maker. They are defined by
$$
    N^a_t = \int_0^t (\mathbf{1}_{s\leq\tau}\mathbf{1}_{P>P^*_s}+\mathbf{1}_{s>\tau})d\Tilde{N}^a_s,~~
    N^b_t = \int_0^t (\mathbf{1}_{s\leq\tau}\mathbf{1}_{P<P^*_s}+\mathbf{1}_{s>\tau})d\Tilde{N}^b_s.
$$

\subsubsection{Auction}
During the auction, the market maker is willing to buy or sell a given quantity at a certain price. We consider that he provides a mid-price, that we naturally take equal to $P^*_{\tau+h}$ and a slope $K\in\mathbb{R}$, meaning that he offers a volume $K(p-P^*_{\tau+h})$ at time ${\tau+h}$ when the auction price is $p\in\mathbb{R}$. Player $a$ sends $N^a_{\tau+h}-N^a_\tau+N^a_+$ buy market orders during the auction and Player $b$ sends $N^b_{\tau+h}-N^b_\tau+N^b_+$ sell market orders. So and similarly to \cite{jusselin2019optimal}, the auction clearing price $P^{auc}$ fixed at the clearing time $\tau+h$ is solution of the equation which equals supply and demand:
\begin{equation*}
    0=-K(P^{auc}-P^*_{\tau+h})+(N^a_{\tau+h}-N^a_\tau+N^a_+)-(N^b_{\tau+h}-N^b_\tau+N^b_+)
\end{equation*}
\textit{i.e.}
\begin{equation}\label{eq:pauc}
    P^{auc}=P^*_{\tau+h}+\frac{(N^a_{\tau+h}-N^a_\tau+N^a_+)-(N^b_{\tau+h}-N^b_\tau+N^b_+)}{K}.
\end{equation}
Thus, at the end of the auction, Player $a$ buys $N^a_{\tau+h}-N^a_\tau+N^a_+$ units at price $P^{auc}$ and Player $b$ sells $N^b_{\tau+h}-N^b_\tau+N^b_+$ units at price $P^{auc}$.

\begin{remark}\label{rem::mm_price}
One could think the market maker should rather take a mid-price equal to $\pm\infty$ if $N^a_{\tau+h}-N^a_\tau+N^a_+\lessgtr N^b_{\tau+h}-N^b_\tau+N^b_+$ to optimise his PnL. However, in a real market, market makers send limit orders over a bounded price interval and competition between them prevents them from displaying irrealistic prices. Also, there is in practice uncertainty on the traded volumes (notably because auctions durations are slightly randomised). High uncertainty would lead to a high value of $K$ to compensate the lack of information on $(N^a_{\tau+h}-N^a_\tau+N^a_+)-(N^b_{\tau+h}-N^b_\tau+N^b_+)$.
\end{remark}

\subsection{Objectives}
Both market takers wish to optimise their PnL per unit of time. We suppose that they compare the prices they get to the efficient price $P^*$ seen as a benchmark. Moreover, they aim at trading a certain number of assets per unit of time (respectively $v^a$ and $v^b$ units per second) and have to pay penalties if they do not reach those targets. We now give an explicit decomposition of their trading costs per unit of time.

\subsubsection{Costs during the continuous trading phase}
As explained above, we assume that our two players are penalised during the continuous market phase if they do not trade the right volumes. More precisely, during the continuous market phase, we consider the costs of Player $a$ and Player $b$ are respectively given for any $t\in[0,T+h]$ by
\begin{equation}\label{eq::costs}
\begin{split}
    L^a_t &=q\int_0^{t\wedge \tau} (v^a s-N^a_s)^2ds  +\int_0^{t\wedge\tau} (P-P^*_t)\mathbf 1_{P>P^*_t} dN^a_t\\
    L^b_t &= q\int_0^{t\wedge \tau} (v^b s-N^b_s)^2ds-\int_0^{t\wedge\tau} (P-P^*_t)\mathbf 1_{P<P^*_t} dN^b_t.
\end{split}
\end{equation}

The first term of these equations, where $q>0$, represents the penalty if the number of trades does not match the targeted value and the second one is the cost resulting from trading activities compared to the benchmark price $P^*$.

\begin{remark}
We could also compute the actual trading costs instead of the costs with respect to the efficient price, replacing $P-P^*_t$ by $P$ in \eqref{eq::costs}.
\end{remark}

\subsubsection{Costs during the auction}
We now turn to the costs Player $a$ and Player $b$ are subjected to during the auction. We assume again that both players are penalised during the auction if they do not trade at the rate $v^a$ and $v^b$ respectively. The penalty here is also quadratic with parameter $q>0$. Thus the penalties of Player $a$ and Player $b$  during the auction are respectively given by
\begin{equation*}
    \mathcal{C}^a_{auc} = qh(v^a(\tau+h)-N^a_{\tau+h}-N^a_+)^2
\text{ and  }
    \mathcal{C}^b_{auc} = qh(v^b(\tau+h)-N^b_{\tau+h}-N^b_+)^2.
\end{equation*}
As in \cite{jusselin2019optimal}, the cost resulting from trading activities of market taker $a$  is given by $N^a_{\tau,\tau+h}(P^{auc}-P^*_{\tau+h})$ while the gain of $b$ resulting from his trades is $N^b_{\tau,\tau+h}(P^{auc}-P^*_{\tau+h})$, where $N^a_{\tau,\tau+h} = N^a_{\tau+h}-N^a_{\tau}+N^a_+$ and $N^b_{\tau,\tau+h} = N^b_{\tau+h}-N^b_{\tau}+N^b_+$.\\

Putting together all the costs/gains of our market takers and using Equation \eqref{eq:pauc}, we get that the total cost of Player $a$ per unit of time is given by
\begin{equation*}
     \frac{L^a_\tau+\mathcal{C}^a_{auc}+N^a_{\tau,\tau+h}(P^{auc}-P^*_{\tau+h})}{\tau+h} = \frac{L^a_\tau+\mathcal{C}^a_{auc}+\frac{N^a_{\tau,\tau+h}\Delta N_{\tau,\tau+h}}{K}}{\tau+h}
\end{equation*}
while the gain of Player $b$ is
\begin{equation*}
    \frac{-L^b_\tau-\mathcal{C}^b_{auc}+N^b_{\tau,\tau+h}(P^{auc}-P^*_{\tau+h})}{\tau+h} = \frac{-L^b_\tau-\mathcal{C}^b_{auc}+\frac{N^b_{\tau,\tau+h}\Delta N_{\tau,\tau+h}}{K}}{\tau+h},
\end{equation*}
where we set $\Delta N_{\tau,\tau+h} = N^a_{\tau,\tau+h}- N^b_{\tau,\tau+h}.$
For $x_0\in\mathbb R_+\times\mathbb N\times\mathbb N\times\mathbb R\times\mathbb R$ and a pair of controls $((\tau^{a},\lambda^{a}),(\tau^{b},\lambda^{b}))$,  let 
\begin{equation}\label{obj:playera}
    J^a(x_0,(\tau^a,\lambda^a),(\tau^b,\lambda^b)) = \mathbb{E}^{\lambda^a,\lambda^b}\big[\frac{L^a_\tau+\mathcal{C}^a_{auc}+\frac{N^a_{\tau,\tau+h}\Delta N_{\tau,\tau+h}}{K}}{\tau+h}\big|(P^*_0,N^a_0,N^b_0,L^a_0,L^b_0)=x_0\big]
\end{equation}
and 
\begin{equation}\label{obj:playerb}
 J^b(x_0,(\tau^a,\lambda^a),(\tau^b,\lambda^b)) = \mathbb{E}^{\lambda^a,\lambda^b}\big[\frac{-L^b_\tau-\mathcal{C}^b_{auc}+\frac{N^b_{\tau,\tau+h}\Delta N_{\tau,\tau+h}}{K}}{\tau+h}\big|(P^*_0,N^a_0,N^b_0,L^a_0,L^b_0)=x_0\big].
\end{equation}

Since Player $a$ aims at minimising his cost, his goal is to minimise over $(\tau^a,\lambda^a)$ the objective function
\begin{equation}
    J^a(x_0,(\tau^a,\lambda^a),(\tau^b,\lambda^b))
    \end{equation}
where $(\tau^b,\lambda^b)$ are controlled by Player $b$. Symmetrically, since Player $b$ aims at maximising his gain, his goal is to maximise over $(\tau^b,\lambda^b)$ the objective function
\begin{equation}
 J^b(x_0,(\tau^a,\lambda^a),(\tau^b,\lambda^b))
\end{equation}
where $(\tau^a,\lambda^a)$ are controlled by Player $a$.

\begin{remark}
Using Lemma C.1 from \cite{jusselin2019optimal} and the fact that $\tau\leq T$ and $h>0$, we obtain that \eqref{obj:playera} and \eqref{obj:playerb} are well-defined and finite.
\end{remark}

\section{Nash equilibrium for pure and mixed stopping games}
\label{sec3}
In this section, we investigate the existence of an equilibrium in the optimisation problems of the market takers in the sense of Nash equilibrium adapted to our framework. We start by defining the notion of open-loop Nash equilibrium in the sense of \cite{carmona}. Then 
we show that restraining the set of stopping times to those taking values in a finite set allows us to build an equilibrium in the simple case $\Tilde{n}=\Tilde{n}^{ab}=0$ and in the general case by considering generalized stopping times.

\subsection{Open-loop Nash equilibrium}\label{section:defsub}

First we define the notion of open-loop Nash equilibrium.
\begin{definition}[Open-Loop Nash Equilibrum (OLNE)]\label{def:nash1}
Given $x_0\in\mathbb R_+\times\mathbb N\times\mathbb N\times\mathbb R\times\mathbb R$, we say that the pair of controls $((\tau^{a,*},\lambda^{a,*}),(\tau^{b,*},\lambda^{b,*}))$ is an open-loop Nash equilibrum of the game (OLNE for short) if
\begin{equation*}
\begin{cases}
    J^a(x_0,(\tau^{a,*},\lambda^{a,*}),(\tau^{b,*},\lambda^{b,*}))&\leq J^a(x_0,(\tau^a,\lambda^a),(\tau^{b,*},\lambda^{b,*})) \hspace{5mm}\forall (\tau^a,\lambda^a)\in\mathcal{T}_{0,T}\times\mathcal{U}\\
    J^b(x_0,(\tau^{a,*},\lambda^{a,*}),(\tau^{b,*},\lambda^{b,*}))&\geq J^b(x_0,(\tau^{a,*},\lambda^{a,*}),(\tau^b,\lambda^b))\hspace{5mm}\forall (\tau^b,\lambda^b)\in\mathcal{T}_{0,T}\times\mathcal{U}.
\end{cases}
\end{equation*}
\end{definition}

We now define a Nash equilibrium for the auction phase.
\begin{definition}[Open-loop Nash equilibrium for the $\tau-$sub-game]
Given $x\in\mathbb R_+\times\mathbb N\times\mathbb N\times\mathbb R\times\mathbb R$ and $\tau\in\mathcal{T}_{0,T}$, we say that the pair of controls $(\mu^{a,*},\mu^{b,*})$ is an open-loop Nash equilibrium for the $\tau$-sub-game if
\begin{equation*}
\begin{cases}
    \mathbb{E}^{\mu^{a,*},\mu^{b,*}}_\tau\Big[\mathcal{C}^a_{auc}+\frac{N^a_{\tau,\tau+h}\Delta N_{\tau,\tau+h}}{K}\Big]=\underset{\mu^a\in\mathcal{U}_{[\tau,T+h]}}{\inf}\; \mathbb{E}^{\mu^{a},\mu^{b,*}}_\tau\Big[\mathcal{C}^a_{auc}+\frac{N^a_{\tau,\tau+h}\Delta N_{\tau,\tau+h}}{K}\Big]\\
    \mathbb{E}^{\mu^{a,*},\mu^{b,*}}_\tau\Big[-\mathcal{C}^b_{auc}+\frac{N^b_{\tau,\tau+h}\Delta N_{\tau,\tau+h}}{K}\Big]=\underset{\mu^b\in\mathcal{U}_{[\tau,T+h]}}{\sup}\; \mathbb{E}^{\mu^{a,*},\mu^{b}}_\tau\Big[-\mathcal{C}^b_{auc}+\frac{N^b_{\tau,\tau+h}\Delta N_{\tau,\tau+h}}{K}\Big],
\end{cases}
\end{equation*}where $\mathbb E_\tau[\cdot]:= \mathbb E[\cdot| (P^*_\tau,N^a_\tau,N^b_{\tau}, L^a_\tau,L^b_\tau)=x]$.
\end{definition}

If an open-loop Nash equilibrium exists for the $\tau$-sub-game, we write
\begin{equation}\label{sub-game}
 \begin{cases}
    \xi^a_\tau = \mathbb{E}^{\mu^{a,*},\mu^{b,*}}_\tau\Big[\mathcal{C}^a_{auc}+\frac{N^a_{\tau,\tau+h}\Delta N_{\tau,\tau+h}}{K}\Big]\\
   \xi^b_\tau = \mathbb{E}^{\mu^{a,*},\mu^{b,*}}_\tau\Big[-\mathcal{C}^b_{auc}+\frac{N^b_{\tau,\tau+h}\Delta N_{\tau,\tau+h}}{K}\Big]
    \end{cases}
\end{equation}
for the payoff of the sub-game (where a given open-loop Nash equilibrium for the $\tau$-sub-game is chosen).\\ 

Similarly to the results of \cite{hamadne2014bangbang,jusselin2019optimal}, we know that there exists an open-loop Nash equilibrium for the $\tau-$sub-game \eqref{sub-game}. Thanks to a dynamic programming argument, we can show that we can start by finding optimal controls for the sub-game starting at $\tau$ and that an OLNE for the game provides an open-loop Nash equilibrium for the sub-game corresponding to the auction phase. This is stated in the following proposition.

\begin{proposition}\label{prop:DPP}
Let $x_0\in\mathbb R_+\times\mathbb N\times\mathbb N\times\mathbb R\times\mathbb R$. For any $\tau\in\mathcal{T}_{0,T}$, there exists at least one open-loop Nash equilibrium $(\mu^{a,*},\mu^{b,*})$ to the $\tau$-sub-game. Moreover, we can find two deterministic functions with polynomial growth $g^a$, $g^b$ such that $\xi^a_\tau = g^a(N^a_{\tau}-v^a\tau,N^b_{\tau}-v^b\tau,N^a_+,N^b_+)$ and $\xi^b_\tau = g^b(N^a_{\tau}-v^a\tau,N^b_{\tau}-v^b\tau,N^a_+,N^b_+)$.\\

Finally, if $((\tau^{a,*},\lambda^{a,*}),(\tau^{b,*},\lambda^{b,*}))$ is an OLNE for the general game, then the following dynamic programming principle holds

\begin{equation}\label{DPP}
      \begin{cases}  J^a(x_0,(\tau^{a,*},\lambda^{a,*}),(\tau^{b,*},\lambda^{b,*})) = \underset{\tau^a\in\mathcal{T}_{0,T},\lambda^a\in\mathcal{U}_{[0,\tau]}}{\inf}\; \mathbb{E}^{\lambda^a,\lambda^{b,*}}\big[\frac{L^a_\tau+\xi^a_\tau}{\tau+h}\big] \\
       J^b(x_0,(\tau^{a,*},\lambda^{a,*}),(\tau^{b,*},\lambda^{b,*})) = \underset{\tau^b\in\mathcal{T}_{0,T},\lambda^b\in\mathcal{U}_{[0,\tau]}}{\sup}\; \mathbb{E}^{\lambda^{a,*},\lambda^{b}}\big[\frac{-L^b_\tau+\xi^b_\tau}{\tau+h}\big] 
      \end{cases}
\end{equation}
where $\tau=\tau^{a,*}\wedge\tau^{b,*}$ and $(\lambda^{a,*}_{[\tau,T+h]},\lambda^{b,*}_{[\tau,T+h]})$ is an open-loop Nash equilibrium for the $\tau$-sub-game \eqref{sub-game} (with payoffs $\xi^a_\tau$ and $\xi^b_\tau$), recalling that $(\lambda^{a,*}_{[\tau,T+h]},\lambda^{b,*}_{[\tau,T+h]})$ is the restriction of $(\lambda^{a,*},\lambda^{b,*})$ to $[\tau,T+h]$.
\end{proposition}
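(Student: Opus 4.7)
I begin with the first two claims, which concern only the $\tau$-sub-game. For existence of an OLNE, I would view the sub-game as a two-player stochastic differential game on $[\tau, \tau+h]$ with intensities valued in the compact interval $[\lambda_-, \lambda_+]$ and with payoffs that are quadratic in the terminal counts $(N^a_{\tau,\tau+h}, N^b_{\tau,\tau+h})$ plus a bilinear interaction term. Following the bang-bang arguments of \cite{hamadne2014bangbang} and the auction-phase analysis of \cite{jusselin2019optimal}, each player's best response reduces, via a first-order condition, to a bang-bang process parametrised by the opponent's expected terminal increment. The search for a Nash equilibrium thus collapses to a two-dimensional fixed-point problem on these expected increments, which admits a solution by compactness and continuity (and is in fact explicitly solvable). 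For the deterministic representation, the key observation is translation invariance: writing $Y^a_s := N^a_s - v^a s$ and $Y^b_s := N^b_s - v^b s$, both $\mathcal{C}^a_{auc}$ and the bilinear term $N^a_{\tau,\tau+h}\Delta N_{\tau,\tau+h}/K$ are polynomials in $(Y^a_\tau, Y^b_\tau, Y^a_{\tau+h}-Y^a_\tau, Y^b_{\tau+h}-Y^b_\tau, N^a_+, N^b_+)$, while under $\mathbb{P}^{\mu^a,\mu^b}$ the conditional law of the increments $(Y^a_{\tau+h}-Y^a_\tau, Y^b_{\tau+h}-Y^b_\tau)$ given $\mathcal{F}_\tau$ depends only on the controls and on $h$, not on $\tau$. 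Hence $\xi^a_\tau$ is a deterministic function $g^a$ of $(Y^a_\tau, Y^b_\tau, N^a_+, N^b_+)$, and polynomial (indeed quadratic) growth follows from the boundedness of the intensities combined with elementary Poisson moment bounds.

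\textbf{Dynamic programming principle.} Set $\tau^* := \tau^{a,*} \wedge \tau^{b,*}$. Conditioning inside \eqref{obj:playera} on $\mathcal{F}_{\tau^*}$ and using that $(\lambda^{a,*}_{[\tau^*,T+h]}, \lambda^{b,*}_{[\tau^*,T+h]})$ is assumed to be an OLNE of the $\tau^*$-sub-game, the tower property delivers
\begin{equation*}
J^a(x_0,(\tau^{a,*},\lambda^{a,*}),(\tau^{b,*},\lambda^{b,*})) = \mathbb{E}^{\lambda^{a,*},\lambda^{b,*}}\Bigl[\tfrac{L^a_{\tau^*} + \xi^a_{\tau^*}}{\tau^*+h}\Bigr],
\end{equation*}
which is an upper bound on the infimum on the right-hand side of \eqref{DPP} (realised by the candidate $(\tau^a, \lambda^a_{[0,\tau]}) = (\tau^{a,*}, \lambda^{a,*}_{[0,\tau^*]})$). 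For the reverse inequality, I pick any admissible $(\tau^a, \lambda^a_{[0,\tau]})$ with $\tau := \tau^a \wedge \tau^{b,*}$ and extend $\lambda^a_{[0,\tau]}$ to a full $\lambda^a \in \mathcal{U}$ by concatenating with an OLNE intensity $\tilde{\mu}^{a,*}$ of the $\tau$-sub-game (whose existence is ensured by the first part); the same conditioning argument then gives $J^a(x_0,(\tau^a,\lambda^a),(\tau^{b,*},\lambda^{b,*})) = \mathbb{E}^{\lambda^a,\lambda^{b,*}}[(L^a_\tau + \xi^a_\tau)/(\tau+h)]$. The OLNE property of $(\tau^{a,*}, \lambda^{a,*})$ in the general game bounds the left-hand side below by $J^a(x_0,(\tau^{a,*},\lambda^{a,*}),(\tau^{b,*},\lambda^{b,*}))$, and taking the infimum over all such deviations yields the desired equality. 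The statement for Player $b$ follows from the same argument with the appropriate sign changes.

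\textbf{Main obstacle.} The most delicate point, in my view, is the concatenation step in the converse direction of the dynamic programming argument. The OLNE intensity $\tilde{\mu}^{a,*}$ for the $\tau$-sub-game depends on the random state at the $\mathcal{F}_\tau$-stopping time $\tau$ and is a priori non-unique, so a measurable selection is needed to guarantee that $\lambda^a_{[0,\tau]} \otimes_\tau \tilde{\mu}^{a,*}$ is a genuine element of $\mathcal{U}$. This is precisely where Part 2 is essential: because $\xi^a_\tau$ factors through the deterministic function $g^a$ of a finite-dimensional state, the measurability issue reduces to selecting a Borel-measurable OLNE map on a finite-dimensional state space, which can be carried out via a standard measurable selection theorem (e.g.\ Kuratowski--Ryll-Nardzewski).
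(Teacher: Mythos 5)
Your argument for the dynamic programming principle follows essentially the same route as the paper: condition on $\mathcal{F}_\tau$, use the tower property and the multiplicative (Bayes/Girsanov) structure of $\Psi^{\lambda^a,\lambda^b}$ to reduce the inner expectation to a sub-game quantity, and obtain the converse inequality by concatenating an arbitrary pre-auction control with a post-auction control via $\otimes_\tau$. The one technical divergence is in that converse step: you concatenate with an exact sub-game equilibrium intensity and invoke a measurable selection theorem to make the concatenation admissible, whereas the paper avoids the selection issue entirely by taking a minimising sequence $(\mu^a_n)_n$ with $\mathbb{E}^{\mu^a_n,\lambda^{b,*}_{[\tau,T+h]}}_\tau[\chi^a]\searrow\xi^a_\tau$ (Lemma A.3 of \cite{euch2018optimal} extended to stopping times) and concluding by monotone convergence. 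Both are viable; the sequence argument is cheaper because it never needs the essential infimum to be attained by a single admissible control, while your route leans on the Part-1 existence result plus Kuratowski--Ryll-Nardzewski. Note also that for your construction to produce exactly $\xi^a_\tau$ you must concatenate with a best response of Player $a$ to $\lambda^{b,*}_{[\tau,T+h]}$ specifically (e.g.\ $\lambda^{a,*}_{[\tau,T+h]}$ itself), not with the $a$-component of an arbitrary sub-game OLNE, since uniqueness of the sub-game equilibrium is not known. Your treatment of the first two claims (bang-bang best responses, fixed point, translation invariance in $N^i_s-v^is$) is consistent with the paper, which simply defers to \cite{hamadne2014bangbang} and \cite{jusselin2019optimal}.

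There is, however, one genuine gap. The proposition asserts, as a conclusion, that the restriction $(\lambda^{a,*}_{[\tau,T+h]},\lambda^{b,*}_{[\tau,T+h]})$ of any OLNE of the general game is an open-loop Nash equilibrium of the $\tau$-sub-game \eqref{sub-game}; you instead take this as an assumption ("is assumed to be an OLNE of the $\tau^*$-sub-game") in your first direction, and never return to prove it. The paper devotes the last third of its proof to exactly this point: assuming some $\mu^a\neq\lambda^{a,*}_{[\tau,T+h]}$ does strictly better against $\lambda^{b,*}_{[\tau,T+h]}$ in the sub-game, it propagates the strict inequality through the (already established) dynamic programming identity to obtain $J^a<J^a$, a contradiction. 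Your first inequality can in fact be run without the full equilibrium assumption --- it only needs $\mathbb{E}^{\lambda^a,\lambda^{b,*}}_\tau[\chi^a]\geq\xi^a_\tau$, i.e.\ that $\xi^a_\tau$ is the essential infimum against $\lambda^{b,*}_{[\tau,T+h]}$ --- but the assertion that the restriction actually attains this infimum (and symmetrically for Player $b$) is part of what must be shown, and the contradiction argument should be added to complete the proof.
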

\begin{proof}
See Appendix \ref{app:proofpropsub}.
\end{proof}
It will be useful to consider the functions defined on $[0,T]\times\mathbb{N}\times\mathbb{N}$ and for $l=$a or $l=$b associated to the value of the sub-game for Player $l$ when
\begin{itemize}
\item he initiates the auction alone:
    $$g^{\text{first}}_l(s,n_a,n_b) = g^l(n_a-v^as,n_b-v^bs,\hat{n}\mathbf 1_{l=a},\hat{n}\mathbf 1_{l=b}),$$
\item he does not initiate the auction:
$$g^{\text{second}}_l(s,n_a,n_b)= g^l(n_a-v^as,n_b-v^bs,\hat{n}\mathbf 1_{l=b},\hat{n}\mathbf 1_{l=a}),$$
\item he initiates it at the same time as the other player:
$$g^{\text{sim}}_l(s,n_a,n_b)= g^l(n_a-v^as,n_b-v^bs,\hat{n}_{ab},\hat{n}_{ab}),$$
\item the auction starts at $T$:
$$g^{\text{T}}_l(s,n_a,n_b)= g^l(n_a-v^as,n_b-v^bs,0,0).$$
\end{itemize}

Note that from Proposition \ref{prop:DPP}, we know that these functions have polynomial growth.

\begin{remark}
To build an OLNE for the general game, we can start by building an equilibrium for the sub-game \eqref{sub-game} during the auction and then look for a solution to the problem \eqref{DPP}.
\end{remark}

\begin{remark}
The uniqueness of an open-loop Nash equilibrium for the $\tau-$ sub-game \eqref{sub-game} played during the auction is known to be a very intricate issue, see \cite{hamadne2014bangbang,jusselin2019optimal}. However, numerical experiments seem to indicate that there is only one Nash equilibrium for the sub-game for each value of $(P^*_\tau,N^a_{\tau}-v^a\tau,N^b_{\tau}-v^b\tau,N^a_+,N^b_+)$.
\end{remark}

Extending the results of \cite{basei2016nonzerosum} and \cite{basei2019nonzerosum} to include jump processes and expectations given by non-trivial risk measures, we can prove that the existence of an OLNE can be reduced to solving a system of fully coupled integro-partial PDEs. We refer to Appendix \ref{app:verif} for more details on it. However, we do not expect to obtain the existence of an OLNE in this case in a general setting. Nevertheless, as we will see below, assuming that the players can choose their stopping time only in a set of discrete times allows us to derive the existence of a Nash equilibrium in this slightly simplified setting.\\

From now on, we focus on stopping times with values in a discrete subset of $[0,T]$.

\subsection{Discretised stopping games}\label{sec::discrete}
We look for an OLNE in the case where the stopping times can only take discrete values. Set $\delta\in\mathbb{R}^+$ such that $\frac{T}{\delta}\in\mathbb{N}$. For $k=0,...,\frac{T}{\delta}$, we consider $\mathcal{T}^d_{k\delta,T}$ the set of stopping times with values in the set $\{k\delta, (k+1)\delta,...,T\}$ almost surely. Note that $\mathcal{T}^d_{(k+1)\delta,T}$ is included in $\mathcal{T}^d_{k\delta,T}$.

\begin{remark}
The following results can be easily extended to the case where the stopping times take values in any finite discrete set.
\end{remark}

For any $l\in\llbracket0,T/\delta-1\rrbracket$ and $(\lambda_k)_{k\in\llbracket l,T/\delta-1\rrbracket}\in\mathcal{U}_{[l\delta,(l+1)\delta]}\times...\times\mathcal{U}_{[T-\delta,T]}$, we set
\begin{equation*}
    \bigotimes_{k\in\llbracket l,T/\delta-1\rrbracket}\lambda_k:=\mathbf{1}_{l\delta\leq t\leq (l+1)\delta}\lambda_l+\sum_{k\in\llbracket l+1,T/\delta-1\rrbracket}\mathbf{1}_{k\delta<t\leq (k+1)\delta}\lambda_k.
\end{equation*}

Following \cite{hamadne2014bangbang} and \cite{jusselin2019optimal}, we can show that there exists a Nash equilibrium between two fixed discrete times as formalised in the next lemma.
\begin{lem}\label{lem::disc_int}
Let $k\in\llbracket 1,T/\delta\rrbracket$, $\hat g^a_k$ and $\hat g^b_k$ be two measurable functions defined on $[0,T+h]\times\mathbb{R}\times\mathbb{R}^2\times\mathbb{N}^2$ to $\mathbb{R}$, with polynomial growth. Then, their exists $(\lambda^a_{k-1},\lambda^b_{k-1})\in\mathcal{U}_{[(k-1)\delta,k\delta]}^2$ such that
\begin{equation*}
\begin{cases}
    \mathbb{E}^{\lambda^a_{k-1},\lambda^b_{k-1}}_{(k-1)\delta}\big[\hat g^a_k(k\delta,P^*_{k\delta},L^a_{k\delta},L^b_{k\delta},N^a_{k\delta},N^b_{k\delta})\big] =\;& \underset{\lambda^a\in\mathcal{U}_{[(k-1)\delta,k\delta]}}{\einf}\;\mathbb{E}^{\lambda^a,\lambda^b_{k-1}}_{(k-1)\delta}\big[\hat g^a_k(k\delta,P^*_{k\delta},L^a_{k\delta},L^b_{k\delta},N^a_{k\delta},N^b_{k\delta})\big]\\
    \mathbb{E}^{\lambda^a_{k-1},\lambda^b_{k-1}}_{(k-1)\delta}\big[\hat g^b_k(k\delta,P^*_{k\delta},L^a_{k\delta},L^b_{k\delta},N^a_{k\delta},N^b_{k\delta})\big] =\;& \underset{\lambda^b\in\mathcal{U}_{[(k-1)\delta,k\delta]}}{\esup}\;\mathbb{E}^{\lambda^a_{k-1},\lambda^b}_{(k-1)\delta}\big[\hat g^b_k(k\delta,P^*_{k\delta},L^a_{k\delta},L^b_{k\delta},N^a_{k\delta},N^b_{k\delta})\big].
\end{cases}
\end{equation*}
\end{lem}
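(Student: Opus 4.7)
The idea is to reduce the one-period game on $[(k-1)\delta,k\delta]$ with terminal payoffs $\hat g^a_k,\hat g^b_k$ at time $k\delta$ to a coupled system of BSDEs with jumps, following the bang-bang methodology of \cite{hamadne2014bangbang,jusselin2019optimal}. The structural feature we exploit is that $\Psi^{\lambda^a,\lambda^b}$ factorises as the product of two independent Doleans--Dade exponentials, one depending only on $(\lambda^a,\Tilde{N}^a)$ and the other only on $(\lambda^b,\Tilde{N}^b)$. Consequently Player $a$'s control affects only the intensity of $\Tilde{N}^a$ (and thereby $N^a,L^a$), and symmetrically for Player $b$: the two players act on disjoint noise sources, which decouples each player's Hamiltonian in their own control.

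For each $l\in\{a,b\}$ I would introduce, under the reference measure $\mathbb{P}$, the BSDE with jumps
\begin{equation*}
Y^l_t = \hat g^l_k + \int_t^{k\delta}\bigl[(\lambda^{a,*}_s-\lambda^0)U^{l,a}_s+(\lambda^{b,*}_s-\lambda^0)U^{l,b}_s\bigr]ds - \int_t^{k\delta} Z^l_s\, dW_s - \int_t^{k\delta}\bigl(U^{l,a}_s\, d\Tilde{M}^a_s+U^{l,b}_s\, d\Tilde{M}^b_s\bigr),
\end{equation*}
where $\hat g^l_k$ is evaluated at $(k\delta,P^*_{k\delta},L^a_{k\delta},L^b_{k\delta},N^a_{k\delta},N^b_{k\delta})$ and $\Tilde{M}^a,\Tilde{M}^b$ denote the $\mathbb{P}$-compensated martingales of $\Tilde{N}^a,\Tilde{N}^b$. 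The driver is chosen so that a Girsanov computation under $\mathbb{P}^{\lambda^{a,*},\lambda^{b,*}}$ yields $Y^l_{(k-1)\delta}=\mathbb{E}^{\lambda^{a,*},\lambda^{b,*}}_{(k-1)\delta}[\hat g^l_k]$. The candidate feedback is the bang-bang selector $\lambda^{a,*}_s=\lambda^+\mathbf{1}_{U^{a,a}_s<0}+\lambda_-\mathbf{1}_{U^{a,a}_s\ge 0}$, which enforces $(\lambda-\lambda^{a,*}_s)U^{a,a}_s\ge 0$ for every $\lambda\in[\lambda_-,\lambda^+]$; the symmetric rule on $U^{b,b}_s$ gives Player $b$'s maximiser.

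Third, I would establish existence of a solution to the coupled system $(Y^l,Z^l,U^{l,a},U^{l,b})_{l\in\{a,b\}}$. The key point is that Player $l$'s feedback depends only on the diagonal component $U^{l,l}$ of their own BSDE, not directly on the opponent's control beyond the joint state. For fixed $\lambda^b$, Player $a$'s BSDE has a bounded Lipschitz driver (since $|\lambda-\lambda^0|\le\lambda^+$) and is well-posed in $L^2$ by the jump BSDE theory of Tang--Li and Barles--Buckdahn--Pardoux. A Picard iteration on $\mathcal{U}^2_{[(k-1)\delta,k\delta]}$ equipped with an $L^2$ topology yields a fixed point, possibly after subdividing $[(k-1)\delta,k\delta]$ so that the contraction constant is controlled by the step length; polynomial growth of $\hat g^a_k,\hat g^b_k$ provides the moment bounds required to iterate.

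Finally, a standard verification argument closes the proof. For any admissible deviation $\lambda^a$ of Player $a$, taking $\mathbb{E}^{\lambda^a,\lambda^{b,*}}_{(k-1)\delta}[\,\cdot\,]$ on both sides of the BSDE for $Y^a$ and cancelling the $\lambda^{b,*}$ terms yields
\begin{equation*}
\mathbb{E}^{\lambda^a,\lambda^{b,*}}_{(k-1)\delta}\bigl[\hat g^a_k\bigr] - Y^a_{(k-1)\delta} = \mathbb{E}^{\lambda^a,\lambda^{b,*}}_{(k-1)\delta}\Bigl[\int_{(k-1)\delta}^{k\delta}(\lambda^a_s-\lambda^{a,*}_s)U^{a,a}_s\, ds\Bigr]\ge 0,
\end{equation*}
by the sign rule defining $\lambda^{a,*}$, with equality when $\lambda^a=\lambda^{a,*}$; the symmetric inequality gives optimality of $\lambda^{b,*}$ for Player $b$. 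The main obstacle is the fixed-point step: what makes it tractable here, and distinguishes this setting from a generic non-zero-sum SDE game, is that the two jump processes $\Tilde{N}^a,\Tilde{N}^b$ are independent and each is controlled by only one player, so the bang-bang selectors involve disjoint components of the BSDE solution.
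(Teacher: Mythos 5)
The paper does not actually write out a proof of this lemma: it is stated with a pointer to \cite{hamadne2014bangbang} and \cite{jusselin2019optimal}, and your overall strategy --- recast the one-period intensity game as a coupled system of BSDEs with jumps under the reference measure, take bang-bang selectors $\lambda^{a,*}=\lambda^+\mathbf{1}_{U^{a,a}<0}+\lambda_-\mathbf{1}_{U^{a,a}\geq 0}$ (and symmetrically for $b$), and close with a Girsanov verification --- is exactly the route those references take. Your verification step is correct: under $\mathbb{P}^{\lambda^a,\lambda^{b,*}}$ the $\lambda^{b,*}$-terms cancel and one is left with $\int(\lambda^a_s-\lambda^{a,*}_s)U^{a,a}_s\,ds\geq 0$, which is nonnegative by the very definition of the selector, and the observation that each player's Hamiltonian is decoupled in his own control (so the generalized Isaacs condition holds trivially) is the right structural remark.

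The gap is in the existence step for the coupled system, which is in fact the entire difficulty of the lemma. Your claim that ``a Picard iteration on $\mathcal{U}^2_{[(k-1)\delta,k\delta]}$ equipped with an $L^2$ topology yields a fixed point, possibly after subdividing the interval'' does not go through. For fixed $\lambda^b$ the map $\lambda^b\mapsto U^{a,a}$ is indeed Lipschitz in $L^2$ by standard BSDE stability estimates, and its Lipschitz constant can be made small by shortening the interval; but the composite best-response map also contains the selector $U^{a,a}\mapsto \lambda^{a,*}=\lambda^+\mathbf{1}_{U^{a,a}<0}+\lambda_-\mathbf{1}_{U^{a,a}\geq 0}$, which is \emph{not} continuous in $L^2$: an arbitrarily small perturbation of $U^{a,a}$ near the set $\{U^{a,a}=0\}$ can flip the indicator on a set of positive $dt\otimes d\mathbb{P}$ measure and produce an $O(\lambda^+-\lambda_-)$ change in the control. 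Shrinking $\delta$ controls the former constant but does nothing about the latter, so no contraction estimate is available, and equivalently the generator of the coupled system is genuinely discontinuous in the cross components $(\lambda^{b,*}(U^{b,b})-\lambda^0)U^{a,b}$ and $(\lambda^{a,*}(U^{a,a})-\lambda^0)U^{b,a}$. This is precisely what \cite{hamadne2014bangbang} (and, in the point-process setting, \cite{jusselin2019optimal}) are invoked for: there existence is obtained not by contraction but by mollifying the discontinuous selectors, deriving uniform a priori bounds, and passing to the limit using the Markovian representation of the $U$-components (alternatively, one can apply a Kakutani--Fan--Glicksberg fixed point to the convexified best-response correspondence, exploiting that on $\{U^{a,a}=0\}$ the choice of $\lambda^a$ is payoff-irrelevant). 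As written, your fixed-point paragraph asserts the conclusion of the hard step rather than proving it; replacing it with the approximation-and-compactness argument of the cited references (or an explicit appeal to them) would make the proof complete.
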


In the spirit of \cite{ludkovski}, we will show the following results:
\begin{itemize}
    \item In the case where the triggering cost is null or small enough so that it does not impact the strategies during the auction, see Section \ref{section:unimportant}, we can always construct a Nash equilibrium by backward induction (see Theorem \ref{thm:backward:unimportant}).
    \item In the general case, see Section \ref{section:general}, we can construct a Nash equilibrium if we extend our probability space to allow for randomised stopping times, see Theorem \ref{thm:randomised}.
    \item In both cases, the pure open-loop Nash equilibrium or randomised open-loop Nash equilibrium of the discretised game is an $\epsilon$-Nash equilibrium of the continuous game, see Section \ref{sec:epsilonnash}.
\end{itemize}
\subsubsection{Discretised game}
We now introduce the notion of discretised game and that of Nash equilibrium in this framework.
\begin{definition}[Pure Open-Loop Nash Equilibrium for the discrete game (OLNED)]\label{def:OLNED}
Let $x\in\mathbb R_+\times\mathbb N\times\mathbb N\times\mathbb R\times\mathbb R$. We say that $((\tau^{a,*},\lambda^{a,*}),(\tau^{b,*},\lambda^{b,*}))\in(\mathcal{T}^d_{0,T}\times\mathcal{U})^2$ is a pure open-loop Nash equilibrium of the discretised game (OLNED for short) if it is a solution to the game 

\begin{equation*}
\begin{cases}
    \scalebox{0.97}{$\mathbb{E}^{\lambda^{a,*},\lambda^{b,*}}\Big[\dfrac{L^a_{\tau}+\mathcal{C}^a_{auc}+ N^a_{\tau,\tau+h}\frac{\Delta N_{\tau,\tau+h}}K}{\tau+h}\Big]=\underset{\substack{\tau^a\in\mathcal{T}^d_{0,T},\\ \lambda^a\in\mathcal{U}}}{\inf}\mathbb{E}^{\lambda^a,\lambda^{b,*}}\Big[\dfrac{L^a_{\tilde\tau^{a}}+\mathcal{C}^a_{auc}+N^a_{\tilde\tau^{a},\tilde\tau^{a}+h}\frac{\Delta N
   _{\tilde\tau^a,\tilde\tau^{a}+h}}{K}}{\tilde\tau^{a}+h}\Big]$}\\
    \scalebox{0.97}{$\mathbb{E}^{\lambda^{a,*},\lambda^{b,*}}\Big[\dfrac{-L^b_{\tau}-\mathcal{C}^b_{auc}+N^b_{\tau,\tau+h}\frac{\Delta N_{\tau,\tau+h}}{K}}{\tau+h}\Big]=\underset{\substack{\tau^b\in\mathcal{T}^d_{0,T},\\ \lambda^b\in\mathcal{U}}}{\sup} \mathbb{E}^{\lambda^{a,*},\lambda^b}\Big[\dfrac{-L^b_{\tilde\tau^{b}}-\mathcal{C}^b_{auc}+N^b_{\tilde\tau^{b},\tilde\tau^{b}+h}\frac{\Delta N_{\tilde\tau^{b},\tilde\tau^{b}+h }}{K}}{\tilde\tau^{b}+h}\Big]$}
\end{cases}
\end{equation*}
a.s., with $\tilde\tau^{a}=\tau^a\wedge\tau^{b,*}, \; \tilde\tau^{b}=\tau^{a,*}\wedge\tau^b$, $\tau=\tau^{a,*}\wedge\tau^{b,*} $ and where $\mathbb E[\cdot]:= \mathbb E[\cdot| (P^*_{0},N^a_{0},N^b_{0}, L^a_{0},L^b_{0})=x]$.
\end{definition}

For sake of simplicity, a pure OLNED will be simply called an OLNED. Inspired by the literature on optimal stopping in non-zero sum games in discrete time, see among others \cite{grigorova2017optimal,riedel2017subgame}, we aim at finding OLNED in the sense of the above definition. 
\subsubsection{Particular case: $\hat n=\hat n_{ab}=0$, no cost to trigger the auction}\label{section:unimportant}
We first consider the simple case $\hat{n}=\hat{n}_{ab}=0$. In this situation, there is no cost associated with the triggering of an auction and market takers are indifferent about stopping the game first or second. We will see that here a Nash equilibrium for the discretised game can be constructed explicitly.\\ 
For $(\tau^a,\tau^b)\in (\mathcal{T}^d_{0,T})^2$, if Player $b$'s stopping time is $\tau^b$, then the value of Player $a$ is the same whether he plays $\tau^a$ or $\tau^a\wedge\tau^b$. Symmetrically, the value of Player $b$ is the same whether he plays $\tau^b$ or $\tau^a\wedge\tau^b$. So we can simply consider strategies where both players stop at the same time. We look for a stopping time $\tau^*\in\mathcal{T}^d_{0,T}$ and trading intensities $(\lambda^{a,*},\lambda^{b,*})\in\mathcal{U}^2$ such that
\[\begin{cases}
    J^a(x_0,(\tau^{*},\lambda^{a,*}),(\tau^{*},\lambda^{b,*}))&\leq J^a(x_0,(\tau^a,\lambda^a),(\tau^{*},\lambda^{b,*})) \hspace{5mm}\forall (\tau^a,\lambda^a)\in\mathcal{T}^d_{0,T}\times\mathcal{U}\\
    J^b(x_0,(\tau^{*},\lambda^{a,*}),(\tau^{*},\lambda^{b,*}))&\geq J^b(x_0,(\tau^{*},\lambda^{a,*}),(\tau^b,\lambda^b))\hspace{5mm}\forall (\tau^b,\lambda^b)\in\mathcal{T}^d_{0,T}.\times\mathcal{U}
\end{cases}\]
for $x_0\in\mathbb R_+\times\mathbb N\times\mathbb N\times\mathbb R\times\mathbb R$.\\

We build by backward induction a process $(U^a_k, U^b_k)_{k\in\llbracket0,T/\delta\rrbracket}$, adapted to the discrete filtration $(\mathcal{F}_{\delta k})_{k\in\{0,1,...,T/\delta\in\mathbb{N}\}}$ so that, for each $k\in\{0,1,...,T/\delta\in\mathbb{N}\}$, $U^a_k$ and $U^b_k$ are the values of Player $a$ and Player $b$ at time $k\delta$ when they both play an OLNED.

\paragraph{Backward induction algorithm.}
We formally construct by backward induction an OLNED.
We start by setting
\begin{equation*}
    (U^a_{T/\delta} ,U^b_{T/\delta}) = (\frac{L^a_T+g^{\text{first}}_a(T,N^a_T,N^b_T)}{T+h}, \frac{-L^b_T+g^{\text{first}}_b(T,N^a_T,N^b_T)}{T+h})
\end{equation*}
and
\begin{equation*}
    \tau^*_{\frac{T}{\delta}} = T.
\end{equation*}

 Since the players are forced to enter an auction if they have not started one before time $T$ and because $\hat{n}=\hat{n}_{ab}=0$, these values are those of the game if the players start playing at time $T$. In this case, they play a Nash equilibrium during the auction denoted by $(\hat\lambda^a,\hat\lambda^b)$ by solving \eqref{sub-game}, which we can compute with the same numerical method as in \cite{jusselin2019optimal}.\\

In the interval $(T-\delta,T]$, the players cannot trigger an auction.  From Lemma  \ref{lem::disc_int}, we find $(\lambda^{a,*}_{\frac{T}{\delta}-1},\lambda^{b,*}_{\frac{T}{\delta}-1})\in\mathcal{U}_{[T-\delta,T]}^2$ such that 
\begin{equation*}
\begin{cases}
    \mathbb{E}^{\lambda^{a,*}_{\frac{T}{\delta}-1},\lambda^{b,*}_{\frac{T}{\delta}-1}}_{T-\delta}\big[U^a_{T/\delta}\big] = \underset{\lambda^a\in\mathcal{U}_{[T-\delta,T]}}{\einf}\mathbb{E}^{\lambda^a,\lambda^{b,*}_{\frac{T}{\delta}-1}}_{T-\delta}\big[U^a_{T/\delta}\big]\\
    \mathbb{E}^{\lambda^{a,*}_{\frac{T}{\delta}-1},\lambda^{b,*}_{\frac{T}{\delta}-1}}_{T-\delta}\big[U^b_{T/\delta}\big] = \underset{\lambda^b\in\mathcal{U}_{[T-\delta,T]}}{\esup}\mathbb{E}^{\lambda^{a,*}_{\frac{T}{\delta}-1},\lambda^b}_{T-\delta}\big[U^b_{T/\delta}\big].
\end{cases}
\end{equation*}
We set $\Tilde{\lambda}^{a,*}_{\frac{T}{\delta}-1}:= \lambda^{a,*}_{\frac T \delta-1} \otimes_T \hat\lambda^a$ and $\Tilde{\lambda}^{b,*}_{\frac{T}{\delta}-1}:= \lambda^{b,*}_{\frac T \delta-1} \otimes_T \hat\lambda^b$. \\

At time $T-\delta$, both players can choose whether to trigger an auction or not. Also, they are indifferent about who actually triggers the auction. 
If one of the players triggers an auction the values become $(\frac{L^a_{T-\delta}+g^{\text{first}}_a({T-\delta},N^a_{T-\delta},N^b_{T-\delta})}{T-\delta+h}, \frac{-L^b_{T-\delta}+g^{\text{first}}_b({T-\delta},N^a_{T-\delta},N^b_{T-\delta})}{T-\delta+h})$. Otherwise, if none of the players triggers an auction, their values are $(\mathbb{E}^{\lambda^{a,*}_{\frac{T}{\delta}-1},\lambda^{b,*}_{\frac{T}{\delta}-1}}_{T-\delta}\big[U^a_{T/\delta}\big],\mathbb{E}^{\lambda^{a,*}_{\frac{T}{\delta}-1},\lambda^{b,*}_{\frac{T}{\delta}-1}}_{T-\delta}\big[U^b_{T/\delta}\big])$. So each player compares the two possible values (\textit{i.e.} the two possible mean payoffs) and triggers an auction if and only if it is beneficial to him. Consequently,
if the following condition is satisfied:
\begin{equation*}
\begin{cases}
    \mathbb{E}^{\lambda^{a,*}_{\frac{T}{\delta}-1},\lambda^{b,*}_{\frac{T}{\delta}-1}}_{T-\delta}\big[U^a_{T/\delta}\big]<\frac{L^a_{T-\delta}+g^{\text{first}}_a({T-\delta},N^a_{T-\delta},N^b_{T-\delta})}{T-\delta+h}\\
    \mathbb{E}^{\lambda^{a,*}_{\frac{T}{\delta}-1},\lambda^{b,*}_{\frac{T}{\delta}-1}}_{T-\delta}\big[U^b_{T/\delta}\big]>\frac{-L^b_{T-\delta}+g^{\text{first}}_b({T-\delta},N^a_{T-\delta},N^b_{T-\delta})}{T-\delta+h}.
    \end{cases},
\end{equation*}
then none of the players triggers an auction and we set
\begin{equation*}
    (U^a_{T/\delta-1},U^b_{T/\delta-1}) = (\mathbb{E}^{\lambda^{a,*}_{\frac{T}{\delta}-1},\lambda^{b,*}_{\frac{T}{\delta}-1}}_{T-\delta}\big[U^a_{T/\delta}\big],\mathbb{E}^{\lambda^{a,*}_{\frac{T}{\delta}-1},\lambda^{b,*}_{\frac{T}{\delta}-1}}_{T-\delta}\big[U^b_{T/\delta}\big])
\end{equation*}
and 
\begin{equation*}
    \tau^*_{\frac{T}{\delta}-1} = \tau^*_{\frac{T}{\delta}}.
\end{equation*}
Otherwise
\begin{equation*}
    (U^a_{T/\delta-1},U^b_{T/\delta-1}) = (\frac{L^a_{T-\delta}+g^{\text{first}}_a({T-\delta},N^a_{T-\delta},N^b_{T-\delta})}{T-\delta+h}, \frac{-L^b_{T-\delta}+g^{\text{first}}_b({T-\delta},N^a_{T-\delta},N^b_{T-\delta})}{T-\delta+h}),
\end{equation*}
in which case the players trigger an auction at $T-\delta$ and so
\begin{equation*}
    \tau^*_{\frac{T}{\delta}-1} = T-\delta.
\end{equation*}
Then we iterate the procedure to build $U^a$ and $U^b$ and $(\tau^*,\Tilde{\lambda}^{a,*}),(\tau^*,\Tilde{\lambda}^{b,*})$ at any discrete time: Using again backward induction, we can show that, for every $k\in\llbracket0,T/\delta\rrbracket$, there exist two functions $g^a_k$ and $g^b_k$ such that $U^a_k = \hat{g}^a_k(k\delta,P^*_{k\delta},L^a_{k\delta},L^b_{k\delta},N^a_{k\delta},N^b_{k\delta})$ and $U^b_k=\hat{g}^b_k(k\delta,P^*_{k\delta},L^a_{k\delta},L^b_{k\delta},N^a_{k\delta},N^b_{k\delta})$. It is indeed true for $k=T/\delta$. Then, using a result from \cite{jusselin2019optimal}, we have that if this property holds for some $k\in\llbracket1,T/\delta\rrbracket$, it also holds for $k-1$. This allows us to apply the previous methodology and find appropriate $(\lambda^{a,*}_{k},\lambda^{b,*}_{k})$ on each interval. As a result $(\tau^*,\Tilde{\lambda}^{a,*}),(\tau^*,\Tilde{\lambda}^{b,*})$ is an OLNED. This backward induction is summed up in Algorithm \ref{algovaluediscrete}, see Appendix \ref{algo}. 

\paragraph{Existence of an OLNED.} The following theorem formalizes this procedure.
\begin{theorem}\label{thm:backward:unimportant}
Let $U^a,U^b,\lambda^{a,*},\lambda^{b,*}$ be defined by the backward induction in Algorithm \ref{algovaluediscrete} and set
\begin{equation*}
    \tau^{*}=\delta\inf\Big\{l\in\llbracket 0,T/\delta\rrbracket\text{, }U^a_l =\frac{L^a_{l\delta}+g^{\text{first}}_a({l\delta},N^a_{l\delta},N^b_{l\delta})}{l\delta+h}\text{ or } U^b_l=\frac{-L^b_{l\delta}+g^{\text{first}}_b({l\delta},N^a_{l\delta},N^b_{l\delta})}{l\delta+h}\Big\}.
\end{equation*}
Let $\Tilde{\lambda}^{a,*}=\big(\bigotimes_{l\in\llbracket 0,T/\delta-1\rrbracket}\lambda^{a,*}_l\big)\otimes_{\tau^*}\hat{\lambda}^{a}$ and $\Tilde{\lambda}^{b,*}=\big(\bigotimes_{l\in\llbracket 0,T/\delta-1\rrbracket}\lambda^{b,*}_l\big)\otimes_{\tau^*}\hat{\lambda}^{b}$. Then, the pair of controls $((\tau^{*},\Tilde{\lambda}^{a,*}),(\tau^{*},\Tilde{\lambda}^{b,*}))$ is an OLNED. In this case, $U^a_0$ and $U^b_0$ are the values of the discretised game for Player $a$ and Player $b$ respectively, \textit{i.e.}
\begin{equation*}
    U^a_0 = \underset{\tau^a\in\mathcal{T}^d_{0,T}, \lambda^a\in\mathcal{U}}{\inf}\mathbb{E}^{\lambda^a,\Tilde{\lambda}^{b,*}}\big[\frac{L^a_\tau+\mathcal{C}^a_{auc}+\frac{N^a_{\tau,\tau+h}(N^a_{\tau,\tau+h}-N^b_{\tau,\tau+h})}{K}}{\tau+h}\big]
\end{equation*}
where $\tau = \tau^a\wedge\tau^*$ and
\begin{equation*}
    U^b_0 =\underset{\tau^b\in\mathcal{T}^d_{0,T}, \lambda^b\in\mathcal{U}}{\sup} \mathbb{E}^{\Tilde{\lambda}^{a,*},\lambda^b}\big[\frac{-L^b_\tau-\mathcal{C}^b_{auc}+\frac{N^b_{\tau,\tau+h}(N^a_{\tau,\tau+h}-N^b_{\tau,\tau+h})}{K}}{\tau+h}\big]
\end{equation*}
where $\tau = \tau^*\wedge\tau^b$.
\end{theorem}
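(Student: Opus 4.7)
The plan is to verify directly that the pair constructed by the backward induction in Algorithm \ref{algovaluediscrete} satisfies the two inequalities of Definition \ref{def:OLNED}. The entire argument proceeds by descending induction on the discrete grid $\{0, \delta, \ldots, T\}$, piggybacking on Lemma \ref{lem::disc_int} for the intensity optimisations within each cell and on Proposition \ref{prop:DPP} for the auction phase.

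First, I would establish the well-posedness of the backward construction: for every $k\in\llbracket 0, T/\delta\rrbracket$ there exist measurable functions $\hat g^a_k, \hat g^b_k$ with polynomial growth such that $U^a_k = \hat g^a_k(k\delta, P^*_{k\delta}, L^a_{k\delta}, L^b_{k\delta}, N^a_{k\delta}, N^b_{k\delta})$, and similarly for $U^b_k$. The initialisation at $k = T/\delta$ follows from the polynomial-growth property of $g^{\text{first}}_a, g^{\text{first}}_b$ provided by Proposition \ref{prop:DPP}. For the inductive step, Lemma \ref{lem::disc_int} produces $(\lambda^{a,*}_{k-1}, \lambda^{b,*}_{k-1}) \in \mathcal{U}_{[(k-1)\delta, k\delta]}^2$ realising the conditional Nash equilibrium for the functionals $\hat g^a_k, \hat g^b_k$, while the regularity result from \cite{jusselin2019optimal} cited in the text ensures that the resulting conditional expectations retain the Markovian polynomial-growth form. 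Comparing this continuation value with the stopping value at $(k-1)\delta$ (both polynomial in the state) then determines $U^a_{k-1}, U^b_{k-1}$. Consequently $\tau^*$ is an $\mathbb{F}$-stopping time valued in the grid, and the pasted controls $\tilde{\lambda}^{a,*}, \tilde{\lambda}^{b,*}$ are predictable with values in $[\lambda_-, \lambda_+]$, hence admissible.

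Next I would verify the Nash inequalities. Fix a deviation $(\tau^a, \lambda^a)$ by Player $a$ and set $\tilde{\tau}^a = \tau^a \wedge \tau^*$; the claim is $\mathbb{E}^{\lambda^a, \tilde{\lambda}^{b,*}}\bigl[(L^a_{\tilde{\tau}^a} + \mathcal{C}^a_{auc} + N^a_{\tilde{\tau}^a, \tilde{\tau}^a+h}\Delta N_{\tilde{\tau}^a,\tilde{\tau}^a+h}/K)/(\tilde{\tau}^a + h)\bigr] \geq U^a_0$. I would show by descending induction on $k$ that, under $\mathbb{P}^{\lambda^a, \tilde{\lambda}^{b,*}}$, on the event $\{\tilde{\tau}^a \geq k\delta\}$ one has $\mathbb{E}_{k\delta}[\text{remaining cost}] \geq U^a_k$. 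The induction splits into two regimes. On the auction interval $[\tau^*, \tau^*+h]$ the inequality reduces to the sub-game Nash property of $(\hat{\lambda}^a, \hat{\lambda}^b)$ used to extend $\tilde{\lambda}^{a,*}$ and $\tilde{\lambda}^{b,*}$ past $\tau^*$, which is precisely Proposition \ref{prop:DPP}. On a pre-auction cell $[k\delta, (k+1)\delta]$ with $k\delta < \tau^*$, either Player $a$ stops at $k\delta$ — in which case the payoff equals the stopping value, which by the definition of $\tau^*$ is $\geq U^a_k$ — or Player $a$ continues, and by Lemma \ref{lem::disc_int} applied to the polynomial-growth function $\hat g^a_{k+1}$ his conditional expectation under any $\lambda^a$ is bounded below by its value under $\lambda^{a,*}_k$, which by the induction hypothesis is $\geq U^a_k$. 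Taking $k=0$ yields the inequality, with equality at $(\tau^*, \tilde{\lambda}^{a,*})$; the argument for Player $b$ is symmetric with suprema replacing infima.

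The main obstacle will be handling the stopping comparison in the non-zero-sum setting: because the effective auction time is $\tau^a \wedge \tau^*$, a unilateral change by Player $a$ only affects the outcome on the event $\{\tau^a < \tau^*\}$, and one must rule out that Player $a$ gains by triggering strictly earlier on some intermediate grid point. The simplification that makes the pure-strategy construction work in this degenerate regime is that $\hat{n} = \hat{n}_{ab} = 0$ yields $g^{\text{first}}_l = g^{\text{sim}}_l = g^{\text{second}}_l$, so the stopping payoff depends only on the stopping time and not on the identity of the triggering player. This collapses the coupling between the two stopping decisions, and the ``trigger-if-strictly-beneficial'' rule built into the algorithm then ensures that Player $a$ strictly prefers continuation at every grid point strictly before $\tau^*$. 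It is precisely the breakdown of this identity when $\hat{n}>0$ that forces the randomised construction of Section \ref{section:general}.
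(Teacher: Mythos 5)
Your proposal is correct and follows essentially the same route as the paper: a descending induction on the grid showing that $U^a_k$ (resp. $U^b_k$) is the value of the sub-game started at $k\delta$ against the frozen opponent strategy, with the one-step intensity game handled by Lemma \ref{lem::disc_int}, the auction phase by Proposition \ref{prop:DPP}, and the stop-versus-continue comparison at each grid point resolved by the identity $g^{\text{first}}_l=g^{\text{sim}}_l=g^{\text{second}}_l$ when $\hat n=\hat n_{ab}=0$. The paper merely spells out more explicitly the tower-property/Bayes manipulations needed to interchange the essential infimum with the conditional expectation across consecutive cells, which your sketch treats implicitly.
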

\begin{proof}
See Appendix \ref{proof:discrete:unimportant}
\end{proof}

\begin{remark}
Note that the condition $\hat n=\hat n_{ab}=0$ is sufficient but not necessary to obtain Theorem \ref{thm:backward:unimportant}. A weaker condition is actually $g_i^{\text{first}}=g_i^{\text{second}}=g_i^{\text{sim}}$ for $i=a,b$.
\end{remark}

Note that at each time $k\delta$, $k\in\llbracket0,\frac{T}{\delta}-1\rrbracket$, the players deal with a $2\times2$ game in which they decide whether or not they trigger an auction. The values of this game are represented in Table \ref{tablevalues}.
\begin{table}[ht]
\centering
    \begin{tabular}{||c | c c||} 
         \hline
         \backslashbox{a}{b}
        &\makebox[2em]{stops}&\makebox[2em]{continues}\\
        \hline\hline
         stops & $(\frac{L^a_{k\delta}+g^{\text{first}}_a}{k\delta+h},\frac{-L^b_{k\delta}+g^{\text{first}}_b}{k\delta+h})$ & $(\frac{L^a_{k\delta}+g^{\text{first}}_a}{k\delta+h},\frac{-L^b_{k\delta}+g^{\text{first}}_b}{k\delta+h})$  \\ [0.9ex] 
         \hline
         continues & $(\frac{L^a_{k\delta}+g^{\text{first}}_a}{k\delta+h},\frac{-L^b_{k\delta}+g^{\text{first}}_b}{k\delta+h})$ & $(\mathbb{E}^{\lambda^{a,*}_k,\lambda^{b,*}_k}_{k\delta}\big[U^a_{k+1}\big],\mathbb{E}^{\lambda^{a,*}_k,\lambda^{b,*}_k}_{k\delta}\big[U^b_{k+1}\big])$ \\ 
         \hline
    \end{tabular}
    \caption{Cost/gain for Player $a$/$b$ depending on whether Player $a$/$b$ stops or not the  discrete game played at time $k\delta$.}\label{tablevalues}
    \end{table}
		
The choice dictated by the algorithm implies a Nash equilibrium for each of those $2\times2$ games. In this setting, the existence of a Nash equilibrium in the general case (without imposing $\hat{n}=\hat{n}_{ab}=0$) is much more intricate to get. However, we can obtain such result if we consider randomised strategies. This leads us to the notion of randomised discrete stopping times as explained below.

\subsubsection{General case: randomised discrete stopping times}\label{section:general}
We now consider the general case. The procedure used to build a Nash equilibrium in Section \ref{section:unimportant} can be adapted to construct a Nash equilibrium in the general case. This can be done if we look for generalized stopping times instead of classical stopping times. We refer to \cite{coquet2007,solan2012random,touzivieil} for various optimal stopping problems dealing with this kind of stopping times.

\paragraph{Informal derivation of a mixed Nash equilibrium.} Right after $T-\delta$ and until $T$, the situation is the same as in the case where $\hat n=\hat n_{ab} = 0$. The players cannot trigger an auction so they play a Nash equilibrium (with no stopping allowed) until $T$. In that case, their values right after $T-\delta$ are $(\mathbb{E}^{\lambda^{a,*},\lambda^{b,*}}_{T-\delta}\big[U^a_{T/\delta}\big],\mathbb{E}^{\lambda^{a,*},\lambda^{b,*}}_{T-\delta}\big[U^b_{T/\delta}\big])$.
At time $T-\delta$, they are allowed to trigger an auction. The payoffs depend now on which player triggers an auction.
\begin{itemize}
    \item If Player $a$ triggers an auction and Player $b$ does not, the values are $$(\frac{L^a_{T-\delta}+g^{\text{first}}_a({T-\delta},N^a_{T-\delta},N^b_{T-\delta})}{T-\delta+h}, \frac{-L^b_{T-\delta}+g^{\text{second}}_b({T-\delta},N^a_{T-\delta},N^b_{T-\delta})}{T-\delta+h}).$$
    \item If Player $b$ triggers an auction and Player $a$ does not, the values are $$(\frac{L^a_{T-\delta}+g^{\text{second}}_a({T-\delta},N^a_{T-\delta},N^b_{T-\delta})}{T-\delta+h}, \frac{-L^b_{T-\delta}+g^{\text{first}}_b({T-\delta},N^a_{T-\delta},N^b_{T-\delta})}{T-\delta+h}).$$
    \item If both players trigger an auction, the values are $$(\frac{L^a_{T-\delta}+g^{\text{sim}}_a({T-\delta},N^a_{T-\delta},N^b_{T-\delta})}{T-\delta+h}, \frac{-L^b_{T-\delta}+g^{\text{sim}}_b({T-\delta},N^a_{T-\delta},N^b_{T-\delta})}{T-\delta+h}).$$
    \item Finally, if none of the players trigger an auction the values are $$(\mathbb{E}^{\lambda^{a,*},\lambda^{b,*}}_{T-\delta}\big[U^a_{T/\delta}\big],\mathbb{E}^{\lambda^{a,*},\lambda^{b,*}}_{T-\delta}\big[U^b_{T/\delta}\big]).$$
\end{itemize}   
Contrary to the previous case, there is some advantage to gain when the other player triggers an auction. Let $p^i_{\frac{T}{\delta}-1}=1$ if Player $i=a,b$ triggers an auction at $T-\delta$ and $0$ otherwise. 
For $p^b_{\frac{T}{\delta}-1}$ fixed, $p^a_{\frac{T}{\delta}-1}$ must be a minimiser of
\begin{equation*}
    \begin{split}
       p\in \{0,1\}\longmapsto  & \hspace{0.5em} p\; p^b_{\frac{T}{\delta}-1} \frac{L^a_{T-\delta}+g^{\text{sim}}_a}{T-\delta+h} + p(1-p^b_{\frac{T}{\delta}-1})\frac{L^a_{T-\delta}+g^{\text{first}}_a}{T-\delta+h} + (1-p)p^b_{\frac{T}{\delta}-1}\frac{L^a_{T-\delta}+g^{\text{second}}_a}{T-\delta+h} \\
        &\hspace{0.5em}+ (1-p)(1-p^b_{\frac{T}{\delta}-1})\mathbb{E}^{\lambda^{a,*}_k,\lambda^{b,*}_{\frac{T}{\delta}-1}}_{T-\delta}\big[U^a_{T/\delta}\big],
    \end{split}
\end{equation*}
 while for $p^a_{\frac{T}{\delta}-1}$ fixed, $p^b_{\frac{T}{\delta}-1}$ must be a maximiser of
\begin{equation*}
    \begin{split}
        p\in \{0,1\}\longmapsto  & \hspace{0.5em} p^a_{\frac{T}{\delta}-1} p \frac{-L^b_{T-\delta}+g^{\text{sim}}_b}{T-\delta+h} + p^a_{\frac{T}{\delta}-1}(1-p)\frac{-L^b_{T-\delta}+g^{\text{second}}_b}{T-\delta+h} + (1-p^a_{\frac{T}{\delta}-1})p\frac{-L^b_{T-\delta}+g^{\text{first}}_b}{T-\delta+h} \\
        &\hspace{0.5em}+ (1-p^a_{\frac{T}{\delta}-1})(1-p)\mathbb{E}^{\lambda^{a,*}_k,\lambda^{b,*}_{\frac{T}{\delta}-1}}_{T-\delta}\big[U^b_{T/\delta}\big].
    \end{split}
\end{equation*}
Such optimisers might not always exist or might not be unique (in the sense that we would have to decide who triggers the auction). However, both players can always find probabilities of stopping $p^a$ and $p^b$ in $[0,1]$ such that, if Player $b$ triggers an auction with probability $p^b$, the optimal probability of stopping for Player $a$ is $p^a$, and conversely, if Player $a$ triggers an auction with probability $p^a$, the optimal probability of stopping for Player $b$ is $p^b$. Additionally, it is often more natural to consider probabilities of stopping, in particular in the frequent case where both $(p^a,p^b)=(1,0)$ and $(p^a,p^b)=(0,1)$ are possible pure Nash equilibria. This describes a \textit{mixed} Nash equilibrium and simply corresponds to a solution of the convexification of the above problem.\\

There are multiple equivalent notions of random times which stop according to some probability. We use here the notion of \textit{mixed stopping times} of \cite{larakisolan, larakisolan2} to build our probability space. 
\begin{definition}[Generalized stopping time]
A generalized stopping time is a measurable function $\mu:\Omega\times[0,1]\rightarrow [0, T]$ such that for $\Lambda$-almost every $r\in[0,1]$, where $\Lambda$ denotes the Lebesgue measure, the function $\omega\rightarrow\mu(\omega, r)$ is a stopping time, \textit{i.e.} $\mu(., r)\in\mathcal{T}_{0,T}$.
\end{definition}

Our probability space then becomes $(\Omega\times[0,1]\times[0,1],\mathbb{P}\otimes\Lambda\otimes\Lambda)$ where the first extension characterizes the randomiser of Player $a$'s stopping time and the second one that of Player $b$'s stopping time. Let $0\leq s\leq t\leq T$. We denote by $\mathcal{T}^{*}_{s,t}$  the set of generalized stopping times with values in $[s,t]$. If $s/\delta\in\mathbb{N}$ and $t/\delta\in\mathbb{N}$, we also denote by $\mathcal{T}^{*,d}_{s,t}$ the set of generalized stopping times with values in $\llbracket s,t\rrbracket$.\\

We also extend the definition of Nash equilibrium in this framework. 
\begin{definition}[Mixed OLNE and mixed OLNED]
Let $x\in\mathbb R_+\times\mathbb N\times\mathbb N\times\mathbb R\times\mathbb R$. We say that $((\tau^{a,*},\lambda^{a,*}),(\tau^{b,*},\lambda^{b,*}))\in(\mathcal{T}^{*}_{0,T}\times\mathcal{U})\times(\mathcal{T}^{*}_{0,T}\times\mathcal{U}) $, resp. $(\mathcal{T}^{*,d}_{0,T}\times\mathcal{U})\times(\mathcal{T}^{*,d}_{0,T}\times\mathcal{U}) $, is a mixed OLNE, resp. mixed OLNED, if it is a solution to the game 
\begin{equation*}
\begin{cases}
    \scalebox{0.97}{$\mathbb{E}^{\lambda^{a,*},\lambda^{b,*}}\Big[\dfrac{L^a_{\tilde\tau^{a}}+\mathcal{C}^a_{auc}+\frac{N^a_{\tilde\tau^{a},\tilde\tau^{a}+h}\Delta N_{\tilde\tau^{a},\tilde\tau^{a}+h}}{K}}{\tilde\tau^{a}+h}\Big]=\underset{\substack{\tau^a\in\mathcal{T}^{*}_{0,T},\\ \lambda^a\in\mathcal{U}}}{\inf}\mathbb{E}^{\lambda^a,\lambda^{b,*}}\Big[\dfrac{L^a_{\tilde\tau^{a}}+\mathcal{C}^a_{auc}+\frac{N^a_{\tilde\tau^{a},\tilde\tau^{a}+h}\Delta N_{\tilde\tau^{a},\tilde\tau^{a}+h}}{K}}{\tilde\tau^{a}+h}\Big]$}\\
    \scalebox{0.97}{$\mathbb{E}^{\lambda^{a,*},\lambda^{b,*}}\Big[\dfrac{-L^b_{\tilde\tau^{b}}-\mathcal{C}^b_{auc}+\frac{N^b_{\tilde\tau^{b},\tilde\tau^{b}+h}\Delta N_{\tilde\tau^{b},\tilde\tau^{b}+h}}{K}}{\tilde\tau^{b}+h}\Big]=\underset{\substack{\tau^b\in\mathcal{T}^{*}_{0,T},\\ \lambda^b\in\mathcal{U}}}{\sup} \mathbb{E}^{\lambda^{a,*},\lambda^b}\Big[\dfrac{-L^b_{\tilde\tau^{b}}-\mathcal{C}^b_{auc}+\frac{N^b_{\tilde\tau^{b},\tilde\tau^{b}+h}\Delta N_{\tilde\tau^{b},\tilde\tau^{b}+h}}{K}}{\tilde\tau^{b}+h}\Big]$}
\end{cases}
\end{equation*}
resp.
\begin{equation*}
\begin{cases}
    \scalebox{0.97}{$\mathbb{E}^{\lambda^{a,*},\lambda^{b,*}}\Big[\dfrac{L^a_{\tilde\tau^{a}}+\mathcal{C}^a_{auc}+\frac{N^a_{\tilde\tau^{a},\tilde\tau^{a}+h}\Delta N_{\tilde\tau^{a},\tilde\tau^{a}+h}}{K}}{\tilde\tau^{a}+h}\Big]=\underset{\substack{\tau^a\in\mathcal{T}^{*,d}_{0,T},\\ \lambda^a\in\mathcal{U}}}{\inf}\mathbb{E}^{\lambda^a,\lambda^{b,*}}\Big[\dfrac{L^a_{\tilde\tau^{a}}+\mathcal{C}^a_{auc}+\frac{N^a_{\tilde\tau^{a},\tilde\tau^{a}+h}\Delta N_{\tilde\tau^{a},\tilde\tau^{a}+h}}{K}}{\tilde\tau^{a}+h}\Big]$}\\
    \scalebox{0.97}{$\mathbb{E}^{\lambda^{a,*},\lambda^{b,*}}\Big[\dfrac{-L^b_{\tilde\tau^{b}}-\mathcal{C}^b_{auc}+\frac{N^b_{\tilde\tau^{b},\tilde\tau^{b}+h}\Delta N_{\tilde\tau^{b},\tilde\tau^{b}+h}}{K}}{\tilde\tau^{b}+h}\Big]=\underset{\substack{\tau^b\in\mathcal{T}^{*,d}_{0,T},\\ \lambda^b\in\mathcal{U}}}{\sup} \mathbb{E}^{\lambda^{a,*},\lambda^b}\Big[\dfrac{-L^b_{\tilde\tau^{b}}-\mathcal{C}^b_{auc}+\frac{N^b_{\tilde\tau^{b},\tilde\tau^{b}+h}\Delta N_{\tilde\tau^{b},\tilde\tau^{b}+h}}{K}}{\tilde\tau^{b}+h}\Big]$}
\end{cases}
\end{equation*}
with $\tilde\tau^{a}=\tau^a\wedge\tau^{b,*}, \; \tilde\tau^{b}=\tau^{a,*}\wedge\tau^b$, $\tau=\tau^{a,*}\wedge\tau^{b,*} $ and where $\mathbb E[\cdot]:= \mathbb E^{\mathbb{P}\otimes\Lambda\otimes\Lambda}[\cdot| (P^*_{0},N^a_{0},N^b_{0}, L^a_{0},L^b_{0})=x]$.
\end{definition}

It is known (see for example \cite{shmayasolan,solan2012random,touzivieil}) that our notion of generalized stopping time is equivalent to the notion described in the informal derivation of a mixed Nash equilibrium above, where, at time $t$, each player stops with some probability based on the information $\mathcal{F}_t$. In particular, we can build a mixed OLNED using the same algorithm as in Section \ref{section:unimportant}, see Algorithm \ref{algovaluediscretegeneral} in Appendix \ref{algo}.

\paragraph{Existence of a (mixed) OLNED.} 
More formally, the following theorem based on the backward induction above provides the existence of a mixed OLNED. 
\begin{theorem}\label{thm:randomised}
Let 
\begin{equation*}
\begin{cases}
    \tau^{a,*}(.,r)=\delta\; \inf\big\{l\in\llbracket 0,T/\delta\rrbracket\text{, }1-\prod_{k=0}^l (1-p^{a}_l)\geq r)\big\}\\
    \tau^{b,*}(.,r)=\delta\; \inf\big\{l\in\llbracket 0,T/\delta\rrbracket\text{, }1-\prod_{k=0}^l (1-p^{b}_l)\geq r)\big\}
\end{cases}
\end{equation*}
where $p^a$ and $p^b$ are the discrete $\mathcal{F}_t$-adapted processes given by Algorithm \ref{algovaluediscretegeneral} in Appendix \ref{algo}.
Let $\Tilde{\lambda}^{a,*}=\bigotimes(\lambda^{a,*}_l)_{l\in\llbracket 0,T/\delta-1\rrbracket}\otimes_{\tau^*}\hat\lambda^{a}$ and $\Tilde{\lambda}^{b,*}=\bigotimes(\lambda^{b,*}_l)_{l\in\llbracket 0,T/\delta-1\rrbracket}\otimes_{\tau^*}\hat\lambda^{b}$, where the quantities on the r.h.s of the equalities are also given in Algorithm \ref{algovaluediscretegeneral} in Appendix \ref{algo}. Then the strategies $((\tau^{a,*},\Tilde{\lambda}^{a,*}),(\tau^{b,*}_k,\Tilde{\lambda}^{b,*}))$ describe a mixed OLNED.

\end{theorem}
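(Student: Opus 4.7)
The plan is to proceed by downward induction on $k \in \llbracket 0, T/\delta\rrbracket$, showing that the strategies produced by Algorithm \ref{algovaluediscretegeneral}, restricted to the interval $[k\delta,T+h]$, form a mixed Nash equilibrium for the sub-game on $\mathcal{T}^{*,d}_{k\delta,T}\times\mathcal{U}$ with values $U^a_k,U^b_k$. The conclusion then follows from the case $k=0$. The base case $k=T/\delta$ is handled by Proposition \ref{prop:DPP}: both players are forced to auction at $T$, and the intensity pair $(\hat\lambda^a,\hat\lambda^b)$ solving \eqref{sub-game} produces the terminal values $U^a_{T/\delta},U^b_{T/\delta}$.

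For the inductive step, the sub-game on $[(k-1)\delta,T]$ decomposes into two parts. On the open interval $((k-1)\delta,k\delta)$ no stopping decision is possible, so Lemma \ref{lem::disc_int}, applied to the measurable representatives $\hat g^a_k,\hat g^b_k$ of $U^a_k,U^b_k$ (whose existence and polynomial growth are obtained inductively, as in Section \ref{section:unimportant} via \cite{jusselin2019optimal}), yields Nash intensity controls $(\lambda^{a,*}_{k-1},\lambda^{b,*}_{k-1})$. At the instant $(k-1)\delta$, conditional on $\mathcal{F}_{(k-1)\delta}$, the stopping decision reduces to a $2\times 2$ bimatrix game whose four entries are precisely the deterministic $\mathcal{F}_{(k-1)\delta}$-measurable payoffs listed in the informal derivation. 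Nash's theorem for finite strategic-form games provides a mixed equilibrium $(p^a_{k-1},p^b_{k-1})\in[0,1]^2$, which can be selected in an $\mathcal{F}_{(k-1)\delta}$-measurable manner (for instance by choosing the lexicographically smallest equilibrium, which depends measurably on the entries of the bimatrix through best-response polytopes).

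To verify that $((\tau^{a,*},\Tilde\lambda^{a,*}),(\tau^{b,*},\Tilde\lambda^{b,*}))$ is a mixed OLNED, one checks that the generalized stopping times implement the announced probabilities: a telescoping argument on $1-\prod_{k=0}^l(1-p^i_k)$ shows that, conditionally on $\mathcal{F}_{l\delta}$ and on $\{\tau^{i,*}\geq l\delta\}$, $\tau^{i,*}$ equals $l\delta$ with probability $p^i_l$, matching the informal derivation. Then, for an arbitrary deviation $(\tau^a,\lambda^a)\in\mathcal{T}^{*,d}_{0,T}\times\mathcal{U}$ of Player $a$, one invokes Fubini on the product space $\Omega\times[0,1]\times[0,1]$ to integrate out Player $b$'s randomiser, and conditions successively on $\mathcal{F}_{l\delta}$ for $l=T/\delta-1,\ldots,0$. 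At each level, the best-response problem splits into the intensity choice on $((l-1)\delta,l\delta)$, which is bounded below by $\mathbb{E}^{\lambda^{a,*}_l,\lambda^{b,*}_l}_{l\delta}[U^a_{l+1}]$ thanks to Lemma \ref{lem::disc_int}, and the stopping choice at $l\delta$, which is bounded below by the bimatrix Nash value thanks to the equilibrium property of $(p^a_l,p^b_l)$. Aggregating these inequalities via the tower property yields $J^a(x_0,(\tau^a,\lambda^a),(\tau^{b,*},\Tilde\lambda^{b,*}))\geq U^a_0 = J^a(x_0,(\tau^{a,*},\Tilde\lambda^{a,*}),(\tau^{b,*},\Tilde\lambda^{b,*}))$, and the symmetric argument handles Player $b$.

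The main obstacle will be the decoupling of the two randomisers at the verification step: one must justify rigorously that, because $\tau^a$ lives on $\Omega\times[0,1]$ and $\tau^{b,*}$ on $\Omega\times[0,1]$ (the two distinct extensions), the product measure $\mathbb{P}\otimes\Lambda\otimes\Lambda$ lets us treat Player $a$'s best response against $p^b_l$ as a genuine bimatrix best-response problem, not as a problem depending on the realisation of Player $b$'s randomiser. This, combined with the measurable selection of the $\mathcal{F}_{l\delta}$-adapted Nash equilibrium process $(p^a_l,p^b_l)_l$, is the delicate part; the remaining steps are a standard dynamic programming verification.
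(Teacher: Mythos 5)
Your proposal is correct and follows essentially the same route as the paper: backward induction exactly as in the proof of Theorem \ref{thm:backward:unimportant}, with the single new ingredient being the mixed Nash equilibrium of the $2\times 2$ stopping game of Table \ref{tab::mixed} at each date, combined with the identification of generalized stopping times with adapted stopping-probability processes. The ``decoupling of the randomisers'' you flag as the delicate point is precisely what the paper disposes of by invoking the equivalence results of \cite{shmayasolan} and \cite{solan2012random}, and the measurable selection of $(p^a_k,p^b_k)$ is handled there by the explicit closed-form expressions \eqref{probas}.
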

\begin{proof}
According to \cite{shmayasolan} and \cite{solan2012random}, optimising over the set of generalized stopping times is equivalent to optimising over the set of adapted processes $p^a$ and $p^b$ describing the probability to stop at each discrete time. Then Theorem 1 in \cite{shmayasolan} gives a way to build the generalized stopping times from the probability processes. The rest of the proof is similar to that in the pure case and thus follows the proof of Theorem \ref{thm:backward:unimportant}.
The only difference is that the players must play the game of Table \ref{tab::mixed} when choosing whether to stop at $k\delta$ or continue playing until $(k+1)\delta$. 
\begin{table}[ht]
\centering
    \begin{tabular}{||c | c c||} 
         \hline
         \backslashbox{a}{b}
        &\makebox[2em]{stops}&\makebox[2em]{continues}\\
        \hline\hline
         stops & $(\frac{L^a_{k\delta}+g^{\text{sim}}_a}{k\delta+h},\frac{-L^b_{k\delta}+g^{\text{sim}}_b}{k\delta+h})$ & $(\frac{L^a_{k\delta}+g^{\text{first}}_a}{k\delta+h},\frac{-L^b_{k\delta}+g^{\text{second}}_b}{k\delta+h})$  \\ [0.9ex] 
         \hline
         continues & $(\frac{L^a_{k\delta}+g^{\text{second}}_a}{k\delta+h},\frac{-L^b_{k\delta}+g^{\text{first}}_b}{k\delta+h})$ & $(\mathbb{E}^{\lambda^{a,*}_k,\lambda^{b,*}_k}_{k\delta}\big[U^a_{k+1}\big],\mathbb{E}^{\lambda^{a,*}_k,\lambda^{b,*}_k}_{k\delta}\big[U^b_{k+1}\big])$ \\ 
         \hline
    \end{tabular}
    \caption{Cost/gain for Player $a$/$b$ depending on whether Player $a$/$b$ stops or not the  discrete game played at time $k\delta$.}\label{tab::mixed}
    \end{table}\vspace{0.5em}
    
    Thus, at time $k\delta$, $p^a_k$ and $p^b_k$ are defined as solutions of the following linear optimisation problems. For $p^b_k$ fixed, Player $a$ chooses
\begin{equation*}
    \begin{split}
        p^a_k \in\;&\underset{p\in[0,1]}{\arg \inf}\,\Big\{ p\, p^b_k \frac{L^a_{k\delta}+g^{\text{sim}}_a}{k\delta+h} + p(1-p^b_k)\frac{L^a_{k\delta}+g^{\text{first}}_a}{k\delta+h} + (1-p)p^b_k\frac{L^a_{k\delta}+g^{\text{second}}_a}{k\delta+h} \\
        &\hspace{3em}+ (1-p)(1-p^b_k)\mathbb{E}^{\lambda^{a,*}_k,\lambda^{b,*}_k}_{k\delta}\big[U^a_{k+1}\big] \Big\},
    \end{split}
\end{equation*}
while for $p^a_k$ fixed, Player $b$ chooses
\begin{equation*}
    \begin{split}
        p^b_k \in\; &\underset{p\in[0,1]}{\arg \sup}\, \Big\{p^a_k p \frac{-L^b_{k\delta}+g^{\text{sim}}_b}{k\delta+h} + p^a_k(1-p)\frac{-L^b_{k\delta}+g^{\text{second}}_b}{k\delta+h} + (1-p^a_k)p\frac{-L^b_{k\delta}+g^{\text{first}}_b}{k\delta+h} \\
        &\hspace{3em}+ (1-p^a_k)(1-p)\mathbb{E}^{\lambda^{a,*}_k,\lambda^{b,*}_k}_{k\delta}\big[U^b_{k+1}\big]\big\}.
    \end{split}
\end{equation*}

From classical results, see for instance \cite{NeumannMorg,Nash48}, we know that the two problems above can be solved  simultaneously\footnote{It would no longer be the case in general with $p\in\{0,1\}$, \textit{i.e.} with pure stopping times, although it works if $\hat n = \hat n_{ab} = 0$ as we have seen before.}. Solving for a mixed equilibrium yields the following result:

\begin{equation}
\label{probas}
    \begin{cases}
        p^a_k =& \dfrac{\frac{L^a_{k\delta}+g^{\text{first}}_a}{k\delta+h}-\mathbb{E}^{\lambda^{a,*}_k,\lambda^{b,*}_k}_{k\delta}\big[U^a_{k+1}\big]}{-\frac{L^a_{k\delta}+g^{\text{sim}}_a}{k\delta+h}+\frac{L^a_{k\delta}+g^{\text{first}}_a}{k\delta+h}+\frac{L^a_{k\delta}+g^{\text{second}}_a}{k\delta+h}-\mathbb{E}^{\lambda^{a,*}_k,\lambda^{b,*}_k}_{k\delta}\big[U^a_{k+1}\big]}\\ 
        &\\
        p^b_k =& \dfrac{\frac{-L^b_{k\delta}+g^{\text{first}}_b}{k\delta+h}-\mathbb{E}^{\lambda^{a,*}_k,\lambda^{b,*}_k}_{k\delta}\big[U^b_{k+1}\big]}{-\frac{-L^b_{k\delta}+g^{\text{sim}}_b}{k\delta+h}+\frac{-L^b_{k\delta}+g^{\text{first}}_b}{k\delta+h}+\frac{-L^b_{k\delta}+g^{\text{second}}_b}{k\delta+h}-\mathbb{E}^{\lambda^{a,*}_k,\lambda^{b,*}_k}_{k\delta}\big[U^b_{k+1}\big]}
    \end{cases}
\end{equation}
and one can easily verify that the denominators are non-zero and that these values are in $[0,1]$ when there is no pure Nash equilibrium.
\end{proof}

\begin{remark}
The notion of probability of stopping is quite convenient for numerical computations as we can compute the value functions and the strategies by dynamic programming.
\end{remark}

\begin{subsubsection}{Existence of $\varepsilon$-OLNE}\label{sec:epsilonnash}
In this part, we explain that the previously introduced OLNEDs (see Definition \ref{def:OLNED}) provide good approximations for OLNEs (see Definition \ref{def:nash1}), in the sense of $\varepsilon$-Nash equilibria.\\

For technical reasons we need to slightly modify the definitions of $\mathcal C^a_{auc}$ and $C^b_{auc}$ replacing $v^a\tau$ and $v^b\tau$ by $\lceil v^a\tau-\frac{1}{2}\rceil$ and $\lceil v^b\tau-\frac{1}{2}\rceil$. All the previous results could be proved in this slightly modified setting. This assumption is crucial in this section as it enables us to have that when $\frac{1}{2v^a\delta}\in\mathbb{N}$ and $\frac{1}{2v^b\delta}\in\mathbb{N}$, the functions $g^{i,\text{first}}$, $g^{i,\text{sim}}$, for $i=a,b$ take the same values if we replace $t$ by $\lceil \frac{t}{\delta}\rceil\delta$. This will be a key element in the proof of the next theorems.

\begin{theorem}[OLNED and $\varepsilon-$OLNE]\label{thm:purediscrete}
Let $\delta>0$ and $((\tau^a,\lambda^a),(\tau^b,\lambda^b))\in (\mathcal{T}^d_{0,T}\times\mathcal{U})\times(\mathcal{T}^d_{0,T}\times\mathcal{U})$ be the strategies associated to a pure OLNED starting at $0$ with time-step $\delta$. Let $\varepsilon>0$. Then, for $\delta$ small enough such that $\frac{1}{2v^a\delta}\in\mathbb{N}$ and $\frac{1}{2v^b\delta}\in \mathbb N$,
\begin{equation*}
    \begin{split}
        &\mathbb{E}^{\lambda^a,\lambda^b} \big[\frac{L^a_{\tau^a\wedge \tau^b}+\xi^a_{\tau^a\wedge\tau^b}}{{\tau^a\wedge \tau^b}+h}\big]\leq \underset{\tau\in\mathcal{T}_{0,T},\lambda\in \mathcal{U}}{\inf} \mathbb{E}^{\lambda,\lambda^b} \big[ \frac{L^a_{\tau\wedge \tau^b}+\xi^a_{\tau\wedge\tau^b}}{{\tau\wedge \tau^b}+h}\big]+\varepsilon+\varepsilon^a\\
        &\mathbb{E}^{\lambda^a,\lambda^b} \big[\frac{-L^b_{\tau^a\wedge \tau^b}+\xi^b_{\tau^a\wedge\tau^b}}{{\tau^a\wedge \tau^b}+h}\big]\geq \underset{\tau\in\mathcal{T}_{0,T},\lambda\in \mathcal{U}}{\sup} \mathbb{E}^{\lambda^a,\lambda} \big[\frac{-L^b_{\tau^a\wedge \tau}+\xi^b_{\tau^a\wedge\tau}}{{\tau^a\wedge \tau}+h}\big]-\varepsilon-\varepsilon^b\\
    \end{split}
\end{equation*}
where 
\begin{align*}
    \varepsilon^a = \frac{1}{h}\underset{\lambda^a_0\in\mathcal{U},\lambda^b_0\in\mathcal{U}}{\sup}\mathbb{E}^{\lambda^a_0,\lambda^b_0} \big[\underset{[0,T]}{\sup}\max\big(\max(g_a^{\text{sim}}(t,N^a_t,N^b_t),g_a^{\text{second}}(t,N^a_t,N^b_t))-g_a^{first}(t,N^a_t,N^b_t),0\big)\big]
\end{align*}
and 
\begin{align*}
    \varepsilon^b = \frac{1}{h}\underset{\lambda^a_0\in\mathcal{U},\lambda^b_0\in\mathcal{U}}{\sup}\mathbb{E}^{\lambda^a_0,\lambda^b_0} \big[\underset{[0,T]}{\sup}\max\big(g_b^{first}(t,N^a_t,N^b_t)-\min(g_b^{\text{sim}}(t,N^a_t,N^b_t),g_b^{\text{second}}(t,N^a_t,N^b_t)),0\big)\big].
\end{align*}
\end{theorem}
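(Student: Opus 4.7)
The plan is to bound each continuous-time deviation from the OLNED by a discrete-time deviation whose induced cost exceeds it by at most $\varepsilon + \varepsilon^a$ (resp.\ $\varepsilon + \varepsilon^b$), so that the inequality of Definition \ref{def:OLNED} transfers to the continuous-time game. I sketch the argument for Player $a$; the argument for Player $b$ is symmetric, with the direction of the role-switch reversed and $\varepsilon^b$ replacing $\varepsilon^a$.

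Given any continuous deviation $(\tau^{\mathrm{c}}, \lambda^{\mathrm{c}}) \in \mathcal{T}_{0,T} \times \mathcal{U}$ for Player $a$, set $\tau^d := \delta \lceil \tau^{\mathrm{c}}/\delta \rceil$. Since $\{\tau^d = k\delta\} = \{(k-1)\delta < \tau^{\mathrm{c}} \leq k\delta\} \in \mathcal{F}_{k\delta}$, one has $\tau^d \in \mathcal{T}^d_{0,T}$ with $0 \leq \tau^d - \tau^{\mathrm{c}} < \delta$ and $\tau^{\mathrm{c}} = k\delta \Rightarrow \tau^d = k\delta$. Applying the OLNED property to the discrete deviation $(\tau^d, \lambda^{\mathrm{c}})$ yields
$$
\mathbb{E}^{\lambda^a, \lambda^b}\!\left[\tfrac{L^a_{\tau^a\wedge\tau^b} + \xi^a_{\tau^a\wedge\tau^b}}{\tau^a\wedge\tau^b + h}\right] \leq \mathbb{E}^{\lambda^{\mathrm{c}}, \lambda^b}\!\left[\tfrac{L^a_{\tau^d\wedge\tau^b} + \xi^a_{\tau^d\wedge\tau^b}}{\tau^d\wedge\tau^b + h}\right],
$$
so it remains to control the gap between this right-hand side and the cost evaluated at $(\tau^{\mathrm{c}}, \lambda^{\mathrm{c}})$. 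I split the expectation into three events: (i) $\{\tau^d < \tau^b\}$, where Player $a$ is the sole trigger in both realisations; (ii) $\{\tau^{\mathrm{c}} \geq \tau^b\}$, where the effective stop is $\tau^b$ and Player $a$'s role (second if $\tau^{\mathrm{c}} > \tau^b$, sim if $\tau^{\mathrm{c}} = \tau^b$) is the same in both; and (iii) $\{\tau^{\mathrm{c}} < \tau^b \leq \tau^d\}$, where Player $a$ switches from first (continuous) to sim, when $\tau^d = \tau^b$, or second, when $\tau^d > \tau^b$. Event (ii) contributes nothing. On event (i), the modification of $\mathcal{C}^a_{auc}, \mathcal{C}^b_{auc}$ using $\lceil v^\cdot \tau - \tfrac{1}{2} \rceil$, combined with $\tfrac{1}{2 v^a \delta}, \tfrac{1}{2 v^b \delta} \in \mathbb{N}$, implies that $g_a^{\text{first}}(\cdot, n_a, n_b)$ is constant on each cell $[k\delta, (k+1)\delta)$; hence the gap is driven only by drifts of $L^a$ and of $(\tau+h)^{-1}$ and by Poisson jumps of $N^a, N^b$ on an interval of length at most $\delta$, all $O(\delta)$ in expectation.

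Event (iii) is the main obstacle: the payoff shifts from $g_a^{\text{first}}(\tau^{\mathrm{c}}, N^a_{\tau^{\mathrm{c}}}, N^b_{\tau^{\mathrm{c}}})$ to $g_a^{\text{sim}}(\tau^b, N^a_{\tau^b}, N^b_{\tau^b})$ or $g_a^{\text{second}}(\tau^b, N^a_{\tau^b}, N^b_{\tau^b})$, while $\tau^{\mathrm{c}}$ and $\tau^b$ lie in the same cell with $|\tau^b - \tau^{\mathrm{c}}| < \delta$. Using cell-constancy to replace the argument $\tau^{\mathrm{c}}$ by $\tau^b$ in $g_a^{\text{first}}$ (at an $O(\delta)$ cost coming from jumps of $N^a, N^b$ on $[\tau^{\mathrm{c}}, \tau^b]$), the remainder is dominated pointwise by $\max\!\left(\max(g_a^{\text{sim}}, g_a^{\text{second}}) - g_a^{\text{first}},\, 0\right)\!(\tau^b, N^a_{\tau^b}, N^b_{\tau^b})$. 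Dividing by $\tau^b + h \geq h$ and taking the expected supremum over $[0,T]$ yields exactly $\varepsilon^a$ via its definition. Collecting all $O(\delta)$ contributions from (i) and (iii) and choosing $\delta$ small enough bounds them by $\varepsilon$, proving the first inequality; the symmetric construction for Player $b$, in which the rounded deviation can turn a first-trigger into a sim or second role for him and the reverse switch is controlled by $\max(g_b^{\text{first}} - \min(g_b^{\text{sim}}, g_b^{\text{second}}), 0)$, yields the second with $\varepsilon^b$ in place of $\varepsilon^a$.
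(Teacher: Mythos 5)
Your proof is correct and follows essentially the same route as the paper's: round the continuous-time deviation's stopping time up to the grid, invoke the OLNED optimality over discrete stopping times, and control the discretisation gap by an $O(\delta)$ term (using the $\lceil v\tau-\tfrac12\rceil$ modification together with $\tfrac{1}{2v^a\delta},\tfrac{1}{2v^b\delta}\in\mathbb{N}$) plus the role-switch penalty $\varepsilon^a$ arising exactly on the event where rounding up allows Player $b$ to trigger first or simultaneously. The only differences are cosmetic: you round every deviation and take the infimum at the end, whereas the paper rounds an $\varepsilon$-optimal deviation only on $\{\tau^*<\tau^b\}$ and switches the post-$\tau^*$ intensity to the constant $\hat\lambda_-$; neither change affects the substance of the argument.
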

\begin{proof}
See Appendix \ref{app:purediscrete}.
\end{proof}

Also, this theorem extends easily to the case of mixed OLNEs and mixed OLNEDs.
\begin{theorem}[Mixed OLNED and $\varepsilon-$mixed OLNE]
Let $\delta>0$ and $((\tau^{a}_d,\lambda^a),(\tau^{b}_d,\lambda^b))\in (\mathcal{T}^{*,d}_{0,T}\times\mathcal{U})\times(\mathcal{T}^{*,d}_{0,T}\times\mathcal{U})$ be the strategies of a mixed OLNED starting at $0$. Let $\varepsilon>0$. Then, for $\delta$ small enough such that $\frac{1}{2v^a\delta}\in\mathbb{N}$ and $\frac{1}{2v^b\delta}\in\mathbb{N}$,
\begin{equation}
    \begin{split}
        &\mathbb{E}^{\lambda^a,\lambda^b} \big[\frac{L^a_{\tau^a\wedge \tau^b}+\xi^a_{\tau^a\wedge\tau^b}}{{\tau^a\wedge \tau^b}+h}\big]\leq \underset{\tau\in\mathcal{T}^{*}_{0,T},\lambda\in \mathcal{U}}{\inf} \mathbb{E}^{\lambda,\lambda^b} \big[ \frac{L^a_{\tau\wedge \tau^b}+\xi^a_{\tau\wedge\tau^b}}{{\tau\wedge \tau^b}+h}\big]+\varepsilon+\varepsilon^a\\
        &\mathbb{E}^{\lambda^a,\lambda^b} \big[\frac{-L^b_{\tau^a\wedge \tau^b}+\xi^b_{\tau^a\wedge\tau^b}}{{\tau^a\wedge \tau^b}+h}\big]\geq \underset{\tau\in\mathcal{T}^{*}_{0,T},\lambda\in \mathcal{U}}{\sup} \mathbb{E}^{\lambda^a,\lambda} \big[\frac{-L^b_{\tau^a\wedge \tau}+\xi^b_{\tau^a\wedge\tau}}{{\tau^a\wedge \tau}+h}\big]-\varepsilon-\varepsilon^b.
    \end{split}
\end{equation}
\end{theorem}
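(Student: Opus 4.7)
The plan is to adapt the argument of Theorem \ref{thm:purediscrete} by working on the enlarged probability space $(\Omega \times [0,1]^2, \mathbb{P} \otimes \Lambda \otimes \Lambda)$ and invoking Fubini to reduce the mixed case to the already-established pure case. Given a deviation $(\tau, \lambda) \in \mathcal{T}^{*}_{0,T} \times \mathcal{U}$ for Player $a$, I would define its discretization $\tau^\delta(\omega, r) := \lceil \tau(\omega, r)/\delta \rceil \delta$; since $\tau(\cdot, r)$ is an ordinary stopping time for $\Lambda$-a.e.\ $r$ and the ceiling is Borel, $\tau^\delta$ is a generalized stopping time in $\mathcal{T}^{*,d}_{0,T}$ with $0 \leq \tau^\delta - \tau \leq \delta$. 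Applying the mixed OLNED property to the deviation $(\tau^\delta, \lambda)$ gives
\begin{equation*}
\mathbb{E}^{\lambda^a, \lambda^b}\!\left[\frac{L^a_{\tau^a_d \wedge \tau^b_d} + \xi^a_{\tau^a_d \wedge \tau^b_d}}{\tau^a_d \wedge \tau^b_d + h}\right] \leq \mathbb{E}^{\lambda, \lambda^b}\!\left[\frac{L^a_{\tau^\delta \wedge \tau^b_d} + \xi^a_{\tau^\delta \wedge \tau^b_d}}{\tau^\delta \wedge \tau^b_d + h}\right].
\end{equation*}

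The remaining task is to bound the right-hand side by the same expression with $\tau$ in place of $\tau^\delta$, up to $\varepsilon + \varepsilon^a$. Fixing the randomizers $(r_a, r_b) \in [0,1]^2$, the slices $\tau(\cdot, r_a)$, $\tau^\delta(\cdot, r_a)$, $\tau^b_d(\cdot, r_b)$ are ordinary stopping times in $\mathcal{T}_{0,T}$, $\mathcal{T}^d_{0,T}$, $\mathcal{T}^d_{0,T}$ respectively, and the Doleans-Dade exponential $\Psi^{\lambda,\lambda^b}$ acts on the $\omega$-coordinate only once $(r_a,r_b)$ is fixed, so $\mathbb{P}^{\lambda, \lambda^b}$ factors across the randomizer coordinates. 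By Fubini I first compute the pure expectation in $\omega$ at fixed randomizers and then integrate in $(r_a,r_b)$. The pointwise discretization estimate underlying Theorem \ref{thm:purediscrete} applies slice by slice: conditioning on $(r_a, r_b)$,
\begin{equation*}
\mathbb{E}^{\lambda, \lambda^b}\!\left[\frac{L^a_{\tau^\delta \wedge \tau^b_d} + \xi^a_{\tau^\delta \wedge \tau^b_d}}{\tau^\delta \wedge \tau^b_d + h} - \frac{L^a_{\tau \wedge \tau^b_d} + \xi^a_{\tau \wedge \tau^b_d}}{\tau \wedge \tau^b_d + h}\,\bigg|\,r_a, r_b\right] \leq \varepsilon + \varepsilon^a,
\end{equation*}
where $\varepsilon^a$ absorbs the events on which rounding $\tau$ up to $\tau^\delta$ swaps the order of stopping between the two players (replacing $g_a^{\text{first}}$ by $g_a^{\text{sim}}$ or $g_a^{\text{second}}$), while $\varepsilon$ comes from the change in the denominator $\tau + h$ and the Lipschitz/continuity estimates on $L^a$ and $\xi^a$ over an interval of length at most $\delta$. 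Integrating over $(r_a, r_b)$ with respect to $\Lambda \otimes \Lambda$ yields the claimed bound. The arithmetic hypothesis $\tfrac{1}{2v^a\delta}, \tfrac{1}{2v^b\delta} \in \mathbb{N}$ is used exactly as in the pure case to guarantee that the target volumes $\lceil v^a\tau - \tfrac12\rceil$ and $\lceil v^b\tau - \tfrac12\rceil$ coincide with their rounded-time counterparts, so that $g_a^{\text{first}}, g_a^{\text{sim}}, g_a^{\text{second}}$ evaluated at $\tau^\delta$ match those at $\tau$ outside the above exceptional events.

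The bound for Player $b$ follows from the same construction applied to a deviation $(\tau, \lambda) \in \mathcal{T}^*_{0,T} \times \mathcal{U}$ of Player $b$, with the sign of the cost reversed and the symmetric definition of $\varepsilon^b$ taking the place of $\varepsilon^a$. The main (and only genuinely new) obstacle compared with Theorem \ref{thm:purediscrete} is the Fubini step: one must check that slicing in $(r_a, r_b)$ preserves the predictability of $\lambda, \lambda^a, \lambda^b$ and the martingale structure under $\mathbb{P}^{\lambda, \lambda^b}$. This is resolved using the equivalence between generalized stopping times and $\mathbb{F}$-adapted $[0,1]$-valued probability processes established in \cite{shmayasolan, solan2012random}, which lets us treat the randomizer coordinates as truly independent of $(W, \tilde N^a, \tilde N^b)$, so that the product measure decomposes cleanly and the pure-case estimates transfer slice-wise.
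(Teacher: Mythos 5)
Your proposal is correct and follows essentially the same route as the paper, whose entire proof of this theorem is the single remark that the argument is identical to that of Theorem \ref{thm:purediscrete}. You merely make explicit the step the paper leaves implicit — discretising the deviating generalized stopping time coordinatewise and using the product structure of $\mathbb{P}\otimes\Lambda\otimes\Lambda$ (the density $\Psi^{\lambda,\lambda^b}$ acting only on the $\omega$-coordinate) to transfer the pure-case estimates slice by slice, which is valid because the paper's pure-case bounds are uniform in the stopping times and intensities.
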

\begin{proof}
The proof is the same as the proof of Theorem \ref{thm:purediscrete}.
\end{proof}

In practice and in our numerical experiments, the constants $\varepsilon^a$ and $\varepsilon^b$ are negligible and very often zero. This is because they are non-zero only when there is an advantage in triggering the auction right before the other player.
This is typically not the case, unless the player triggering the auction benefits from $\hat n$ to execute a large target volume
that he could not fully do within the auction because of its bounded intensity.

\begin{remark}
The condition on $\delta$ is only technical and ensures that the changes in the targets happen on the same grid as the optimisation (with mesh $\delta$). This condition would actually not be required if the targets followed, for example, Poisson processes.
\end{remark}

\end{subsubsection}

\section{Numerical results and assessment of {\it ad-hoc} auctions}
\label{sec4}

In this section, we provide numerical results enabling us to draw conclusions on the relevance of {\it ad-hoc} auctions compared to CLOB and periodic auctions. We also discuss some implementation details. The value functions shown are multiplied by $10^6$ for more readability.

\subsection{Sub-game values depending on players' positions at the auction triggering}
First we show how the value of the sub-game played during the auction phase varies with the parameters, for Player $a$ and Player $b$. From now on we take $K = 10, v^a=v^b=0.1$. 

\subsubsection{Effect of the amount traded before the auction}\label{inventiry:graph}
We fix $h=30s$ and plot the value of the sub-game $\xi^a_\tau$ as a function of $N^a_\tau-v^a\tau$, for various values of $N^a_+$, with $N^b_+=0$ and $N^b_\tau-v^b\tau=0$ (see Figure \ref{xiaer}, left side). First we notice that these graphs are increasing with respect to $N^a_+$, which is obviously not surprising. The effect of $N^a_+$ gets more important as $N^a_\tau-v^a\tau$ becomes larger. This is because in such situation, Player $a$ is already in advance regarding to his target. A large $N^a_+$ implies that he is even more in advance and gets penalised via the objective function.\\

Looking at the graphs for fixed $N^a_+$, we see that the best context to trigger an auction is when $N^a_t-v^at$ is close to zero and actually slightly negative. In that case, Player $a$ can launch an auction without overshooting his target at the end because of the mandatory volume $N^a_+$. Moreover, we note that $\xi^a_\tau$ is large when either $N^a_\tau-v^a\tau$ is large, since Player $a$ is penalised for overshooting his target, or when $N^a_\tau-v^a\tau$ is too small, since Player $a$ has to send a lot of orders during the auction, which makes the price increase.\\

Next we plot $\xi^a_\tau$ as a function of $N^b_\tau-v^b\tau$, for various values of $N^a_+$, with $N^b_+=0$ and $N^a_\tau-v^a\tau=0$ (see Figure \ref{xiaer}, right side). We notice that $\xi^a_\tau$ is always increasing with respect to $N^b_\tau-v^b\tau$: the more Player $b$ trades before the auction in comparison to his target, the less he trades during the auction, and the higher the final price of the auction is. In addition to that, $\xi^a_\tau$ converges when $N^b_\tau-v^b\tau\rightarrow\pm\infty$: if $N^b_\tau-v^b\tau$ is too large, Player $b$ stops trading completely and if $N^b_\tau-v^b\tau$ is too small, Player $b$ would rather pay some penalties than send too many orders during the auction leading to a bad price since Player $a$ is at equilibrium when the auction starts.

\begin{figure}[ht]
\centering
\includegraphics[width=15cm]{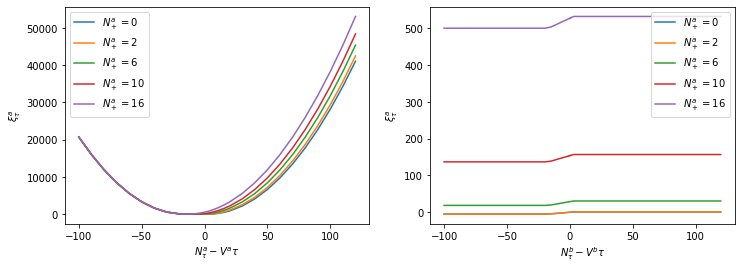}
\caption{On the left, $\xi^a_\tau$ as a function of $N^a_\tau-v^a\tau$, for different values of $N^a_+$, with $N^b_+=0$, $N^b_\tau-v^b\tau=0$ and $h=30s$. On the right, $\xi^a_\tau$ as a function of $N^b_\tau-v^b\tau$, for different values of $N^a_+$, with $N^b_+=0$, $N^a_\tau-v^a\tau=0$ and $h=30s$, $q=0.1$.}
\label{xiaer}
\end{figure}

\subsubsection{Impact of the risk aversion parameter and of the auction duration}\label{duration:graph}

We investigate the effect of the parameter $q$ which is the factor for the penalties received by the players for not reaching their trading targets and of the auction duration. 
We first consider $q=0.1$ and plot on the left side of Figure \ref{hnanb} $\xi^a_\tau$ as a function of $h$, for $N^a_\tau-v^a\tau=N^b_\tau-v^b\tau=0$, $N^b_+=0$ and for multiple values of $N^a_+$. On the right side of Figure \ref{hnanb}, we display $\xi^a_\tau$ as a function of $h$, with $N^a_\tau-v^a\tau=N^b_\tau-v^b\tau=0$, $N^a_+=0$ and for multiple values of $N^b_+$.
In Figure \ref{hnanb2} we fix $q=0.01$ and in Figure \ref{hnanb3} $q=0.001$.\\

\begin{figure}
\centering
\begin{subfigure}{\textwidth}
\centering
\includegraphics[width=15cm]{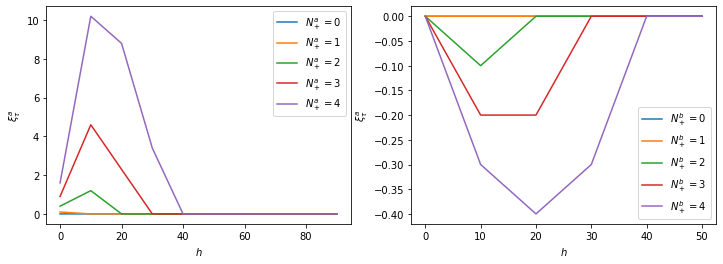}
\caption{$q=0.1$.}
\label{hnanb}
\end{subfigure}
\newline
\begin{subfigure}{\textwidth}
\centering
\includegraphics[width=15cm]{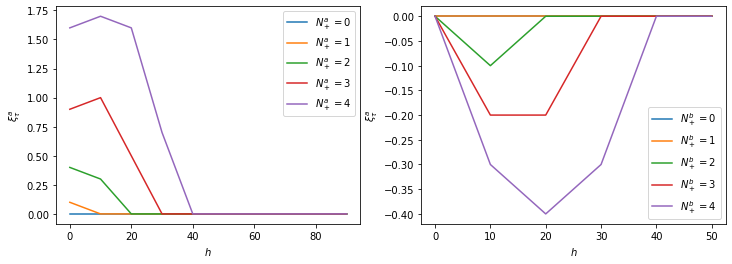}
\caption{$q=0.01$.}
\label{hnanb2}
\end{subfigure}
\newline
\begin{subfigure}{\textwidth}
\centering
\includegraphics[width=15cm]{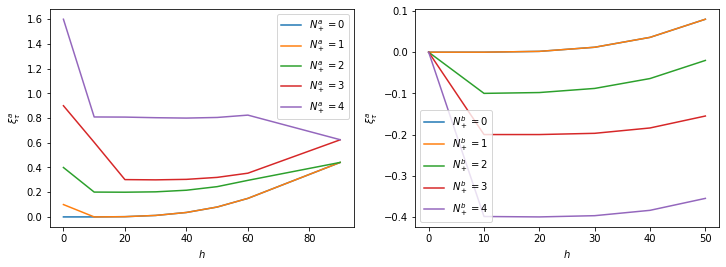}
\caption{$q=0.001$}
\label{hnanb3}
\end{subfigure}
\caption{$\xi^a_\tau$ as a function of $h$, with $N^a_\tau-v^a\tau=0$, $N^b_\tau-v^b\tau=0$. On the left, we fix $N^b_+=0$ and consider multiple values of $N^a_+$. On the right, we fix $N^a_+=0$ and consider multiple values of $N^b_+$.}
\label{fig:three_hnanb}
\end{figure}

We see in Figures \ref{hnanb} and \ref{hnanb2} that $\xi^a_\tau=0$ for $h$ large enough, which is no longer the case in Figure \ref{hnanb3}. This is because when the commitment to the target is severe, over a quite long time period both traders send on average the same number of orders as $v^a=v^b$ and the effect of $N^a_+$ or $N^b_+$ vanishes. We also observe that too short auctions may create some kind of arbitrage opportunities: the trader who triggers an auction is committed to trade at least a given volume. The other trader might choose to trade less to take advantage of the price imbalance in the auction, as the penalty he will have to pay will not be too large. Let us take the example of $h=20s$. In that case, the target is two lots for both Player $a$ and Player $b$. If Player $a$ triggers the auction with $N^a_+=4$ then Player $b$ will put a volume of $2$ in the auction meeting his target or perhaps even less (volume of $1$) meeting partially his target but benefiting from price impact. Such phenomenon is magnified in a situation as in Figure \ref{hnanb3} where the target commitment is very weak. In that case, both players try to benefit from price impact leading to a game where they both put smaller volumes than their target. For example, we see that the effect of the initial volume $N^a_+=1$ \textit{vs} $N^a_+=2$ takes more than $80$ seconds to vanish in Figure \ref{hnanb3}, although the target is $8$ lots for $80s$. This means that between $0$ and $80$ seconds, both investors play strategically to benefit from the effect of volume imbalance on the clearing price. \\

This shows that the duration of the auction should be large enough and related to reasonable practical values for $q$. Considering the auction duration helps to convey information to market participants, it should also probably depend on the deviation between the previous clearing price and the best offer price in the order book at the beginning of the auction. The larger the deviation, the longer the duration of the auction. Accurate duration calibration is left for further research.\\

We use the results of this section to choose suitable parameters for our study of the entire \textit{ad-hoc} auction in the next section. 
 
\subsection{Assessment of \textit{ad-hoc} auctions}
We now investigate the whole mechanism of \textit{ad-hoc} auctions and compare it with the classical CLOB and periodic auctions. We use Algorithm \ref{algovaluediscretegeneral} with a small timestep $\delta=0.05s$ and write $V^i$ for $J^i(.,(\tau^{i,*},\lambda^{i,*}),(\tau^{i,*},\lambda^{i,*}))$ for $i=a,b$.

\subsubsection{Choice of the parameters for the simulation study} The values for $v^a$ and $v^b$ will be of order $0.1$, so we expect roughly $2$ trades every $10$ seconds, which corresponds to the case of reasonably liquid assets. We fix $T=100s$ so that $T$ is large compared to the average time between trades. We take $q=0.01, h=20s$ and $\hat n=n_{ab}=3$. The justification for the relevance of these parameters is the following: 
\begin{itemize}
\item We have $\hat n>(v^a\wedge v^b) h$. This ensures that transactions occur both in the continuous and auction phases. As a matter of fact, if $\hat n<(v^a\wedge v^b) h$, the triggering cost for an auction is quite negligible with respect to the target amount within the auction. We numerically observe that in that case, investors do not use the continuous phase and trade only in the auction phases, which means that \textit{ad-hoc} auctions are reduced to periodic auctions.
\item Consider an auction triggered because both players are slightly behind their targets so that one of them, say Player $a$, triggers the auction and both should trade $3$ lots during the $20$ seconds. Then, suppose that Player $b$ tries to benefit from the price impact and trade only $2$ during the auction instead of $3$. Under these parameters, the price impact benefit of Player $b$ (which is equal to $2\times 1/K$) is exactly the cost paid for not meeting the target (which is equal to $qh$). Hence from the investors' viewpoint, these parameters correspond to reasonable balance between trading costs and target deviation penalties.
\end{itemize}

\subsubsection{Effect of $\hat n$ compared to $v^a$ and $v^b$}
Here we replace $\hat n_{ab}$ by a random variable which is so that if there is simultaneous triggering, it is attributed to Player $a$ or Player $b$ with probability $1/2$ and a volume commitment equal to $3$. We plot in Figure \ref{v_et_t0101p50}, Figure \ref{v_et_t01005} and Figure \ref{v_et_t01015} the values of the game at the origin and the average duration\footnote{In fact we only compute a proxy of the average duration. The computation details are given in Algorithm \ref{algo:meantime} in Appendix \ref{section:algo_duration}.} of the continuous phase at time $t=0$ for different values of $v^a$ and $v^b$.

\paragraph{Figure \ref{v_et_t0101p50} : supply and demand of similar order.} When $v^a=v^b=0.1$, the average duration of the continuous phase is $21$ seconds if the initial trading price $P$ is equal to $P^*_0$. We observe that we obviously get a symmetric average duration of the continuous trading phase with respect to the sign of $P-P^*$. The average duration of the continuous phase is maximal at $P=P
^*$, then decreases and becomes stationary.
This can be explained as follows: if $P$ is close to $P^*$ and because of the symmetry of $P^*$, locally around $t=0$, we expect to have oscillations of $P^*$ around $P$ so that either Player $a$ or Player $b$ can trade with the market maker. If $P^*$ increases (resp. if $P^*$ decreases) significantly beyond $P$, Player $b$ (resp. Player $a$) is likely to start an auction since his probability to trade with the market maker in a short amount of time becomes smaller. Although he can trade, the other player may also wish to trigger an auction, as his trading price becomes very unfavourable. So we see that \textit{ad-hoc} auctions ensure that the trading price does not deviate too much from the efficient price.\\

The plots of $V^a$ and $V^b$ in Figure \ref{v_et_t0101p50} are not even functions with respect to $P-P^*$. To explain it, take for example the position of Player $a$. If $P>P^*$ he can buy and launch an auction while if $P<P^*$ he can only launch an auction. Consequently, the situation $P>P^*$ is somehow more acceptable for him. We recall that as Player $a$ minimises and Player $b$ maximises, the value functions are symmetric with respect to the origin. We also see that for Player $a$, there is a large peak of the value function when $P-P^*$ is slightly negative and only a small downward bump when it is slightly positive. This means that for Player $a$, there is much more to lose when Player $b$ can trade with the market maker than to earn when he can trade with the market maker. This will be also confirmed in Table \ref{tab::compa_} below. 

\begin{figure}[!h]
\centering
\includegraphics[width=15cm]{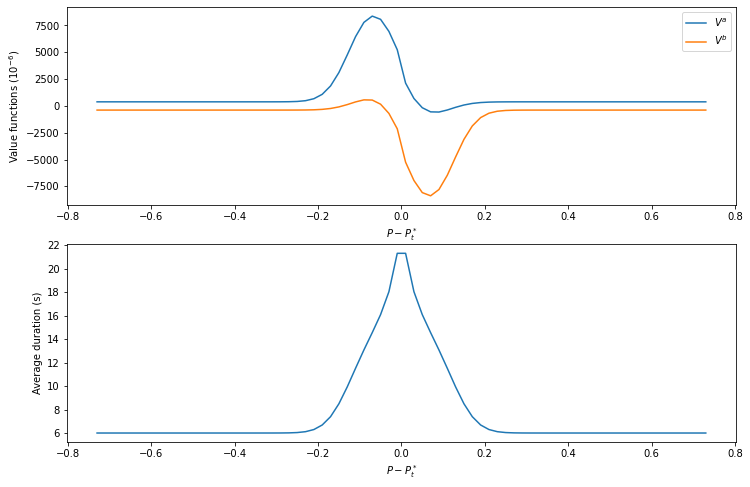}
\caption{Values of the game and average duration of the continuous phase, at time $t=0$, as functions of $P-P^*_0$, with $v^a=v^b=0.1$.}
\label{v_et_t0101p50}
\end{figure}

\paragraph{Figure \ref{v_et_t01005} : demand higher than supply for small investors.} When $v^a=0.1$ and $v^b=0.05$, Player $b$ is better off than Player $a$. When $P>P^*$, Player $a$ can trade with the market maker hence reducing the imbalance with respect to the volume of Player $b$. Player $b$ will typically not immediately trigger an auction because of the quite significant entry cost of the auction $\hat n=3$. This explains the quite long duration of the auction phase in this situation and the downward peak of the value function of Player $a$. When $P-P^*$ is negative, Player $b$ can trade with the market maker which could lead to an even larger imbalance from Player $a$'s perspective. Thus we expect Player $a$ to trigger the auction in that case explaining the short length of the continuous phase and the flat behaviour of the value functions on the left of $0$ (whatever $P-P^*<0$, Player $a$ will trigger an auction). 

 \begin{figure}[!h]
\centering
\includegraphics[width=15cm]{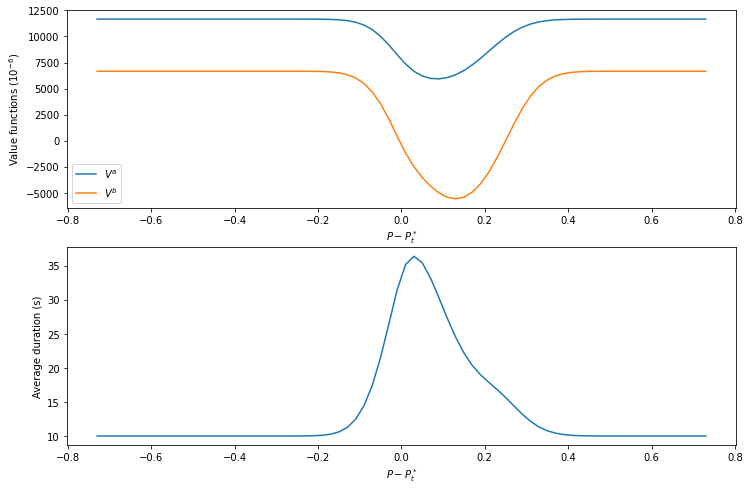}
\caption{Values of the game and average duration of the continuous phase, at time $t=0$, as functions of $P-P^*_0$, with $v^a=0.1$ and $v^b=0.05$.}
\label{v_et_t01005}
\end{figure}

\paragraph{Figure \ref{v_et_t01015} : supply higher than demand for large investors.} When $v^a=0.1$ and $v^b=0.15$ the situation differs significantly. When $P>P^*$, as previously, Player $a$ can trade with the market maker, which improves even more its imbalance position with respect to the volume of Player $b$ and it is particularly interesting when $P$ is only slightly larger than $P^*$. In that case, Player $b$ rapidly triggers an auction to prevent Player $a$ from trading. Note that contrary to the previous situation, the entry cost is not prohibitive here for Player $b$ as $v^b=0.15$. When $P$ is significantly larger than $P^*$, the price becomes too bad for Player $a$ who stops trading. Then a gaming situation occurs between the two players explaining the delay before one of them triggers the auction. Regarding the value functions, the peak of the orange graph is explained by the fact that it is very interesting for Player $b$ to trade with the market maker to reduce his imbalance with respect to the volume of Player $a$ (who may be reluctant to trigger an auction as $v^a$ is not very large).  

\begin{figure}[!h]
\centering
\includegraphics[width=15cm]{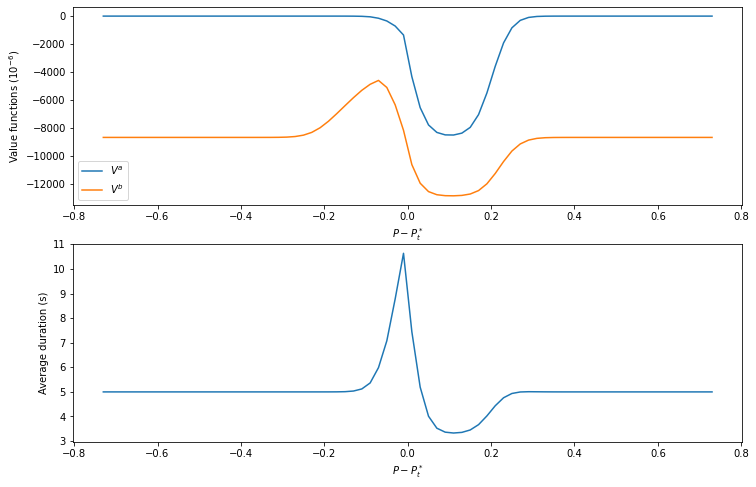}
\caption{Values of the game and average duration of the continuous phase, at time $t=0$, as functions of $P-P^*_0$, with $v^a=0.1$ and $v^b=0.15$.}
\label{v_et_t01015}
\end{figure}

\subsubsection{Comparison with periodic auctions and CLOB}

We finally provide the value functions and average durations in the case of {\it ad-hoc} auctions,  expensive periodic auctions ($\hat n= 3$ and no trading allowed in the continuous phase), inexpensive periodic auctions ($\hat n=1$ and no trading allowed in the continuous phase) and CLOB. In the case of CLOB, the players trade only with the market maker and pay $1/K$ for each trade. The average duration is then defined as the average time between two trades and the value as the amount paid per unit of time. The results are shown in Table \ref{tab::compa_}.

\begin{table}[h!]
\centering
    \begin{tabular}{|c || c c | c | c | c c | c | c||} 
         \hline
        &\multicolumn{4}{c|}{$V^a$ (1e-6)}&\multicolumn{4}{c||}{Average duration}\\
        \hline
        Market design & \multicolumn{2}{c|}{\shortstack{$h = 20$,\\ $\hat{n}=3$}}&\shortstack{$h = 20$,\\ $\hat{n}=1$}&CLOB& \multicolumn{2}{c|}{\shortstack{$h = 20$,\\ $\hat{n}=3$}}&\shortstack{$h = 20$,\\ $\hat{n}=1$}&CLOB\\
        \hline
         continuous trading allowed&Yes&No&No&No&Yes&No&No&No  \\ [0.9ex] 
         \hline\hline
         $v^a=0.1$, $v^b=0.1$&3685.6&384.6&0.0&10000.0&21.3s&6.0s&0.0s&10.0s \\ 
         \hline
         $v^a=0.05$, $v^b=0.1$&392.8&-6666.7&-5000.0&5000.0&33.3s&10.0s&0.0s&20.0s \\  
         \hline
         $v^a=0.1$, $v^b=0.05$&7800.0&11606.7&10000.0&10000.0&33.3s&10.0s&0.0s&10.0s \\  
         \hline
         $v^a=0.15$, $v^b=0.1$&9397.9&8680.0&10000.0&15000.0&9.0s&5.0s&0.0s&6.7s \\ 
         \hline
         $v^a=0.1$, $v^b=0.15$&-2841.8&0.0&0.0&10000.0&9.0s&5.0s&0.0s&10s \\ 
         \hline
    \end{tabular}
    \caption{$V^a$ and average duration of the continuous trading phase for different values of $v^a$ and $v^b$ with $q=0.01$.}
    \label{tab::compa_}
\end{table}

We notice first that, if continuous trading with the market maker is allowed, the average duration of the pre-auction phase is longer. This is because both players try to trade with the market maker if possible in order to push the settlement price of the next auction in their favour.\\ 

If $v^a=v^b=0.1$, Player $a$ prefers the case where there is no continuous trading. This is in agreement with our interpretation of Figure \ref{v_et_t0101p50} since Player $a$ has much more to lose when Player $b$ can trade with the market maker than to earn when he can trade with the market maker. Moreover, if the triggering volume is small ($\hat n=1$), the probability of the auctions to be balanced is large and the player who cannot trade with the market maker triggers an auction quickly. The case $\hat{n}=3$ provides an intermediary between periodic auctions and CLOB in terms of value functions.\\

If $v^a$ and $v^b$ are small and asymmetric (either $v^a=0.05$ and $v^b = 0.1$ or $v^a=0.1$ and $v^b=0.05$), we observe that the player with the larger target benefits from \textit{ad-hoc} auctions. We explain this as follows: if the larger player can trade with the market maker, he is able to liquidate his temporary surplus at a low cost with the market maker and so suffers less from price impact in the auction, which is more balanced than in the situation without continuous trading. In this case, it is too costly for the smaller player to trigger an auction since $\hat n$ is too high compared to the target $0.05$. The larger player is thus the first to trigger the auction if the price becomes too unfavourable, in a way signalling to the smaller player that it is preferable to trade at the forthcoming auction instead of at the clearing price. Otherwise, if the smaller player trades with the market maker during the continuous trading phase, the larger player triggers the auction to protect himself from an excessively unfavourable price at the auction. The smaller player benefits from information leakage/market impact generated by the larger player, while the larger player uses his informational advantage of being the larger player by capturing mistimed liquidity from the smaller player. In both cases, the larger player is the one triggering the auction and benefits from the continuous trading phase. This is in agreement with Figure \ref{v_et_t01005} where $V^a$ takes its lowest value for $P>P^*$ with $P$ close to $P^*$. In addition, compared to the case without market maker or with $|P-P^*|$ large, the temporary target imbalance has less impact on the distance between the clearing price and the efficient price. This is a direct consequence of the surplus of orders from the larger player being absorbed by the market maker. We consider this an advantage of \textit{ad-hoc} auctions: the clearing price has less volatility.\\

We now turn to $v^a=0.1$ and $v^b=0.15$. As before, if Player $a$ (smaller player) trades with the market maker, he can liquidate part of his volume but Player $b$ (larger player) quickly triggers an auction to prevent him from doing so. The larger player can indeed trigger the auction since $\hat n=3$ coincides with his target. When Player $b$ trades with the market maker, unlike the previous case, the auction triggering cost is reasonable for Player $a$. The continuous phase appears as an opportunity for Player $a$ to prevent Player $b$ from mitigating his inventory since in this case Player $a$ triggers the auction. This is in accordance with Figure \ref{v_et_t01015} above. Conversely, we observe that the value functions of Player $b$ are quite similar considering \textit{ad-hoc} auctions or classical periodic auctions. One conclusion is that for large investors, the smaller one benefits a lot from \textit{ad-hoc} auctions compared to periodic auctions and CLOB, while the larger one is quite indifferent between \textit{ad-hoc} and periodic auctions.\\

The parameter $q$ plays quite an important role since it dictates the probability of an auction to be balanced out. We refer to Appendix \ref{app:q005} for the value functions and average durations with $q=0.005$.  For this value of the penalties, auctions are rarely balanced and the CLOB design becomes more relevant. In this configuration, allowing continuous trading is always beneficial if $\hat{n}=3$ since it mitigates price impact during the auction. The value functions are close to those observed with periodic auctions but have the attractive property of having very long periods of continuous trading: the price remains constant for a long time while with periodic auctions, auctions are triggered as soon as someone needs to trade. Also the larger player still benefits a lot from being able to trade with the market maker.

\section*{Acknowledgments}
The authors gratefully acknowledge the financial support of the ERC Grant 679836 Staqamof and the Chaire {\it Analytics and Models for Regulation}. They are also thankful to Alexandra Givry, Iris Lucas and Eric Va for insightful discussions.

\newpage
\bibliographystyle{apalike}
\bibliography{b}

\begin{thebibliography}{}

\bibitem[A{\"i}d et~al., 2020]{basei2016nonzerosum}
A{\"i}d, R., Basei, M., Callegaro, G., Campi, L., and Vargiolu, T. (2020).
\newblock Nonzero-sum stochastic differential games with impulse controls: A
  verification theorem with applications.
\newblock {\em Math. Oper. Res.}, 45:205--232.

\bibitem[Almgren and Chriss, 2001]{almgren2001optimal}
Almgren, R. and Chriss, N. (2001).
\newblock Optimal execution of portfolio transactions.
\newblock {\em Journal of Risk}, 3:5--40.

\bibitem[Aquilina et~al., 2020]{budish2}
Aquilina, M., Budish, E.~B., and O'Neill, P. (2020).
\newblock { Quantifying the high-frequency trading ``arms race'': a simple new
  methodology and estimates}.
\newblock {\em Chicago Booth Research Paper}, 20(16).

\bibitem[Basei et~al., 2019]{basei2019nonzerosum}
Basei, M., Cao, H., and Guo, X. (2019).
\newblock Nonzero-sum stochastic games with impulse controls.
\newblock {\em arXiv:1901.08085}.

\bibitem[Budish et~al., 2015]{budish}
Budish, E., Cramton, P., and Shim, J. (2015).
\newblock { The high-frequency trading arms race: frequent batch auctions as a
  market design response}.
\newblock {\em The Quarterly Journal of Economics}, 130(4):1547--1621.

\bibitem[Carmona and Delarue, 2018]{carmona}
Carmona, R. and Delarue, F. (2018).
\newblock {\em {Probabilistic Theory of Mean Field Games with Applications I}}.
\newblock {Springer}.

\bibitem[Coquet and Toldo, 2007]{coquet2007}
Coquet, F. and Toldo, S. (2007).
\newblock Convergence of values in optimal stopping and convergence of optimal
  stopping times.
\newblock {\em Electron. J. Probab.}, 12:207--228.

\bibitem[Cvitanic and Karatzas, 1993]{cvitanic1993}
Cvitanic, J. and Karatzas, I. (1993).
\newblock Hedging contingent claims with constrained portfolios.
\newblock {\em Ann. Appl. Probab.}, 3(3):652--681.

\bibitem[Delattre et~al., 2013]{delattre2013estimating}
Delattre, S., Robert, C.~Y., and Rosenbaum, M. (2013).
\newblock Estimating the efficient price from the order flow: a {B}rownian cox
  process approach.
\newblock {\em Stochastic Processes and their Applications}, 123(7):2603--2619.

\bibitem[Du and Zhu, 2017]{duzhu}
Du, S. and Zhu, H. (2017).
\newblock What is the optimal trading frequency in financial markets?
\newblock {\em The Review of Economic Studies}, 84(4):1606--1651.

\bibitem[El~Euch et~al., 2018]{euch2018optimal}
El~Euch, O., Mastrolia, T., Rosenbaum, M., and Touzi, N. (2018).
\newblock Optimal make-take fees for market making regulation.
\newblock {\em arXiv:1805.02741}.

\bibitem[Farmer and Skouras, 2012]{farmer}
Farmer, D. and Skouras, S. (2012).
\newblock { Review of the benefits of a continuous market vs. randomised stop
  auctions and of alternative priority rules (policy options 7 and 12)}.
\newblock {\em BIS. Business and management}.

\bibitem[Fricke and Gerig, 2018]{fricke2018too}
Fricke, D. and Gerig, A. (2018).
\newblock Too fast or too slow? {D}etermining the optimal speed of financial
  markets.
\newblock {\em Quantitative Finance}, 18(4):519--532.

\bibitem[Garbade and Silber, 1979]{garbade}
Garbade, K. and Silber, W.~L. (1979).
\newblock Structural organization of secondary markets: Clearing frequency,
  dealer activity and liquidity risk.
\newblock {\em Journal of Finance}, 34(3):577--93.

\bibitem[Grigorova and Quenez, 2017]{grigorova2017optimal}
Grigorova, M. and Quenez, M.-C. (2017).
\newblock Optimal stopping and a non-zero-sum {D}ynkin game in discrete time
  with risk measures induced by {BSDE}s.
\newblock {\em Stochastics}, 89(1):259--279.

\bibitem[Hamad{\`e}ne and Mu, 2014]{hamadne2014bangbang}
Hamad{\`e}ne, S. and Mu, R. (2014).
\newblock Bang--bang-type {N}ash equilibrium point for {M}arkovian nonzero-sum
  stochastic differential game.
\newblock {\em Comptes Rendus Mathematiques}, 352.

\bibitem[Jacod and Shiryaev, 1987]{jacodshiryav}
Jacod, J. and Shiryaev, A.~N. (1987).
\newblock {\em Limit theorems for stochastic processes}, volume 288 of {\em
  Grundlehren der mathematischen Wissenschaften}.
\newblock Springer-Verlag.

\bibitem[Jusselin et~al., 2019]{jusselin2019optimal}
Jusselin, P., Mastrolia, T., and Rosenbaum, M. (2019).
\newblock Optimal auction duration: A price formation viewpoint.
\newblock {\em arXiv:1906.01713}.

\bibitem[Laraki and Solan, 2005]{larakisolan}
Laraki, R. and Solan, E. (2005).
\newblock The value of zero-sum stopping games in continuous time.
\newblock {\em SIAM J. Control and Optimization}, 43:1913--1922.

\bibitem[Laraki and Solan, 2010]{larakisolan2}
Laraki, R. and Solan, E. (2010).
\newblock Equilibrium in two-player non-zero-sum {D}ynkin games in continuous
  time.
\newblock {\em Stochastics An International Journal of Probability and
  Stochastic Processes}, 85.

\bibitem[Ludkovski, 2010]{ludkovski}
Ludkovski, M. (2010).
\newblock Stochastic switching games and duopolistic competition in emissions
  markets.
\newblock {\em SIAM Journal on Financial Mathematics}, 2.

\bibitem[Nash, 1950]{Nash48}
Nash, J.~F. (1950).
\newblock Equilibrium points in n-person games.
\newblock {\em Proceedings of the National Academy of Sciences}, 36(1):48--49.

\bibitem[Riedel and Steg, 2017]{riedel2017subgame}
Riedel, F. and Steg, J.-H. (2017).
\newblock Subgame-perfect equilibria in stochastic timing games.
\newblock {\em Journal of Mathematical Economics}, 72:36--50.

\bibitem[Robert and Rosenbaum, 2011]{robert2011new}
Robert, C. and Rosenbaum, M. (2011).
\newblock A new approach for the dynamics of ultra-high-frequency data: The
  model with uncertainty zones.
\newblock {\em Journal of Financial Econometrics}, 9(2):344--366.

\bibitem[Shmaya and Solan, 2014]{shmayasolan}
Shmaya, E. and Solan, E. (2014).
\newblock Equivalence between random stopping times in continuous time.
\newblock {\em arXiv:1403.7886}.

\bibitem[Solan et~al., 2012]{solan2012random}
Solan, E., Tsirelson, B., and Vieille, N. (2012).
\newblock Random stopping times in stopping problems and stopping games.
\newblock {\em arXiv:1211.5802}.

\bibitem[Stoikov, 2018]{stoikov2018micro}
Stoikov, S. (2018).
\newblock The micro-price: a high-frequency estimator of future prices.
\newblock {\em Quantitative Finance}, 18(12):1959--1966.

\bibitem[Touzi, 2013]{booktouzi}
Touzi, N. (2013).
\newblock {\em Optimal Stochastic Control, Stochastic Target Problems, and
  Backward SDE}, volume~29.
\newblock Fields Institute Monographs.

\bibitem[Touzi and Vieille, 2002]{touzivieil}
Touzi, N. and Vieille, N. (2002).
\newblock Continuous-time dynkin games with mixed strategies.
\newblock {\em SIAM Journal on Control and Optimization}, 41(4):1073--1088.

\bibitem[Von~Neumann and Morgenstern, 1947]{NeumannMorg}
Von~Neumann, J. and Morgenstern, O. (1947).
\newblock {\em Theory of Games and Economic Behavior}.
\newblock Princeton University Press.

\end{thebibliography}

\newpage

\newpage
\begin{appendix}
\label{appdx}
\section{Proof of Proposition \ref{prop:DPP}}\label{app:proofpropsub} 
The proof of the existence of an open-loop Nash equilibrium for the sub-game \eqref{sub-game} is a direct extension of the results of \cite{hamadne2014bangbang} and \cite{jusselin2019optimal}, taking into consideration the continuous trading phase, together with a smooth decomposition of the value function at the optimum.\\

We focus on the dynamic programming principle \eqref{DPP}. We follow the same argument as in \cite{cvitanic1993} Proposition 6.2 or \cite{euch2018optimal} Lemma A.4.
First, let us write
$\chi^a: = \mathcal{C}^a_{auc}+\frac{N^a_{\tau,\tau+h}(N^a_{\tau,\tau+h}-N^b_{\tau,\tau+h})}{K}.$
From the definition of an OLNE, we have
\begin{equation*}
    J^a(x_0,(\tau^{a,*},\lambda^{a,*}),(\tau^{b,*},\lambda^{b,*})) = \underset{\lambda^a\in\mathcal{U}_{[0,\tau]}}{\inf}\; \mathbb{E}^{\lambda^a,\lambda^{b,*}}\Big[\frac{L^a_\tau+\chi^a}{\tau+h}\Big].
\end{equation*}
Using the tower property we get
\begin{equation}\label{TowerProp}
    J^a(x_0,(\tau^{a,*},\lambda^{a,*}),(\tau^{b,*},\lambda^{b,*})) =\underset{\lambda^a\in\mathcal{U}_{[0,\tau]}}{\inf}\;\mathbb{E}^{\lambda^a,\lambda^{b,*}}\Big[\frac{L^a_\tau+\mathbb{E}^{\lambda^a,\lambda^{b,*}}_\tau[\chi^a]}{\tau+h}\Big].
\end{equation}
Moreover applying Bayes' formula leads to
\begin{equation}\label{Bayes}
\begin{split}
    \mathbb{E}^{\lambda^a,\lambda^{b,*}}_\tau[\chi^a] &= \mathbb{E}^{\lambda^0,\lambda^0}_\tau\big[\frac{\Psi_{T+h}^{\lambda^a,\lambda^{b,*}}}{\Psi_\tau^{\lambda^a,\lambda^{b,*}}}\chi^a\big]=\mathbb{E}^{\lambda^{a},\lambda^{b,*}_{[\tau,T+h]}}_\tau[\chi^a]\geq \underset{\mu^a\in\mathcal{U}_{[\tau,T+h]}}{\text{essinf}}\; \mathbb{E}^{\mu^{a},\lambda^{b,*}_{[\tau,T+h]}}_\tau[\chi^a].
\end{split}
\end{equation}
Therefore, using both \eqref{TowerProp} and \eqref{Bayes} we obtain

\begin{equation}\label{DPPgeq}
  J^a(x_0,(\tau^{a,*},\lambda^{a,*}),(\tau^{b,*},\lambda^{b,*}))\geq \underset{\lambda^a\in\mathcal{U}_{[0,\tau]}}{\inf} \; \mathbb{E}^{\lambda^a,\lambda^{b,*}}\Big[\frac{L^a_\tau+\xi_\tau^a}{\tau+h}\Big].
\end{equation}
Conversely, let $\lambda^a\in\mathcal{U}_{[0,\tau]}$ and $\mu^a\in\mathcal{U}_{[\tau,T+h]}$. Recalling the definition $(\lambda^a\otimes_\tau\mu^a)_u := \lambda^a_u\mathbf{1}_{u\leq\tau}+\mu^a_u\mathbf{1}_{\tau<u}$, we get that $\lambda^a\otimes_\tau\mu^a\in\mathcal{U}$ and
\begin{equation}\label{inegNashleq}
\begin{split}
    J^a(x_0,(\tau^{a,*},\lambda^{a,*}),(\tau^{b,*},\lambda^{b,*}))&\leq J^a(x_0,(\tau^{a,*},\lambda^a\otimes_\tau\mu^a),(\tau^{b,*},\lambda^{b,*}))\\
    &=\mathbb{E}^{\lambda^a\otimes_\tau\mu^a,\lambda^{b,*}}\Big[\frac{L^a_\tau+\chi^a}{\tau+h}\Big]\\
    &=\mathbb{E}^{\lambda^a\otimes_\tau\mu^a,\lambda^{b,*}}\Big[\frac{L^a_\tau+\mathbb{E}^{\lambda^a\otimes_\tau\mu^a,\lambda^{b,*}}_\tau[\chi^a]}{\tau+h}\Big].
\end{split}
\end{equation}
Thus, remarking that $\frac{\Psi_{T+h}^{\lambda^a\otimes_\tau\mu^a,\lambda^{b,*}}}{\Psi_\tau^{\lambda^a\otimes_\tau\mu^a,\lambda^{b,*}}}=\Psi_{\tau,T+h}^{\mu^a,\lambda^{b,*}}$, we deduce
\begin{equation*}
    \mathbb{E}^{\lambda^a\otimes_\tau\mu^a,\lambda^{b,*}}_\tau[\chi^a]=\mathbb{E}^{0,0}_\tau\Big[\Psi_{\tau,T+h}^{\mu^a,\lambda^{b,*}} \chi^a\Big]
    =\mathbb{E}_\tau^{\mu^a,\lambda^{b,*}_{[\tau,T+h]}}[\chi^a].
\end{equation*}
Using Lemma A.3 of \cite{euch2018optimal} (extended to stopping times), we can build a sequence $(\mu^a_n)_{n\in\mathbb{N}}$ such that 
\begin{equation}\label{converge:xitau}
    \mathbb{E}_\tau^{\mu^a_n,\lambda^{b,*}_{[\tau,T+h]}}[\chi^a]\searrow \xi^a_\tau.
\end{equation}
By the monotonous convergence theorem together with \eqref{inegNashleq} and \eqref{converge:xitau}, we obtain
\begin{equation}\label{DPPleq}
    J^a(x_0,(\tau^{a,*},\lambda^{a,*}),(\tau^{b,*},\lambda^{b,*})) \leq \underset{\lambda^a\in\mathcal{U}_{[0,\tau]}}{\hbox{inf}}\;\mathbb{E}^{\lambda^a,\lambda^{b,*}}\Big[\frac{L^a_\tau+\xi^a_\tau}{\tau+h}\Big].
\end{equation}
We can prove similar results for Player $b$. We conclude from  \eqref{DPPgeq} and \eqref{DPPleq} and the corresponding inequalities for Player $b$ that the dynamic programming principle \eqref{DPP} holds.\\

Finally, let $((\tau^{a,*},\lambda^{a,*}),(\tau^{b,*},\lambda^{b,*}))$ be an OLNE. Then, from Definition \ref{def:nash1}, for any $\lambda^a\in \mathcal U$ we have  
 \begin{equation}\label{inegNashcorollary}
 J^a(x_0,(\tau^{a,*},\lambda^{a,*}),(\tau^{b,*},\lambda^{b,*}))\leq J^a(x_0,(\tau^{a,*},\lambda^a),(\tau^{b,*},\lambda^{b,*})).
 \end{equation}
 Assume that there exists $\mu^a\neq \lambda^{a,*}_{\tau,T+h}$ such that
 \[ \mathbb{E}^{\mu^{a},\lambda^{b,*}_{[\tau,T+h]}}_\tau\Big[\mathcal{C}^a_{auc}+\frac{N^a_{\tau,\tau+h}(N^a_{\tau,\tau+h}-N^b_{\tau,\tau+h})}{K}\Big] < \mathbb{E}^{\lambda^{a,*}_{[\tau,T+h]},\lambda^{b,*}_{[\tau,T+h]}}_\tau\Big[\mathcal{C}^a_{auc}+\frac{N^a_{\tau,\tau+h}(N^a_{\tau,\tau+h}-N^b_{\tau,\tau+h})}{K}\Big] .\]
 Then
 \[\xi_{\tau}^a< \mathbb{E}^{\lambda^{a,*}_{[\tau,T+h]},\lambda^{b,*}_{[\tau,T+h]}}_\tau\Big[\mathcal{C}^a_{auc}+\frac{N^a_{\tau,\tau+h}(N^a_{\tau,\tau+h}-N^b_{\tau,\tau+h})}{K}\Big] .\]
 Let $\lambda^a\in \mathcal U_{[0,\tau]}$, we have
 \[ \mathbb{E}^{\lambda^a,\lambda^{b,*}}\big[\frac{L^a_\tau+\xi^a_\tau}{\tau+h}\big] < \mathbb{E}^{\lambda^a,\lambda^{b,*}}\Big[\frac{L^a_\tau+\mathbb{E}^{\lambda^{a,*}_{[\tau,T+h]},\lambda^{b,*}_{[\tau,T+h]}}_\tau\Big[\mathcal{C}^a_{auc}+\frac{N^a_{\tau,\tau+h}(N^a_{\tau,\tau+h}-N^b_{\tau,\tau+h})}{K}\Big]}{\tau+h}\Big]. \]
Therefore
 \[\inf_{\lambda^a\in \mathcal U_{[0,\tau]}} \mathbb{E}^{\lambda^a,\lambda^{b,*}}\big[\frac{L^a_\tau+\xi^a_\tau}{\tau+h}\big] < \inf_{\substack{\lambda^a \in \mathcal U_{[0,\tau]},\\ \lambda_t^a\mathbf{1}_{\tau<t\leq T+h}= (\lambda^{a,*}_{[\tau,T]})_t}} \mathbb{E}^{\lambda^a,\lambda^{b,*}}\Big[\frac{L^a_\tau+\mathbb{E}^{\lambda^{a,*}_{[\tau,T+h]},\lambda^{b,*}_{[\tau,T+h]}}_\tau\Big[\mathcal{C}^a_{auc}+\frac{N^a_{\tau,\tau+h}(N^a_{\tau,\tau+h}-N^b_{\tau,\tau+h})}{K}\Big]}{\tau+h}\Big].\]
 Using the dynamic programming principle \eqref{DPP} together with \eqref{inegNashcorollary} we deduce
 \[ J^a(x_0,(\tau^{a,*},\lambda^{a,*}),(\tau^{b,*},\lambda^{b,*})) < J^a(x_0,(\tau^{a,*},\lambda^{a,*}),(\tau^{b,*},\lambda^{b,*})) ,\]
 leading to a contradiction. We have a similar result for Player $b$. We conclude that $(\lambda^{a,*}_{[\tau,T+h]},\lambda^{b,*}_{[\tau,T+h]})$ is an open-loop Nash equilibrium for the sub-game \eqref{sub-game}.
 
 \section{Existence of an OLNE for $[0,T]-$valued stopping times: a verification theorem}\label{app:verif}

We need to extend the results of \cite{basei2016nonzerosum} and \cite{basei2019nonzerosum} to include jump processes and expectations given by non-trivial risk measures.\\

Following the ideas of \cite{jusselin2019optimal}, we turn to the definition of the Hamiltonian related to the optimisation of the players during the auction. For any $z\in \mathbb{R}^2$ and $\epsilon_a,\epsilon_b\in[\lambda_-,\lambda_+]$, we set
\begin{equation*}
    \begin{split}
        \lambda^*_a(z,\epsilon_a) =& 1_{z_1>0}\lambda_-+1_{z_1<0}\lambda_++1_{z_1=0}\epsilon_a\\
        \lambda^*_b(z,\epsilon_b) =& 1_{z_2<0}\lambda_-+1_{z_2>0}\lambda_++1_{z_2=0}\epsilon_b.
    \end{split}
\end{equation*}
As $z_1\lambda^*_a(z,\epsilon_a)$ and $z_2\lambda^*_b(z,\epsilon_b)$ do not depend on $\epsilon_a$ and $\epsilon_b$, we simply denote them by $z_1\lambda^*_a(z)$ and $z_2\lambda^*_b(z)$.
For any $z,\Tilde{z}\in \mathbb{R}^2$ and any $\epsilon\in[\lambda_-,\lambda_+]$, we set 
\begin{equation*}
    \begin{split}
        H^{a,*}(p^*,z,\Tilde{z},\epsilon) =&1_{p^*< P}z_1\lambda^*_a(z) +1_{p^*> P}z_2\lambda^*_b(\Tilde{z},\epsilon)\\
        H^{b,*}(p^*,z,\Tilde{z},\epsilon) =&1_{p^*>P}z_2\lambda^*_b(z) +1_{p^*<P}z_1\lambda^*_a(\Tilde{z},\epsilon).
    \end{split}
\end{equation*}

We define for any $j\in \{a,b\}$
\begin{itemize}
\item  $\mathcal L^j$ from $[0,T]\times \mathbb R\times \mathbb N\times \mathbb N\times \mathbb R\times \mathbb R^+\times \mathbb R^+\times \mathbb R\times \mathbb R\times [\lambda_-,\lambda_+]$ into $\mathbb R$ by
\[\mathcal L^j(t,p^*,n^a,n^b,d_t,d_2,\ell^a,\ell^b,d,\tilde d,\epsilon):=d_t+\frac12\sigma^2 d_2+ q\sum_{i\in \{a,b\}}(v^it-n^i)^2\ell^i+H^{j,*}(p^*,d,\tilde d,\epsilon), \]
where $(t,p^*,n^a,n^b,d_t,d_2,\ell^a,\ell^b,d,\tilde d,\epsilon)\in [0,T]\times \mathbb N\times \mathbb N  \times \mathbb R\times \mathbb R\times \mathbb R\times \mathbb R^+\times \mathbb R^+\times \mathbb R\times \mathbb R\times [\lambda_-,\lambda_+].$
\item the function $G^j$ from $[0,T]\times \mathbb R\times\mathbb R\times\mathbb N\times \mathbb N$ into $\mathbb R$ by 
\[ G^j(t,l^j,n^a,n^b):=\frac{\beta_j l^j+g_j^{\text{first}}(t,n^a,n^b)}{t+h},\] where $\beta_j=\mathbf 1_{j=a}-\mathbf 1_{j=b}.$

\item for any map $U: (t,p^*,l^a,l^b,n^a,n^b)\in [0,T]\times\mathbb{R}\times\mathbb{R}_+\times\mathbb{R}_+\times\mathbb{N}\times\mathbb{N}\longrightarrow \mathbb{R}$, the domains
\[\Gamma^j(U)=\{ (t,p^*,n^a,n^b,l^a,l^b)\in [0,T)\times \mathbb R\times\mathbb N\times \mathbb N\times \mathbb R\times \mathbb R,\; \beta_j G^j> \beta_j U\},\]
\[\partial \Gamma^j(U)=\{ (t,p^*,n^a,n^b,l^a,l^b)\in [0,T)\times \mathbb R\times\mathbb N\times \mathbb N\times \mathbb R\times \mathbb R,\; G^j=U\}.\]
together with its derivative operators 
\begin{align*}
    &D^aU(t,p^*,l^a,l^b,n^a,n^b) = U(t,p^*,l^a+(P-p^*),l^b,n^a+1,n^b)-U(t,p^*,l^a,l^b,n^a,n^b)\\
    &D^bU(t,p^*,l^a,l^b,n^a,n^b) = U(t,p^*,l^a,l^b+(p^*-P),n^a,n^b+1)-U(t,p^*,l^a,l^b,n^a,n^b)\\
    &DU = (D^aU,D^bU)^T.
\end{align*}

\end{itemize}
The quantity $D^j U$ describes the change in the value of the process $(U(t,P^*_t,L^a_t,L^b_t,N^a_t,N^b_t))_{t\in[0,T]}$ when Player $j$, $j\in \{a,b\}$, sends an order which triggers a trade at the fixed price $P$. The set $\Gamma^j(U)$ is the domain on which Player $j$ would rather have a game of value $U$ than trigger an auction alone (and thus have a game of value $G^j$). The set $\partial \Gamma^j(U)$ is the domain on which he is indifferent. We have the following result.

\begin{theorem}\label{thm:verifN1}
Let $V^a$ and $V^b$ be two functions of $(t,p^*,l^a,l^b,n^a,n^b)$ from $[0,T]\times\mathbb{R}\times\mathbb{R}_+\times\mathbb{R}_+\times\mathbb{N}\times\mathbb{N}$ into $\mathbb{R}$. Assume that there exist two maps $\varepsilon_a$, $\varepsilon_b$ from $\mathbb{R}_+\times\mathbb{R}\times\mathbb{R}_+\times\mathbb{R}_+\times\mathbb{N}\times\mathbb{N}$ into $[\lambda_-,\lambda_+]$ such that
\begin{enumerate}[(i)]
    \item $V^a$ and $V^b$ are $\mathcal{C}^1$ in time on $[0,T)$ and in their third (on $\mathbb{R}$) and fourth arguments (on $\mathbb{R}$), $\mathcal{C}^2$ in their second argument (on $\mathbb{R}$), and are solutions to the following variational system:
\begin{equation}\label{sys_nash}
    \begin{cases}
        \max\big\{ -\mathcal L^a(t,p^*,n^a,n^b,\partial_t V^a,\partial^2_{pp}V^a,\partial_{l^a}V^a,\partial_{l^b}V^a,DV^a,DV^b,\varepsilon_b),V^a-G^a\big\} =0\text{, on } \Gamma^b(V^b) \\[0.8em]
        \frac{l^a+g^{\text{second}}_a(t,n^a,n^b)}{t+h} = V^a\text{ on }\partial \Gamma^b(V^b)\\[0.8em]
        V^a(T,p^*,l^a,l^b,n^a,n^b)=\frac{l^a+g^{\text{T}}_a(T,n^a,n^b)}{T+h} \\[0.8em]
        \min\big\{-\mathcal L^b(t,p^*,n^a,n^b,\partial_t V^b,\partial^2_{pp}V^b,\partial_{l^a}V^b,\partial_{l^b}V^b,DV^b,DV^a,\varepsilon_a),V^b-G^b\big\} =0\text{, on } \Gamma^a(V^a) \\[0.8em]
        \frac{-l^b+g^{\text{second}}_b(t,n^a,n^b)}{t+h} = V^b\text{ on }\partial \Gamma^a(V^a) \\[0.8em]
        V^b(T,p^*,l^a,l^b,n^a,n^b) =\frac{-l^b+g^{\text{T}}_b(T,n^a,n^b)}{T+h}.
    \end{cases}
\end{equation}
    \item $g^{\text{second}}_a\leq g^{\text{sim}}_a$ on $\partial \Gamma^b(V^b)$ and $g^{\text{second}}_b\geq g^{\text{sim}}_b$ on $\partial \Gamma^a(V^a)$. 
\end{enumerate}
Then\footnote{Here $\hat{\lambda}^a$ and $\hat{\lambda}^b$ denote the strategies played by Player $a$ and Player $b$ during the auction given by Proposition \ref{prop:DPP}.} $((\tau^a,\lambda_a^*(DV^a,\varepsilon_a)\otimes_\tau\hat{\lambda}^a),(\tau^b,\lambda_b^*(DV^b,\varepsilon_b)\otimes_\tau\hat{\lambda}^b))$ is an OLNE in the sense of Definition \ref{def:nash1}, where
\begin{equation*}
\begin{split}
    \tau^a&=\inf\{t\geq 0\text{, } (t,P^*_t,L^a_t,L^b_t,N^a_t,N^b_t)\in \partial \Gamma^a(V^a) \}\\
    \tau^b&=\inf\{t\geq 0\text{, } (t,P^*_t,L^a_t,L^b_t,N^a_t,N^b_t)\in \partial \Gamma^b(V^b) \}.
\end{split}
\end{equation*}
\end{theorem}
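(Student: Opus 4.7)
The plan is to carry out a classical verification argument, adapting the non-zero-sum stopping approach of \cite{basei2016nonzerosum,basei2019nonzerosum} to our jump-diffusion framework and reducing the equilibrium check to two independent one-sided optimality conditions extracted from \eqref{sys_nash}. Proposition \ref{prop:DPP} is the entry point: since $\tilde\lambda^{a,*}$ and $\tilde\lambda^{b,*}$ are built as extensions of the sub-game open-loop Nash equilibrium $(\hat\lambda^a,\hat\lambda^b)$ on $[\tau,T+h]$, it is enough to show that $V^a(0,x_0)$ coincides with
\[
\underset{\tau^a\in\mathcal T_{0,T},\; \lambda^a\in\mathcal U}{\inf}\mathbb{E}^{\lambda^a,\lambda^{b,*}}\Big[\frac{L^a_\tau+\xi^a_\tau}{\tau+h}\Big], \qquad \tau := \tau^a\wedge \tau^{b,*},
\]
with the infimum attained by $(\tau^{a,*},\lambda^*_a(DV^a,\varepsilon_a))$, together with the symmetric sup-statement for Player $b$; Definition \ref{def:nash1} then delivers the OLNE property.

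For an arbitrary admissible deviation $(\tau^a,\lambda^a)$, set $\tau := \tau^a\wedge\tau^{b,*}$ and apply the It\^o formula for jump-diffusions (see \cite{jacodshiryav}) to $V^a(t,P^*_t,L^a_t,L^b_t,N^a_t,N^b_t)$ on $[0,\tau]$ under $\mathbb{P}^{\lambda^a,\lambda^{b,*}}$. The $\mathcal C^{1,2,1,1}$ regularity from hypothesis (i) justifies the calculus; combined with the polynomial growth of $V^a$ inherited from Proposition \ref{prop:DPP}, with intensities confined to $[\lambda_-,\lambda_+]$ and with a standard localisation, the compensated Brownian and Poisson martingales vanish in expectation. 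The remaining deterministic integrand equals $\mathcal{L}^{\lambda^a,\lambda^{b,*}}V^a$, the generator carrying the jump contributions $\mathbf 1_{P>P^*_t}\lambda^a_t D^aV^a + \mathbf 1_{P<P^*_t}\lambda^{b,*}_t D^bV^a$ with $\lambda^{b,*}_t=\lambda^*_b(DV^b,\varepsilon_b)$. Since $\lambda^*_a(DV^a)$ is the minimiser of $\epsilon\mapsto \epsilon\, D^aV^a$ on $[\lambda_-,\lambda_+]$, a pointwise comparison gives $\mathcal L^a(\cdot,DV^a,DV^b,\varepsilon_b)\leq \mathcal{L}^{\lambda^a,\lambda^{b,*}}V^a$, and because $[0,\tau^{b,*})\subset \Gamma^b(V^b)$ the first line of \eqref{sys_nash} forces $\mathcal L^a\geq 0$ on this set; integrating and taking expectations then yields $V^a(0,x_0)\leq \mathbb{E}^{\lambda^a,\lambda^{b,*}}[V^a(\tau,X_\tau)]$.

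It remains to dominate $V^a(\tau,X_\tau)$ by $\tfrac{L^a_\tau+\xi^a_\tau}{\tau+h}$ on every event. On $\{\tau^{b,*}<\tau^a\}$ we are on $\partial\Gamma^b(V^b)$ and the second line of \eqref{sys_nash} yields $V^a(\tau,X_\tau)=\tfrac{L^a_\tau+g^{\text{second}}_a(\tau,N^a_\tau,N^b_\tau)}{\tau+h}=\tfrac{L^a_\tau+\xi^a_\tau}{\tau+h}$. On $\{\tau^a<\tau^{b,*}\}$ one stays in $\Gamma^b(V^b)$, and the variational inequality $V^a\leq G^a$ of \eqref{sys_nash} gives $V^a(\tau,X_\tau)\leq \tfrac{L^a_\tau+g^{\text{first}}_a}{\tau+h}=\tfrac{L^a_\tau+\xi^a_\tau}{\tau+h}$. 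On the typically null event $\{\tau^a=\tau^{b,*}<T\}$, assumption (ii) ($g^{\text{second}}_a\leq g^{\text{sim}}_a$ on $\partial\Gamma^b(V^b)$) yields $V^a(\tau,X_\tau)\leq\tfrac{L^a_\tau+g^{\text{sim}}_a}{\tau+h}=\tfrac{L^a_\tau+\xi^a_\tau}{\tau+h}$; on $\{\tau=T\}$ the terminal condition of \eqref{sys_nash} provides equality. The infimum is attained by $(\tau^{a,*},\lambda^*_a(DV^a,\varepsilon_a))$ since, on $[0,\tau^{a,*}\wedge\tau^{b,*})\subset\Gamma^a(V^a)\cap\Gamma^b(V^b)$, the variational inequality degenerates to $-\mathcal L^a=0$, the choice $\lambda^a=\lambda^*_a(DV^a,\varepsilon_a)$ turns the bang-bang comparison into an equality, and at $\tau^{a,*}\wedge\tau^{b,*}$ the boundary evaluations match $\xi^a_\tau$ outside the null simultaneous-hitting event. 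The Player $b$ side is strictly symmetric: one swaps $\inf$ with $\sup$, uses the sign convention $\beta_b=-1$, and invokes the second half of \eqref{sys_nash} together with the dual part of condition (ii).

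The main obstacle of this plan is the justification of It\^o's formula globally on $[0,\tau]$ and the vanishing of the compensated stochastic integrals in expectation: the polynomial growth of $V^a,V^b$ combined with $\tau\leq T$, intensities in $[\lambda_-,\lambda_+]$ and Gaussian tails of $P^*$ yield finite moments of the state, but uniform integrability must be obtained through an explicit localising sequence. A minor additional subtlety is the treatment of $\{\tau^{a,*}=\tau^{b,*}\}$, negligible for generic boundaries driven by a non-degenerate Brownian component and independent Poisson drivers, and otherwise harmless thanks to condition (ii). Note that the theorem is purely a verification statement: the existence and regularity of classical solutions $V^a,V^b$ to \eqref{sys_nash} are taken as hypotheses rather than conclusions.
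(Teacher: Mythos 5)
Your proposal is correct and follows essentially the same route as the paper's proof: freeze Player $b$'s strategy, invoke Proposition \ref{prop:DPP} to reduce to a single mixed control/stopping problem with payoff $\xi^a_\tau$, and verify via It\^o's formula and the quasi-variational system \eqref{sys_nash} that $V^a$ dominates (and is attained by) the candidate controls, with condition (ii) handling the boundary and simultaneous-stopping cases exactly as the paper does. The only difference is one of exposition: you spell out the submartingale inequality and the case analysis at $\tau$ explicitly, where the paper defers to ``classical verification arguments'' from \cite{booktouzi} and \cite{basei2016nonzerosum}.
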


\begin{remark}\label{rem:condverif}
The differentiability conditions are very strong. In a non bang-bang case, extending the results of \cite{basei2016nonzerosum} to the case of jump processes, it is possible to show that $\mathcal{C}^1$-differentiability in the third and fourth arguments is enough if $\partial \Gamma^a(V^a)$ and $\partial \Gamma^b(V^b)$ are Lipschitz surfaces.
Nevertheless, note that in Theorem \ref{thm:verifN1}, Condition $(ii)$ is easier to meet when $\hat{n}=\hat{n}_{ab}=0$. This is because the stopping domains are allowed to intersect since for $i=a,b$ we have $g_i^{\text{first}}=g_i^{\text{sim}}=g_i^{\text{second}}$.
\end{remark}

\begin{remark}
Suppose that there exists a solution to System \eqref{sys_nash} and that the players play the associated Nash equilibrium. Then, as soon as $g_a^{\text{first}}\neq g_a^{\text{second}}$ and $g_b^{\text{first}}\neq g_b^{\text{second}}$, necessarily we must have
\[\partial \Gamma^a(V^a)\cap \partial \Gamma^b(V^b)=\emptyset,\] \textit{i.e.} the two players never trigger an auction at the same time. In practice, we have $g^{\text{first}}_a\neq g^{\text{second}}_a$ and $g^{\text{first}}_b\neq g^{\text{second}}_b$ if $\hat n\neq 0$ and $h$ is not too large. Also, note that from our numerical investigations, it seems that there is no uniqueness for the solution of System \eqref{sys_nash}.
\end{remark}

\begin{proof}[Proof of the verification theorem \ref{thm:verifN1}]
Suppose that the functions $V^a$ and $V^b$ satisfy the above conditions with the maps $\varepsilon_a$ and $\varepsilon_b$ and that Player $b$ plays the strategy $(\tau^b,\lambda_b^*(DV^b,\varepsilon_b)\otimes_\tau\hat{\lambda}^b)$. Using Proposition \ref{DPP}, we see that it is optimal for Player $a$ to play $\hat{\lambda}^a$ after $\tau$, thus obtaining $\xi^a_\tau$ at $\tau$. For $t\leq \tau^b$, we write 
\[V^{a,*}_t = \underset{\tau^{a,*}\in\mathcal{T}_{t,T},\lambda^a\in \mathcal U_{[t,T+h]}}{\einf} \;\mathbb{E}^{\lambda^a,\lambda^{b,*}}_t\big[\frac{L^a_\tau+\xi^a_\tau}{\tau+h}\big].\]
On $\partial \Gamma^b(V^b)$, as $t\leq \tau^b$ and because of the definition of $\tau^b$, we necessarily have $t=\tau^b$. So Player $b$ triggers an auction, from $(ii)$
\[V^{a,*}_t = \min(\frac{l^a+g^{\text{second}}_a(t,n^a,n^b)}{t+h},\frac{l^a+g^{\text{sim}}_a(t,n^a,n^b)}{t+h}) = \frac{l^a+g^{\text{second}}_a(t,n^a,n^b)}{t+h}\] and Player $a$ does not trigger an auction. On $\Gamma^b(V^b)$, necessarily $t<\tau^b$, \textit{i.e.} Player $b$ does not trigger an auction. Player $a$ then solves a classical optimal stopping problem. Standard dynamic programming arguments yield the quasi-variational equality
\begin{equation*}
    \max\big\{ -\mathcal L^a(t,p^*,n^a,n^b,\partial_t V,\partial^2_{pp}V,\partial_{l^a}V,\partial_{l^b}V,DV,DV^b,\varepsilon_b),V-G^a\big\} =0
\end{equation*}
for the value of Player $a$'s game.
We now detail the computation of the generator $\mathcal L^a$. As the problem depends only on $(t,P^*_t,N^a_t, N^b_t, L^a_t, L^b_t)$, Itô's formula provides the following expression
\[\partial_t V + \frac12\sigma^2 \partial_{pp}^2 V+ q\sum_{i\in \{a,b\}}(v^it-N^i_t)^2\partial_{l^i} V+\underset{\lambda^a\in[\lambda_-,\lambda_+]}{\inf} \lambda^a \mathbf{1}_{P>P^*_t} D^aV +  \lambda_b^*(DV^b,\varepsilon_b)\mathbf{1}_{P<P^*_t}D^bV. \]
The infimum is reached for $\lambda^a = \lambda_a^*(DV,\tilde\varepsilon_a)$, for any $\tilde \varepsilon_a\in[\lambda_-,\lambda_+]$ and \[\lambda_a^*(DV,\tilde\varepsilon_a) \mathbf{1}_{P>P^*_t} D^aV +  \lambda_b^*(DV^b,\varepsilon_b)\mathbf{1}_{P<P^*_t}D^bV = H^{a,*}(P^*_t,DV,DV^b,\varepsilon_b),\] which gives $\mathcal L^a$.\\

We conclude by using classical verification arguments for mixed optimal stopping problems (see for example \cite{booktouzi} or \cite{basei2016nonzerosum}). They enable us to show that necessarily, a classical solution $V^a$ of this quasi-variational equality is equal to the value of Player $a$'s game and that the optimal controls are given by the minimiser $\lambda^a = \lambda_a^*(DV^a,\tilde\varepsilon_a)$ (for any $\tilde \varepsilon_a\in[\lambda_-,\lambda_+]$) and the stopping time
\begin{align*}
    \tau^a=&\inf\{t\geq 0\text{, } V^a(t,P^*_t,L^a_t,L^b_t,N^a_t,N^b_t)\geq G^a(t,P^*_t,L^a_t,L^b_t,N^a_t,N^b_t) \}\\
    =&\inf\{t\geq 0\text{, } (t,P^*_t,L^a_t,L^b_t,N^a_t,N^b_t)\in \partial \Gamma^a(V^a) \}.
\end{align*}
\end{proof}
\section{Proofs of Section \ref{sec::discrete}}
 
\subsection{Proof of Theorem \ref{thm:backward:unimportant}}\label{proof:discrete:unimportant}
We prove by backward induction on $k\in\{\frac{T}{\delta},...,0\}$ that 
\begin{equation*}
\begin{cases}
    U^a_{k} = \underset{\tau^a\in\mathcal{T}^d_{k\delta,T}, \lambda^a\in\mathcal{U}_{[k\delta,T+h]}}{\einf}&\mathbb{E}^{\lambda^a,\Tilde{\lambda}^{b,*}_{k}}_{k\delta}\big[\frac{L^a_\tau+\mathcal{C}^a_{auc}+\frac{N^a_{\tau,\tau+h}(N^a_{\tau,\tau+h}-N^b_{\tau,\tau+h})}{K}}{\tau+h}\big]\\
    U^b_{k} = \underset{\tau^b\in\mathcal{T}^d_{k\delta,T}, \lambda^b\in\mathcal{U}_{[k\delta,T+h]}}{\esup} &\mathbb{E}^{\Tilde{\lambda}^{a,*}_{k},\lambda^b}_{k\delta}\big[\frac{-L^b_\tau-\mathcal{C}^b_{auc}+\frac{N^b_{\tau,\tau+h}(N^a_{\tau,\tau+h}-N^b_{\tau,\tau+h})}{K}}{\tau+h}\big]
\end{cases}
\end{equation*}
and that the above extrema are reached for $\tau^a=\tau^b = \tau^*_k$, $\lambda^a=\Tilde{\lambda}^{a,*}_k$ and $\lambda^b=\Tilde{\lambda}^{b,*}_k$ with 
\begin{equation*}
    \tau^{*}_k=\delta\inf\Big\{l\in\llbracket k,T/\delta\rrbracket\text{, }U^a_l =\frac{L^a_{l\delta}+g^{\text{first}}_a({l\delta},N^a_{l\delta},N^b_{l\delta})}{l\delta+h}\text{ or } U^b_l=\frac{-L^b_{l\delta}+g^{\text{first}}_b({l\delta},N^a_{l\delta},N^b_{l\delta})}{l\delta+h}\Big\},
\end{equation*}
$\Tilde{\lambda}^{a,*}_k=\big(\bigotimes_{l\in\llbracket k,T/\delta-1\rrbracket}\lambda^{a,*}_l\big)\otimes_{\tau^*}\hat{\lambda}^{a}$ and $\Tilde{\lambda}^{b,*}_k=\big(\bigotimes_{l\in\llbracket k,T/\delta-1\rrbracket}\lambda^{b,*}_l\big)\otimes_{\tau^*}\hat{\lambda}^{b}$. Applying this result at $k=0$ we will get that $(\tau^*,\Tilde{\lambda}^{a,*}),(\tau^*,\Tilde{\lambda}^{b,*})$ is an OLNED.
For $k=T/\delta$, the result follows directly from Proposition \ref{prop:DPP}. Suppose the result holds for $k+1\in\llbracket1,T/\delta\rrbracket$. We show that it holds for $k$. We thus assume that
\begin{equation*}
\begin{cases}
    U^a_{k+1} = \underset{\tau^a\in\mathcal{T}^d_{(k+1)\delta,T}, \lambda^a\in\mathcal{U}_{[(k+1)\delta,T+h]}}{\einf}&\mathbb{E}^{\lambda^a,\Tilde{\lambda}^{b,*}_{k+1}}_{(k+1)\delta}\big[\frac{L^a_\tau+\mathcal{C}^a_{auc}+\frac{N^a_{\tau,\tau+h}(N^a_{\tau,\tau+h}-N^b_{\tau,\tau+h})}{K}}{\tau+h}\big]\\
    U^b_{k+1} = \underset{\tau^b\in\mathcal{T}^d_{(k+1)\delta,T}, \lambda^b\in\mathcal{U}_{[(k+1)\delta,T+h]}}{\esup} &\mathbb{E}^{\Tilde{\lambda}^{a,*}_{k+1},\lambda^b}_{(k+1)\delta}\big[\frac{-L^b_\tau-\mathcal{C}^b_{auc}+\frac{N^b_{\tau,\tau+h}(N^a_{\tau,\tau+h}-N^b_{\tau,\tau+h})}{K}}{\tau+h}\big].
\end{cases}
\end{equation*}
From the definition of $\lambda^{a,*}_k$ together with our induction assumption above, we have
\begin{equation}\label{eq::reecriture_U}
\begin{split}
    &\mathbb{E}^{\lambda^{a,*}_k,\lambda^{b,*}_k}_{k\delta}\big[U^a_{k+1}]\\&=\underset{\lambda^a_k\in\mathcal{U}_{[k\delta,(k+1)\delta]}}{\einf}\;\mathbb{E}^{\lambda^a_k,\lambda^{b,*}_k}_{k\delta}\big[U^a_{k+1}\big]\\
    &=\underset{\lambda^a_k\in\mathcal{U}_{[k\delta,(k+1)\delta]}}{\einf}\;\mathbb{E}^{\lambda^{a}_k,\lambda^{b,*}_{k}}_{k\delta}\Big[\underset{\tau^a\in\mathcal{T}^d_{(k+1)\delta,T},\lambda^a_{k+1}\in\mathcal{U}_{[(k+1)\delta,T+h]}}{\einf}\mathbb{E}^{\lambda^{a}_{k+1},\tilde\lambda^{b,*}_{k+1}}_{(k+1)\delta}\big[\frac{L^a_\tau+\mathcal{C}^a_{auc}+\frac{N^a_{\tau,\tau+h}(N^a_{\tau,\tau+h}-N^b_{\tau,\tau+h})}{K}}{\tau+h}\big]\Big].
\end{split}
\end{equation}
We aim at showing that
\begin{equation}\label{eq::int}
\begin{split}
    &\underset{\lambda^a_k\in\mathcal{U}_{[k\delta,(k+1)\delta]}}{\einf}\;\mathbb{E}^{\lambda^{a}_k,\lambda^{b,*}_{k}}_{k\delta}\Big[\underset{\tau^a\in\mathcal{T}_{(k+1)\delta,T},\lambda^a_{k+1}\in\mathcal{U}_{[(k+1)\delta,T+h]}}{\einf}\mathbb{E}^{\lambda^{a}_{k+1},\tilde\lambda^{b,*}_{k+1}}_{(k+1)\delta}\big[\frac{L^a_\tau+\mathcal{C}^a_{auc}+\frac{N^a_{\tau,\tau+h}(N^a_{\tau,\tau+h}-N^b_{\tau,\tau+h})}{K}}{\tau+h}\big]\Big]\\
    &=\underset{\tau^a\in\mathcal{T}_{(k+1)\delta,T},\lambda^a_k\in\mathcal{U}_{[k\delta,T+h]}}{\einf}\mathbb{E}^{\lambda^{a}_k,\tilde\lambda^{b,*}_k}_{k\delta}\big[\frac{L^a_\tau+\mathcal{C}^a_{auc}+\frac{N^a_{\tau,\tau+h}(N^a_{\tau,\tau+h}-N^b_{\tau,\tau+h})}{K}}{\tau+h}\big],
\end{split}
\end{equation}
where abusing notation slightly we write $\tau={T\wedge\tau^a\wedge\tau_{k+1}^*}$. Let $$ \chi^a_{T\wedge\tau^a\wedge\tau_{k+1}^*}=\frac{L^a_\tau+\mathcal{C}^a_{auc}+\frac{N^a_{\tau,\tau+h}(N^a_{\tau,\tau+h}-N^b_{\tau,\tau+h})}{K}}{\tau+h}$$ and $\tau^a$ be in $\mathcal{T}^d_{(k+1)\delta,T}$. Using the tower property we get

\begin{equation*}
\begin{split}
    \mathbb{E}^{\lambda^{a}_k,\tilde\lambda^{b,*}_k}_{k\delta}\big[\chi^a_{T\wedge\tau^a\wedge\tau_{k+1}^*}\big] =& \mathbb{E}^{\lambda^{a}_k,\tilde\lambda^{b,*}_k}_{k\delta}\Big[\mathbb{E}^{\lambda^{a}_k,\tilde\lambda^{b,*}_k}_{(k+1)\delta}\big[\chi^a_{T\wedge\tau^a\wedge\tau_{k+1}^*}\big]\Big]\\
    =& \mathbb{E}^{\lambda^{a}_k,\tilde\lambda^{b,*}_k}_{k\delta}\Big[\mathbb{E}^{\lambda^{a}_{k,[(k+1)\delta,T]},\tilde\lambda^{b,*}_{k+1}}_{(k+1)\delta}\big[\chi^a_{T\wedge\tau^a\wedge\tau_{k+1}^*}\big]\Big]\\
    =& \mathbb{E}^{\lambda^{a}_{k,[k\delta,(k+1)\delta]},\lambda^{b,*}_{k}}_{k\delta}\Big[\mathbb{E}^{\lambda^{a}_{k,[(k+1)\delta,T]},\tilde\lambda^{b,*}_{k+1}}_{(k+1)\delta}\big[\chi^a_{T\wedge\tau^a\wedge\tau_{k+1}^*}\big]\Big]\\
    \geq& \mathbb{E}^{\lambda^{a}_{k,[k\delta,(k+1)\delta]},\lambda^{b,*}_{k}}_{k\delta}\Big[\underset{\tau^a\in\mathcal{T}^d_{(k+1)\delta,T},\lambda^a_{k+1}\in\mathcal{U}_{[(k+1)\delta,T+h]}}{\einf}\mathbb{E}^{\lambda^{a}_{k+1},\tilde\lambda^{b,*}_{k+1}}_{(k+1)\delta}\big[\chi^a_{T\wedge\tau^a\wedge\tau_{k+1}^*}\big]\Big]\\
    \geq& \underset{\lambda^a_k\in\mathcal{U}^d_{[k\delta,(k+1)\delta]}}{\einf}\;\mathbb{E}^{\lambda^{a}_k,\lambda^{b,*}_{k}}_{k\delta}\Big[\underset{\tau^a\in\mathcal{T}^d_{(k+1)\delta,T},\lambda^a_{k+1}\in\mathcal{U}_{[(k+1)\delta,T+h]}}{\einf}\mathbb{E}^{\lambda^{a}_{k+1},\tilde\lambda^{b,*}_{k+1}}_{(k+1)\delta}\big[\chi^a_{T\wedge\tau^a\wedge\tau_{k+1}^*}\big]\Big],
\end{split}
\end{equation*}
which gives one inequality in \eqref{eq::int} by taking the essential infimum over $\tau^a\in\mathcal{T}^d_{(k+1)\delta,T}$ and $\lambda^a_k\in\mathcal{U}_{[k\delta,T+h]}$. We now turn to the other inequality. Let $\lambda^a\in\mathcal{U}_{[k\delta,(k+1)\delta]}$ and $\mu^a\in\mathcal{U}_{[(k+1)\delta,T+h]}$. We recall the definition $(\lambda^a\otimes_{(k+1)\delta}\mu^a)_u := \lambda^a_u\mathbf{1}_{u\leq(k+1)\delta}+\mu^a_u\mathbf{1}_{(k+1)\delta<u}$ so that $\lambda^a\otimes_{(k+1)\delta}\mu^a\in\mathcal{U}_{[k\delta,T+h]}$. Let $\tau^a\in\mathcal{T}^d_{(k+1)\delta,T}$. We have
\begin{equation}
\begin{split}\label{eq::ineq_disc_almost}
    \underset{\tau^a\in\mathcal{T}^d_{(k+1)\delta,T},\lambda^a_k\in\mathcal{U}_{[k\delta,T+h]}}{\einf}\mathbb{E}^{\lambda^{a}_{k},\tilde\lambda^{b,*}_k}_{k\delta}\big[\chi^a_{T\wedge\tau^a\wedge\tau_{k+1}^*}\big]&\leq \mathbb{E}^{\lambda^a\otimes_{(k+1)\delta}\mu^a,\tilde\lambda^{b,*}_k}_{k\delta}\big[\chi^a_{T\wedge\tau^a\wedge\tau_{k+1}^*}\big]\\
    &=\mathbb{E}^{\lambda^{a}_{[k\delta,(k+1)\delta]},\tilde\lambda^{b,*}_{k}}_{k\delta}\Big[\mathbb{E}^{\mu^a,\tilde\lambda^{b,*}_{k+1}}_{(k+1)\delta}\big[\chi^a_{T\wedge\tau^a\wedge\tau_{k+1}^*}\big]\Big].
\end{split}
\end{equation}
This inequality holds for any $\tau^a\in\mathcal{T}^d_{(k+1)\delta,T}$ and $\mu^a\in\mathcal{U}_{[(k+1)\delta,T+h]}$ and in particular for $\tau^a$ and $\mu^a$ such that
\begin{equation}\label{eq::inf_atteint}
    \mathbb{E}^{\mu^{a},\tilde\lambda^{b,*}_{k+1}}_{(k+1)\delta}\big[\chi^a_{T\wedge\tau^a\wedge\tau_{k+1}^*}\big] = \underset{\tau^a\in\mathcal{T}^d_{(k+1)\delta,T},\lambda^a_{k+1}\in\mathcal{U}_{[k\delta,(k+1)\delta]}}{\einf}\mathbb{E}^{\lambda^{a}_{k+1},\tilde\lambda^{b,*}_{k+1}}_{(k+1)\delta}\big[\chi^a_{T\wedge\tau^a\wedge\tau_{k+1}^*}\big].
\end{equation}
From the induction hypothesis, we know that such quantities exist (we can take $\tau^*_{k+1}$ and $\Tilde{\lambda}^{a,*}_{k+1}$). Thus from \eqref{eq::ineq_disc_almost} and \eqref{eq::inf_atteint}   
we have 
\begin{equation*}
\begin{split}
    &\underset{\tau^a\in\mathcal{T}^d_{(k+1)\delta,T},\lambda^a_k\in\mathcal{U}_{[k\delta,T+h]}}{\einf}\mathbb{E}^{\lambda^{a}_k,\tilde\lambda^{b,*}_k}_{k\delta}\big[\chi^a_{T\wedge\tau^a\wedge\tau_{k+1}^*}\big]\\
    &\leq\mathbb{E}^{\lambda^{a}_{k,[k\delta,(k+1)\delta]},\tilde\lambda^{b,*}_{k}}_{k\delta}\Big[\underset{\tau^a\in\mathcal{T}^d_{(k+1)\delta,T},\lambda^a_{k+1}\in\mathcal{U}_{[(k+1)\delta,T+h]}}{\einf}\mathbb{E}^{\lambda^{a}_{k+1},\tilde\lambda^{b,*}_{k+1}}_{(k+1)\delta}\big[\chi^a_{T\wedge\tau^a\wedge\tau_{k+1}^*}\big]\Big]\\
    &=\mathbb{E}^{\lambda^{a}_{k},\lambda^{b,*}_{k}}_{k\delta}\Big[\underset{\tau^a\in\mathcal{T}^d_{(k+1)\delta,T},\lambda^a_{k+1}\in\mathcal{U}_{[(k+1)\delta,T+h]}}{\einf}\mathbb{E}^{\lambda^{a}_{k+1},\tilde\lambda^{b,*}_{k+1}}_{(k+1)\delta}\big[\chi^a_{T\wedge\tau^a\wedge\tau_{k+1}^*}\big]\Big]
\end{split}
\end{equation*}
and we deduce the second inequality in \eqref{eq::int} by taking the essential infimum over $\lambda^a_k\in\mathcal{U}^d_k$.
From \eqref{eq::reecriture_U} and \eqref{eq::int} we deduce that
\begin{equation}\label{eq:recurrence}
    \mathbb{E}^{\lambda^{a,*}_k,\lambda^{b,*}_k}_{k\delta}\big[U^a_{k+1}] = \underset{\tau^a\in\mathcal{T}^d_{(k+1)\delta,T},\lambda^a_k\in\mathcal{U}_{[k\delta,T+h]}}{\einf}\mathbb{E}^{\lambda^{a}_k,\lambda^{b,*}_k}_{k\delta}\big[\frac{L^a_\tau+\mathcal{C}^a_{auc}+\frac{N^a_{\tau,\tau+h}(N^a_{\tau,\tau+h}-N^b_{\tau,\tau+h})}{K}}{\tau+h}\big].
\end{equation}
We conclude from \eqref{eq:recurrence} that
\begin{equation*}
\begin{split}
    &\underset{\tau^a\in\mathcal{T}^d_{k\delta,T}, \lambda^a\in\mathcal{U}_{[k\delta,T+h]}}{\einf}\mathbb{E}^{\lambda^a,\Tilde{\lambda}^{b,*}_{k}}_{k\delta}\big[\frac{L^a_\tau+\mathcal{C}^a_{auc}+\frac{N^a_{\tau,\tau+h}(N^a_{\tau,\tau+h}-N^b_{\tau,\tau+h})}{K}}{\tau+h}\big]\\
    &=\min\Big(\mathbb{E}^{\lambda^{a,*}_k,\lambda^{b,*}_k}_{k\delta}\big[U^a_{k+1}] ,\frac{L^a_{k\delta}+g^{\text{first}}_a({k\delta},,N^a_{k\delta},N^b_{k\delta})}{k\delta+h}\Big)
\end{split}
\end{equation*}
where $\tau = \tau^a\wedge\tau_{k+1}^*$ and symmetrically 
\begin{equation*}
\begin{split}
    &\underset{\tau^b\in\mathcal{T}^d_{k\delta,T}, \lambda^b\in\mathcal{U}_{[k\delta,T+h]}}{\esup}\mathbb{E}^{\Tilde{\lambda}^{a,*}_{k},\lambda^b}_{k\delta}\big[\frac{-L^b_\tau-\mathcal{C}^b_{auc}+\frac{N^b_{\tau,\tau+h}(N^a_{\tau,\tau+h}-N^b_{\tau,\tau+h})}{K}}{\tau+h}]\\
    &=\max\Big(\mathbb{E}^{\lambda^{a,*}_k,\lambda^{b,*}_k}_{k\delta}\big[U^b_{k+1}],\frac{-L^b_{k\delta}+g^{\text{first}}_b({k\delta},N^a_{k\delta},N^b_{k\delta})}{k\delta+h}\Big)
\end{split}
\end{equation*}
where $\tau = \tau_{k+1}^*\wedge\tau^b$.
So, if either
\begin{equation*}
    \mathbb{E}^{\Tilde{\lambda}^{a,*}_{k},\lambda^b}_{k\delta}\big[U^a_{k+1}\big]<\frac{L^a_{k\delta}+g^{\text{first}}_a({k\delta},N^a_{k\delta},N^b_{k\delta})}{k\delta+h}
\end{equation*}
or
\begin{equation*}
    \mathbb{E}^{\Tilde{\lambda}^{a,*}_{k},\lambda^b}_{k\delta}\big[U^b_{k+1}\big]>\frac{-L^b_{k\delta}+g^{\text{first}}_b({k\delta},N^a_{k\delta},N^b_{k\delta})}{k\delta+h},
\end{equation*}
the extrema are not reached at $k\delta$, \textit{i.e.} the optimal stopping times are still equal to $\tau^*_{k+1}$ and the trading rates are given by $\lambda^{a,*}_k\otimes_{(k+1)\delta}\lambda^{a,*}_{k+1} = \Tilde{\lambda}^{a,*}_k$ and $\lambda^{b,*}_k\otimes_{(k+1)\delta}\lambda^{b,*}_{k+1} = \Tilde{\lambda}^{b,*}_k$. Else at least one player triggers an auction and the optimal stopping times are equal to $k\delta$. Consequently,

\[  \tau^{*}_k=\delta\inf\Big\{l\in\llbracket k,T/\delta\rrbracket\text{, }U^a_l =\frac{L^a_{l\delta}+g^{\text{first}}_a({l\delta},N^a_{l\delta},N^b_{l\delta})}{l\delta+h}\text{ or } U^b_l=\frac{-L^b_{l\delta}+g^{\text{first}}_b({l\delta},N^a_{l\delta},N^b_{l\delta})}{l\delta+h}\Big\}.
\]

\subsection{Proof of Theorem \ref{thm:purediscrete}}\label{app:purediscrete}
\textit{Step 1: Construction of a (good) stopping strategy.}
Fix $\tau^b\in\mathcal{T}_{0,T}$ and $\lambda^b\in\mathcal{U}$. Let $\varepsilon>0$, $\lambda^*\in\mathcal{U}$ and  $\tau^*\in\mathcal{T}_{0,T}$ such that
\begin{equation*}
    \mathbb{E}^{\lambda^*,\lambda^b} \big[\frac{L^a_{\tau^*\wedge \tau^b}+\xi^a_{\tau^*\wedge\tau^b}}{\tau^*\wedge\tau^b+h}\big] \leq  \underset{\tau\in\mathcal{T}_{0,T},\lambda\in\mathcal{U}}{\inf}\mathbb{E}^{\lambda,\lambda^b} \big[\frac{L^a_{\tau\wedge\tau^b}+\xi^a_{\tau\wedge\tau^b}}{\tau\wedge \tau^b+h}\big]+\varepsilon
\end{equation*}
and define $\tau^*_d = \delta\lceil\frac{\tau^*}{\delta}\rceil\mathbf{1}_{\tau^*<\tau^b}+\mathbf{1}_{\tau^*\geq\tau^b}\tau^*$. Also let $\lambda^a = \lambda^*\otimes_{\tau^*}\hat\lambda_-$ where $\hat\lambda_-\in\mathcal{U}$ is the constant strategy equal to $\lambda_-$.\\

\textit{Step 2: Comparison with the optimal payoff.}
We decompose the error made by choosing $\tau^*_d$ instead of $\tau^*$ into three terms:
\begin{align*}
    0&\leq \mathbb{E}^{\lambda^a,\lambda^b} \big[\frac{L^a_{\tau^*_d\wedge\tau^b}+\xi^a_{\tau^*_d\wedge\tau^b}}{\tau^*_d\wedge \tau^b+h}\big]-\underset{\tau\in\mathcal{T}_{0,T},\lambda\in\mathcal{U}}{\inf}\mathbb{E}^{\lambda,\lambda^b} \big[\frac{L^a_{\tau\wedge\tau^b}+\xi^a_{\tau\wedge\tau^b}}{\tau\wedge \tau^b+h}\big]\\
    &\leq \mathbb{E}^{\lambda^a,\lambda^b} \big[\frac{L^a_{\tau^*_d\wedge\tau^b}+\xi^a_{\tau^*_d\wedge\tau^b}}{\tau^*_d\wedge \tau^b+h}\big]-\mathbb{E}^{\lambda^*,\lambda^b} \big[\frac{L^a_{\tau^*\wedge\tau^b}+\xi^a_{\tau^*\wedge\tau^b}}{\tau^*\wedge \tau^b+h}\big]+\varepsilon\\
    & = \mathbb{E}^{\lambda^a,\lambda^b} \big[\frac{L^a_{\tau^*_d\wedge\tau^b}+\xi^a_{\tau^*_d\wedge\tau^b}}{\tau^*_d\wedge \tau^b+h}-\frac{L^a_{\tau^*\wedge\tau^b}+\xi^a_{\tau^*\wedge\tau^b}}{\tau^*\wedge \tau^b+h}\big]+\varepsilon\\
    & = \mathbb{E}^{\lambda^a,\lambda^b} \big[\mathbf{1}_{\tau^*<\tau^b}(\frac{L^a_{\tau^*_d\wedge\tau^b}+\xi^a_{\tau^*_d\wedge\tau^b}}{\tau^*_d\wedge \tau^b+h}-\frac{L^a_{\tau^*\wedge\tau^b}+\xi^a_{\tau^*\wedge\tau^b}}{\tau^*\wedge \tau^b+h})\big]+\varepsilon\\
    & = \mathbb{E}^{\lambda^a,\lambda^b} \big[\mathbf{1}_{\tau^*<\tau^b}(\frac{L^a_{\tau^*_d\wedge\tau^b}+\xi^a_{\tau^*_d\wedge\tau^b}}{\tau^*_d\wedge \tau^b+h}-\frac{L^a_{\tau^*\wedge\tau^b}+g_a^{\text{first}}({\tau^*\wedge\tau^b})}{\tau^*\wedge \tau^b+h})\big]+\varepsilon\\
    & = \mathbb{E}^{\lambda^a,\lambda^b} \big[\mathbf{1}_{\tau^*<\tau^b}(\frac{L^a_{\tau^*_d\wedge\tau^b}+g_a^{\text{first}}({\tau^*_d\wedge\tau^b},N^a_{\tau^*_d\wedge\tau^b},N^b_{\tau^*_d\wedge\tau^b})}{\tau^*_d\wedge \tau^b+h}-\frac{L^a_{\tau^*\wedge\tau^b}+g_a^{\text{first}}({\tau^*\wedge\tau^b},N^a_{\tau^*\wedge\tau^b},N^b_{\tau^*\wedge\tau^b})}{\tau^*\wedge \tau^b+h})\big]\\
    & + \mathbb{E}^{\lambda^a,\lambda^b} \big[\mathbf{1}_{\tau^*<\tau^b}(\frac{L^a_{\tau^*_d\wedge\tau^b}+\xi^a_{\tau^*_d\wedge\tau^b}}{\tau^*_d\wedge \tau^b+h}-\frac{L^a_{\tau^*_d\wedge\tau^b}+g_a^{\text{first}}({\tau^*_d\wedge\tau^b},N^a_{\tau^*_d\wedge\tau^b},N^b_{\tau^*_d\wedge\tau^b})}{\tau^*_d\wedge \tau^b+h})\big]+\varepsilon.
\end{align*}

We have $\tau^{*,d}-\tau^*\underset{\delta\rightarrow 0}{\rightarrow} 0$ a.s. Furthermore, since $v^a\tau$ and $v^b\tau$ are replaced by the nearest integer and using that $\frac{1}{2v^a\delta}\in\mathbb{N}$ and $\frac{1}{2v^b\delta}\in \mathbb N$, we get
$$g_a^{\text{first}}({\tau^*_d\wedge\tau^b},N^a_{\tau^*_d\wedge\tau^b},N^b_{\tau^*_d\wedge\tau^b})=g_a^{\text{first}}({\tau^*\wedge\tau^b},N^a_{\tau^*_d\wedge\tau^b},N^b_{\tau^*_d\wedge\tau^b}).$$
So, using that the intensities of the Poisson processes are bounded, we have
$$\mathbb{E}^{\lambda^a,\lambda^b}\big[\mathbf{1}_{\tau^*<\tau^b}(\frac{L^a_{\tau^*_d\wedge\tau^b}+g_a^{\text{first}}({\tau^*_d\wedge\tau^b},N^a_{\tau^*_d\wedge\tau^b},N^b_{\tau^*_d\wedge\tau^b})}{\tau^*_d\wedge \tau^b+h}-\frac{L^a_{\tau^*\wedge\tau^b}+g_a^{\text{first}}({\tau^*\wedge\tau^b},N^a_{\tau^*\wedge\tau^b},N^b_{\tau^*\wedge\tau^b})}{\tau^*\wedge \tau^b+h})\big]\underset{\delta\rightarrow 0}{\rightarrow} 0,$$
uniformly in $\tau^*$, $\tau^b$, $\lambda^a$ and $\lambda^b$.
Also,
\begin{align*}
    &\mathbb{E}^{\lambda^a,\lambda^b} \big[\mathbf{1}_{\tau^*<\tau^b}(\frac{L^a_{\tau^*_d\wedge\tau^b}+\xi^a_{\tau^*_d\wedge\tau^b}}{\tau^*_d\wedge \tau^b+h}-\frac{L^a_{\tau^*_d\wedge\tau^b}+g_a^{\text{first}}({\tau^*_d\wedge\tau^b},N^a_{\tau^*_d\wedge\tau^b},N^b_{\tau^*_d\wedge\tau^b})}{\tau^*_d\wedge \tau^b+h})\big]\\
    &\leq \frac{1}{h}\mathbb{E}^{\lambda^a,\lambda^b} \big[\max(0,\xi^a_{\tau^*_d\wedge\tau^b}-g_a^{\text{first}}({\tau^*_d\wedge\tau^b},N^a_{\tau^*_d\wedge\tau^b},N^b_{\tau^*_d\wedge\tau^b}))\big]\\
    &\leq \frac{1}{h}\underset{\lambda^a_0\in\mathcal{U},\lambda^b_0\in\mathcal{U}}{\sup}\mathbb{E}^{\lambda^a_0,\lambda^b_0} \big[\underset{[0,T]}{\sup}(\max(\max(g_a^{\text{sim}}(t,N^a_t,N^b_t),g_a^{\text{second}}(t,N^a_t,N^b_t))-g_a^{\text{first}}(t,N^a_t,N^b_t),0))\big]\\
    &=\varepsilon_a.
\end{align*}
\textit{Step 3: Conclusion.}
We have shown that for any $\varepsilon>0$, there exists $\hat \delta^a>0$ such that, if $\delta\leq\hat\delta^a$, then for any $\tau^b$ and $\lambda^b$ chosen by Player $b$, we can find some $\tau^*_d\in\mathcal{T}_{0,T}$ and $\lambda^a\in\mathcal{U}$ such that
$$    0\leq \mathbb{E}^{\lambda^a,\lambda^b} \big[\frac{L^a_{\tau^*_d\wedge\tau^b}+\xi^a_{\tau*_d\wedge\tau^b}}{\tau^*_d\wedge \tau^b+h}\big]-\underset{\tau\in\mathcal{T}_{0,T},\lambda\in\mathcal{U}}{\inf}\mathbb{E}^{\lambda,\lambda^b} \big[\frac{L^a_{\tau\wedge\tau^b}+\xi^a_{\tau\wedge\tau^b}}{\tau\wedge \tau^b+h}\big]\leq 2\varepsilon+ \varepsilon^a.
$$Now let $((\lambda^a,\lambda^b),(\tau^a,\tau^b))$ be an OLNED for some $\delta<\hat\delta^a$ and let $\tau^*_d\in\mathcal T_{0,T}$ and $\lambda^a\in\mathcal{U}$ as in \textit{Step 1}. Remark that $\tau^*_d\in\mathcal{T}^d_{0,T}$ and so
\begin{align*}
    0&\leq \mathbb{E}^{\lambda^a,\lambda^b} \big[\frac{L^a_{\tau^a\wedge\tau^b}+\xi^a_{\tau^a\wedge\tau^b}}{\tau^a\wedge \tau^b+h}\big]-\underset{\tau\in\mathcal{T}_{0,T},\lambda\in\mathcal{U}}{\inf}\mathbb{E}^{\lambda,\lambda^b} \big[\frac{L^a_{\tau\wedge\tau^b}+\xi^a_{\tau\wedge\tau^b}}{\tau\wedge \tau^b+h}\big]\\
    &=\underset{\tau\in\mathcal{T}^d_{0,T},\lambda\in\mathcal{U}}{\inf}\mathbb{E}^{\lambda,\lambda^b} \big[\frac{L^a_{\tau\wedge\tau^b}+\xi^a_{\tau\wedge\tau^b}}{\tau\wedge \tau^b+h}\big]-\underset{\tau\in\mathcal{T}_{0,T},\lambda\in\mathcal{U}}{\inf}\mathbb{E}^{\lambda,\lambda^b} \big[\frac{L^a_{\tau\wedge\tau^b}+\xi^a_{\tau\wedge\tau^b}}{\tau\wedge \tau^b+h}\big]\\
    &\leq \mathbb{E}^{\lambda^a,\lambda^b} \big[\frac{L^a_{\tau^*_d\wedge\tau^b}+\xi^a_{\tau*_d\wedge\tau^b}}{\tau^*_d\wedge \tau^b+h}\big]-\underset{\tau\in\mathcal{T}_{0,T},\lambda\in\mathcal{U}}{\inf}\mathbb{E}^{\lambda,\lambda^b} \big[\frac{L^a_{\tau\wedge\tau^b}+\xi^a_{\tau\wedge\tau^b}}{\tau\wedge \tau^b+h}\big]\\
    &\leq 2\varepsilon+ \varepsilon^a.
\end{align*}
Similarly, we find $\hat \delta^b>0$ such that if $((\lambda^a,\lambda^b),(\tau^a,\tau^b))$ is an OLNED for some $\delta<\hat\delta^b$, then
$$
    0\leq \underset{\tau\in\mathcal{T}_{0,T},\lambda\in\mathcal{U}}{\sup}\mathbb{E}^{\lambda^a,\lambda} \big[\frac{-L^b_{\tau^a\wedge\tau}+\xi^b_{\tau^a\wedge\tau}}{\tau^a\wedge \tau+h}\big]-\mathbb{E}^{\lambda^a,\lambda^b} \big[\frac{-L^b_{\tau^a\wedge\tau^b}+\xi^b_{\tau^a\wedge\tau^b}}{\tau^a\wedge \tau^b+h}\big]\leq 2\varepsilon+ \varepsilon^b.$$
\comment{\subsection{Proof of Theorem \ref{thm:purediscrete}}\label{app:purediscrete}
We only prove the first inequality, the second one is obtained similarly. First we introduce some notations.

\textit{Step 0: Notations}\label{subsec::notations}
Let $(E,\|\cdot\|)$ be a normed space, for any $(s,t)\in[0,T]^2$ such that $s\leq t$, we define:
\begin{itemize}
    \item $\mathcal{H}^p_{s,t}(E) =\{Y$, $E$-valued and $\mathbb{F}$-adapted process s.t., $\mathbb{E}[(\int_{s}^{t}\|Y_r\|^2dr)^{\frac{p}{2}}]<+\infty\}$
    \item $\mathcal{S}^p_{s,t}(E) =\{Y$, $E$-valued and $\mathbb{F}$-adapted process s.t., $\mathbb{E}[(\sup_{s\leq r\leq t}\|Y_r\|^p dr)]<+\infty\}$
    \item $\mathcal{L}^p(E) =\{\xi, E$-valued $\mathcal{F}_{T+h}$-measurable random variable, s.t. $\mathbb{E}[\|\xi\|^p]<+\infty\}$
\end{itemize}
Uniqueness results for BSDEs will not hold up everywhere because of the role played by $P$, so, for any $(s,t)\in [0,T]^2$ such that $s\leq t$ p.s., let $\overset{+}{\sim}_{s,t}$ be the equivalence relation defined on the set of $\mathbb{R}$-valued $\mathbb{F}$-adapted processes by: for $v$ and $w$ two $\mathbb{R}$-valued $\mathbb{F}$-adapted processes, then $v\overset{+}{\sim}_{s,t} w\iff v\mathbf{1}_{P^*_u< P}=w\mathbf{1}_{P^*_u< P}$ for all $s\le u\leq t$ p.s. Then $\{v\in \mathcal{H}^2_{s,t}(\mathbb{R}), v\overset{+}{\sim}_{s,t} \mathbf{0}\}$ and $\{v\in \mathcal{S}^2_{s,t}(\mathbb{R}), v\overset{+}{\sim}_{s,t} \mathbf{0}\}$ are closed in $\mathcal{H}^2_{s,t}(\mathbb{R})$ and $\mathcal{S}^2_{s,t}(\mathbb{R})$ respectively, thus making $\mathcal{H}^2_{s,t}(\mathbb{R})/\overset{+}{\sim}_{s,t}$ and $\mathcal{S}^2_{s,t}(\mathbb{R})/\overset{+}{\sim}_{s,t}$ Banach for the inherited norm. We note them respectively $\mathcal{H}^{2,\sim+}_{s,t}(\mathbb{R})$ and $\mathcal{S}^{2,\sim+}_{s,t}(\mathbb{R})$. Symmetrically, we define $\mathcal{H}^{2,\sim-}_{s,t}(\mathbb{R})$ and $\mathcal{S}^{2,\sim-}_{s,t}(\mathbb{R})$ by considering the equivalence relation $\overset{-}{\sim}$ given by point-wise equality of the processes when $P^*_t> P$.\\

\textit{Step 1: Construction of a (good) stopping time in $\mathcal{T}^d_{0,T}$}.
Let  $\tau^*\in\mathcal{T}_{0,T}$ such that
\begin{equation*}
    \underset{\lambda\in \mathcal{U}}{\inf}\mathbb{E}^{\lambda,\lambda^b} \big[\frac{L^a_{\tau^*\wedge \tau^b}+\xi^a_{\tau^*\wedge\tau^b}}{\tau^*\wedge\tau^b+h}\big] \leq  \underset{\tau\in\mathcal{T}_{0,T},\lambda\in\mathcal{U}}{\inf}\mathbb{E}^{\lambda,\lambda^b} \big[\frac{L^a_{\tau\wedge\tau^b}+\xi^a_{\tau\wedge\tau^b}}{\tau\wedge \tau^b+h}\big]+\delta
\end{equation*}
and define $\tau^*_d = \delta\lceil\frac{\tau^*}{\delta}\rceil$. Then $\tau^*_d\in\mathcal{T}^d_{0,T}$.\\

\textit{Step 2: BSDE characterisation.}
Let $\Lambda_{0}$ be the process defined by $\Lambda_{0,t} = (\lambda^0\mathbf{1}_{P^*_t>P},\lambda^0\mathbf{1}_{P^*_t<P})^T$ for $t\in[0,T]$ and let $M$ be a $\mathbb P^{\Lambda^0}-$martingale defined by $M_t = N_t-t\Lambda_{0,t}$.\\

Let $(Y^d,Z^{a,d},Z^{a,d,W})$ and $(Y,Z^a,Z^{a,W})$ be the unique solutions in $(\mathcal{S}_{0,T}^2(\mathbb{R})\times\mathcal{H}_{0,T}^{2,\sim +}(\mathbb{R}^2)\times\mathcal{H}_{0,T}^{2}(\mathbb{R}))$ to the BSDEs with Lipschitz generators and finite random horizon defined for any $t\in [0,T]$ by

\begin{equation}\label{bsde:d}
    \begin{cases}
    &dY^d_t = \mathbf{1}_{t\leq T\wedge\tau^*_d\wedge\tau^b}[H^{a,*}(P^*_t,Z^{a,d}_t,\lambda^b_t)-\Lambda_{0,t}\cdot Z^{a,d}_t]dt-Z_t^{a,d}\cdot dM_t-Z^{a,d,W}_t dW_t\\
        &Y^d_{T\wedge\tau^*_d\wedge\tau^b} = \frac{L^a_{\tau_d^*\wedge \tau^b}+\xi^a_{\tau_d^*\wedge\tau^b}}{\tau_d^*\wedge \tau^b+h}\\
    \end{cases}
\end{equation} and
\begin{equation}\label{bsde:continue}
    \begin{cases}
    &dY_t = \mathbf{1}_{t\leq T\wedge\tau^*\wedge\tau^b}[H^{a,*}(P^*_t,Z^{a}_t,\lambda^b_t)-\Lambda_{0,t}\cdot Z^{a}_t]dt-Z_t^{a}\cdot dM_t-Z_t^{a,W}dW_t\\
        &Y_{T\wedge\tau^*\wedge\tau^b} = \frac{L^a_{\tau^*\wedge\tau^b}+\xi^a_{\tau^*\wedge\tau^b}}{\tau^*\wedge\tau^b+h}
    \end{cases}
\end{equation}
such that $Z^{a,d}_t=Z_t^{a,d,W}=0$ for $t>{T\wedge\tau^*_d\wedge\tau^b}$ and $Z^{a}_t=Z^{a,W}_t=0$ for $t>{T\wedge\tau^*\wedge\tau^b}$, with\footnote{Recalling the notations of Appendix \ref{app:verif} when $\lambda^b$ is fixed.} for any $(p^*,z,\lambda^b)\in\mathbb R\times \mathbb R^2\times [\lambda_-,\lambda^+]$
\[  H^{a,*}(p^*,z,\lambda^b) =1_{p^*< P}z_1\lambda^*_a(z) +1_{p^*> P}z_2\lambda^b,\]
and
\[\lambda^*_a(z):= 1_{z_1>0}\lambda_-+1_{z_1<0}\lambda_+.\]
 Note that
\[Y^d_0=\underset{\lambda\in \mathcal{U}}{\inf}\;\mathbb{E}^{\lambda,\lambda^b} \big[\frac{L^a_{\tau_d^*\wedge \tau^b}+\xi^a_{\tau_d^*\wedge\tau^b}}{\tau_d^*\wedge \tau^b+h}\big] \]
and
 
 \[Y_0=\underset{\lambda\in \mathcal{U}}{\inf}\; \mathbb{E}^{\lambda,\lambda^b} \big[\frac{L^a_{\tau^*\wedge \tau^b}+\xi^a_{\tau^*\wedge\tau^b}}{\tau^*\wedge \tau^b+h}\big].\]
 
 and in particular there exists deterministic functions with polynomial growth $V^d$ and $V$ such that 
\[\begin{cases}
&Y_t^d=V^d(t,P^*_t,N_t^a,N_t^b,L_t^a,L_t^b),\; t<{T\wedge\tau^*_d\wedge\tau^b},\\
&Y_t=V(t,P^*_t,N_t^a,N_t^b,L_t^a,L_t^b),\; t<{T\wedge\tau^*\wedge\tau^b}.
\end{cases}
\]Moreover, for all $t<{T\wedge\tau^*_d\wedge\tau^b}$
\begin{equation}\label{exp_zd}
    \begin{split}
        Z^{a,d}_t =& (V^d(t,P^*_t,N_t^a+1,N_t^b,L_t^a+P-P^*_t,L_t^b)-V^d(t,P^*_t,N_t^a,N_t^b,L_t^a,L_t^b),\\
        &V^d(t,P^*_t,N_t^a,N_t^b+,L_t^a,L_t^b+P^*_t-P)-V^d(t,P^*_t,N_t^a,N_t^b,L_t^a,L_t^b))^T
    \end{split}
\end{equation}
 and for all $t<{T\wedge\tau^*\wedge\tau^b}$
\begin{equation*}
    \begin{split}
        Z^{a}_t =& (V(t,P^*_t,N_t^a+1,N_t^b,L_t^a+P-P^*_t,L_t^b)-V(t,P^*_t,N_t^a,N_t^b,L_t^a,L_t^b),\\
        &V^d(t,P^*_t,N_t^a,N_t^b+,L_t^a,L_t^b+P^*_t-P)-V^d(t,P^*_t,N_t^a,N_t^b,L_t^a,L_t^b))^T.
    \end{split}
\end{equation*}

 Together with the fact that $\mathbb{E}\big[\underset{t\in[0,T]}{\sup}\|N_t\|^\rho\big]<\infty$, $\mathbb{E}\big[\underset{t\in[0,T]}{\sup}\|L_t\|^\rho\big]<\infty$ and $\mathbb{E}\big[\underset{t\in[0,T]}{\sup}\|W_t\|^\rho\big]<\infty$ for all $\rho>0$, we get $\mathbb{E}\big[\underset{t\in[0,T]}{\sup}\|Y_t\|^\rho\big]<\infty$ and $\mathbb{E}\big[\underset{t\in[0,T]}{\sup}\|Z^a_t\|^\rho\big]<\infty$.\\

 \textit{Step 3: Uniform upper bounds.}
We now prove the existence of an upper bound for $Y^d$ and $Z^{a,d}$ which does not depend on $\delta$. Let $\bar \chi^a$ and $\underline \chi^a$ be the processes defined for all $t\in[0,T]$ by 
\[\bar \chi^a_t := \underset{s\in[0,t]}{\sup}\frac{L^a_t+\bar\xi^a_t}{t+h}\text{ and }\underline \chi^a_t: = \underset{s\in[0,t]}{\inf}\frac{L^a_t+\underline\xi^a_t}{t+h},\]  with 
\[\bar\xi^a_t = \max(g^{a,\text{first}},g^{a,\text{sim}},g^{a,\text{second}})(t,P^*t,N^a_t,N^b_t), \text{ and }\underline \xi^a_t = \min(g^{a,\text{first}},g^{a,\text{sim}},g^{a,\text{second}})(t,P^*t,N^a_t,N^b_t).\] Then $\bar \chi^a, \underline \chi^a\in\mathcal{H}^2_{0,T}$ and $\underline\chi^a_t\leq\chi^a_t\leq\bar\chi^a_t$ for all $t\in[0,T]$. Consider now the solutions of 
\begin{equation*}
    \begin{cases}
    &\bar Y_{T} = \bar \chi^a_{T}\\
    &d\bar Y_t = [4\lambda_+\|\bar Z^{a}_t\|_1]dt-\bar Z_t^{a}\cdot dM_t-\bar Z_t^{a,W}dW_t \,\forall t\in[0,T]
    \end{cases}
\end{equation*}
and
\begin{equation*}
    \begin{cases}
    &\underline Y_{T} = \underline\chi^a_{T}\\
    &d\underline Y_t = [-4\lambda_+\|\underline Z^{a}_t\|_1]dt-\underline Z_t^{a}\cdot dM_t-\bar Z_t^{a,W}dW_t \,\forall t\in[0,T]
    \end{cases}
\end{equation*}
denoted by $(\overline Y, \overline Z^{a},\overline{Z}^{a,W})$ and $(\underline Y,\underline Z^{a,d},\underline Z^{a,d,W})$ respectively.\\

 By comparison Theorem (see for instance \cite{royer2006backward}) $\underline Y_t\leq Y^d_t \leq \bar Y_t$ for all $t\in[0,T]$. Moreover, $\underline Y_t = \underline V(t,P^*_t,N_t^a,N_t^b,L_t^a,L_t^b,\underline \chi^a_t)$ and $\bar Y_t = \bar V(t,P^*_t,N_t^a,N_t^b,L_t^a,L_t^b,\bar \chi^a_t)$ for two functions with polynomial growth $\underline V$ and $\bar V$. We deduce that, for $\rho>0$,  $\mathbb{E}\big[\underset{t\in[0,T]}{\sup}|\bar Y_t|^\rho \big]\leq \bar C(1+\mathbb{E}\big[\underset{t\in[0,T]}{\sup}|\bar \chi^a_t|^{\tilde\rho} \big])$ for some $\bar C>0$ and $\tilde\rho>0$. Note that
\begin{equation*}
\begin{split}
    \mathbb{E}\big[\underset{t\in[0,T]}{\sup}|\bar \chi^a_t|^{\tilde\rho} \big]&=\mathbb{E}\big[\underset{t\in[0,T]}{\sup}|\underset{s\in[0,t]}{\sup}\frac{L^a_s+\bar\xi^a_s}{s+h}|^{\tilde\rho} \big]\\
    &\leq\mathbb{E}\big[\underset{t\in[0,T]}{\sup}(\underset{s\in[0,t]}{\sup}|\frac{L^a_s+\bar\xi^a_s}{s+h}|)^{\tilde\rho} \big]\\
    &\leq\mathbb{E}\big[\underset{t\in[0,T]}{\sup}(1+\underset{s\in[0,t]}{\sup}|\frac{L^a_s+\bar\xi^a_s}{s+h}|^{\tilde\rho}) \big]\\
    &\leq 1+\mathbb{E}\big[\underset{t\in[0,T]}{\sup}| \frac{L^a_t+\bar\xi^a_t}{t+h}|^{\tilde\rho} \big] <+\infty
\end{split}
\end{equation*}
since $\bar \xi^a$ can be written as a function of the parameters, with polynomial growth. Similarly, $\mathbb{E}\big[\underset{t\in[0,T]}{\sup}|\underline Y_t|^\rho \big]\leq \underline C(1+\mathbb{E}\big[\underset{t\in[0,T]}{\sup}|\underline \chi^a_t|^{\tilde\rho} \big])$ for some $\underline C>0$ and $\tilde\rho>0$, and we have
\begin{equation*}
    \mathbb{E}\big[\underset{t\in[0,T]}{\sup}|\underline \chi^a_t|^{\tilde\rho} \big]=\mathbb{E}\big[\underset{t\in[0,T]}{\sup}|\underset{s\in[0,t]}{\inf}\frac{L^a_s+\underline\xi^a_s}{s+h}|^{\tilde\rho} \big]<+\infty.
\end{equation*}
Consequently, there exists a positive constant $C^\rho$ such that
\begin{equation*}
    \mathbb{E}\big[\underset{t\in[0,T]}{\sup}| Y^d_t|^{\rho} \big]\leq C^\rho.
\end{equation*}
The characterisation \eqref{exp_zd} allows to transfer the bound to $Z^{a,d}$ so that
\begin{equation}\label{zunifdelta}
    \mathbb{E}\big[\underset{t\in[0,T]}{\sup}| Z^{a,d}_t|^{\rho} \big]\leq C^\rho.
\end{equation}
Note that the bounds do not depend on the discretisation step $\delta$.\\

\textit{Step 4: Conclusion.}
Note that by using Step 1, then Step 2 we have

    \begin{align}
\nonumber        &0\leq\underset{\tau\in\mathcal{T}^d_{0,T},\lambda\in\mathcal{U}}{\inf}\mathbb{E}^{\lambda,\lambda^b} \big[\frac{L^a_{\tau\wedge \tau^b}+\xi^a_{\tau\wedge\tau^b}}{\tau\wedge \tau^b+h}\big]- \underset{\tau\in\mathcal{T}_{0,T},\lambda\in\mathcal{U}}{\inf} \mathbb{E}^{\lambda,\lambda^b} \big[\frac{L^a_\tau+\xi^a_{\tau\wedge\tau^b}}{\tau+h}\big]\\
    \nonumber    &\leq \underset{\lambda\in\mathcal{U}}{\inf}\mathbb{E}^{\lambda,\lambda^b} \big[\frac{L^a_{\tau^*_d\wedge \tau^b}+\xi^a_{\tau^*_d\wedge\tau^b}}{\tau^*_d\wedge \tau^b+h}\big]-\underset{\lambda\in\mathcal{U}}{\inf}\mathbb{E}^{\lambda,\lambda^b} \big[\frac{L^a_{\tau^*\wedge \tau^b}+\xi^a_{\tau^*\wedge\tau^b}}{\tau^*\wedge \tau^b+h}\big] +\delta\\
  \nonumber  &=Y_0^d-Y_0+\delta\\
   \label{Eterm}     &= \mathbb{E}\big[ \frac{L^a_{\tau^*_d\wedge \tau^b}+\xi^a_{\tau^*_d\wedge\tau^b}}{\tau^*_d\wedge \tau^b+h}-\frac{L^a_{\tau^*\wedge \tau^b}+\xi^a_{\tau^*\wedge\tau^b}}{\tau^*\wedge \tau^b+h}\big]+\delta\\
  \label{EH}      &-\mathbb{E}\big[\int_0^{\tau^*\wedge\tau^b}[(H^{a,*}(P^*_t,Z^{a,d}_t,\lambda^b_t)-\Lambda_t\cdot Z^{a,d}_t)-(H^{a,*}(P^*_t,Z^a_t,\lambda^b_t)-\Lambda_t\cdot Z^{a,d}_t)]dt\\
   \label{EHL}     &-\int_{\tau^*\wedge\tau^b}^{\tau^*_d\wedge\tau^b}[H^{a,*}(P^*_t,Z^{a,d}_t,\lambda^b_t)-\Lambda_t\cdot Z^{a,d}_t]dt\big].
    \end{align}

We look at each of those terms.\\

 We first investigate the term \eqref{EHL}. From \eqref{zunifdelta} we obtain for some positive constant $C,\hat C$
\begin{equation}\label{first_term}
    \begin{split}
        &|\mathbb{E}\big[\int_{\tau^*\wedge\tau^b}^{\tau^*_d\wedge\tau^b}[H^{a,*}(P^*_t,Z^{a,d}_t,\lambda^b_t)-\Lambda_t\cdot Z^{a,d}_t]dt\big]|\leq C \mathbb{E}\big[\int_{\tau^*\wedge \tau^b}^{\tau^*_d\wedge\tau^b}|Z^{a,d}_t|dt\big]\leq C \delta \mathbb{E}\big[\underset{t\in[0,T]}{\sup}|Z^{a,d}_t|\big]\leq \hat C \delta\underset{\delta\rightarrow 0}{\rightarrow} 0
    \end{split}
\end{equation}

 We now turn to \eqref{Eterm}. Note that $ \frac{L^a_{\tau^*_d\wedge \tau^b}+\xi^a_{\tau^*_d\wedge\tau^b}}{\tau^*_d\wedge \tau^b+h}\neq\frac{L^a_{\tau^*\wedge \tau^b}+\xi^a_{\tau^*\wedge\tau^b}}{\tau^*\wedge \tau^b+h}$ only when $\tau^*<\tau^b$. We distinguish two cases \begin{itemize}
    \item[$(i)$] either $\tau^*<\tau^b-\delta$, then
\[  \begin{cases}
\dfrac{L^a_{\tau^*_d\wedge \tau^b}+\xi^a_{\tau^*_d\wedge\tau^b}}{\tau^*_d\wedge \tau^b+h}= \dfrac{L^a_{\tau^{*,d}}+g^{a,\text{first}}({\tau^{*,d}},P^*_{\tau^{*,d}},N^a_{\tau^{*,d}},N^b_{\tau^{*,d}})}{{\tau^{*,d}}+h}\\
\dfrac{L^a_{\tau^*\wedge \tau^b}+\xi^a_{\tau^*\wedge\tau^b}}{\tau^*\wedge \tau^b+h} = \dfrac{L^a_{\tau^*}+g^{a,\text{first}}({\tau^*},P^*_{\tau^*},N^a_{\tau^*},N^b_{\tau^*})}{{\tau^*}+h}
  \end{cases}  \]

    \item[$(ii)$] or $\tau^*\geq\tau^b-\delta$ then
    
 \[  \begin{cases}
\dfrac{L^a_{\tau^*_d\wedge \tau^b}+\xi^a_{\tau^*_d\wedge\tau^b}}{\tau^*_d\wedge \tau^b+h}= \dfrac{L^a_{\tau^{*,d}}+g^{a,\text{sim}}({\tau^{*,d}},P^*_{\tau^{*,d}},N^a_{\tau^{*,d}},N^b_{\tau^{*,d}})}{{\tau^{*,d}}+h}\\
\dfrac{L^a_{\tau^*\wedge \tau^b}+\xi^a_{\tau^*\wedge\tau^b}}{\tau^*\wedge \tau^b+h} = \dfrac{L^a_{\tau^*}+g^{a,\text{first}}({\tau^*},P^*_{\tau^*},N^a_{\tau^*},N^b_{\tau^*})}{{\tau^*}+h}
  \end{cases}  \]
\end{itemize}
Therefore
\begin{equation*}
\begin{split}
        &\mathbb{E}\big[ \frac{L^a_{\tau^*_d\wedge \tau^b}+\xi^a_{\tau^*_d\wedge\tau^b}}{\tau^*_d\wedge \tau^b+h}-\frac{L^a_{\tau^*\wedge \tau^b}+\xi^a_{\tau^*\wedge\tau^b}}{\tau^*\wedge \tau^b+h}\big]\\
        &=\mathbb{E}\big[\mathbf{1}_{\tau^*<\tau^b-\delta} (\frac{L^a_{\tau^*}+g^{a,\text{first}}({\tau^*},P^*_{\tau^*},N^a_{\tau^*},N^b_{\tau^*})}{{\tau^*}+h}-\frac{L^a_{\tau^{*,d}}+g^{a,\text{first}}({\tau^{*,d}},P^*_{\tau^{*,d}},N^a_{\tau^{*,d}},N^b_{\tau^{*,d}})}{{\tau^{*,d}}+h})\big]\\
        &+\mathbb{E}\big[\mathbf{1}_{\tau^b-\delta\leq\tau^*<\tau^b} (\frac{L^a_{\tau^*}+g^{a,\text{first}}({\tau^*},P^*_{\tau^*},N^a_{\tau^*},N^b_{\tau^*})}{{\tau^*}+h}-\frac{L^a_{\tau^{*,d}}+g^{a,\text{sim}}({\tau^{*,d}},P^*_{\tau^{*,d}},N^a_{\tau^{*,d}},N^b_{\tau^{*,d}})}{{\tau^{*,d}}+h})\big].
\end{split}
\end{equation*}
\textit{Case $(i)$ on} $\{\tau^*<\tau^b-\delta\}$. We have $\tau^{*,d}-\tau^*\underset{\delta\rightarrow 0}{\rightarrow} 0$ a.s. 

Since $v^a\tau$ and $v^b\tau$ are replaced by the nearest integer, the functions $g^{i,\text{first}}$, $g^{i,\text{sim}}$, for $i=a,b$ are right-continuous with respect to the time $t$. Hence, we get $$\mathbf{1}_{\tau^*<\tau^b-\delta} (\frac{L^a_{\tau^*}+g^{a,\text{first}}({\tau^*},P^*_{\tau^*},N^a_{\tau^*},N^b_{\tau^*})}{{\tau^*}+h}-\frac{L^a_{\tau^{*,d}}+g^{a,\text{first}}({\tau^{*,d}},P^*_{\tau^{*,d}},N^a_{\tau^{*,d}},N^b_{\tau^{*,d}})}{{\tau^{*,d}}+h})\overset{a.s.}{\underset{\delta\rightarrow 0}{\rightarrow}} 0.$$
So, by the dominated convergence theorem we have 
$$\mathbb{E}\big[\mathbf{1}_{\tau^*<\tau^b-\delta} (\frac{L^a_{\tau^*}+g^{a,\text{first}}({\tau^*},P^*_{\tau^*},N^a_{\tau^*},N^b_{\tau^*})}{{\tau^*}+h}-\frac{L^a_{\tau^{*,d}}+g^{a,\text{first}}({\tau^{*,d}},P^*_{\tau^{*,d}},N^a_{\tau^{*,d}},N^b_{\tau^{*,d}})}{{\tau^{*,d}}+h})\big]\underset{\delta\rightarrow 0}{\rightarrow} 0.$$

 \textit{Case $(ii)$ on} $\{\tau^b-\delta\leq\tau^*<\tau^b\}$. From the Cauchy-Schwarz theorem together with the polynomial growth of $g^{a,\text{first}}, g^{a,\text{sim}}$ and the bounds on the moments, we deduce that 

\begin{equation*}
\begin{split}
   & \mathbb{E}\big[\mathbf{1}_{\tau^b-\delta\leq\tau^*<\tau^b} (\frac{L^a_{\tau^*}+g^{a,\text{first}}({\tau^*},P^*_{\tau^*},N^a_{\tau^*},N^b_{\tau^*})}{{\tau^*}+h}-\frac{L^a_{\tau^{*,d}}+g^{a,\text{sim}}({\tau^{*,d}},P^*_{\tau^{*,d}},N^a_{\tau^{*,d}},N^b_{\tau^{*,d}})}{{\tau^{*,d}}+h})\big]\\
  &  \leq C\mathbb{P}(\tau^b-\delta\leq\tau^*<\tau^b)
\end{split}
\end{equation*}
for some constant $C>0$. The set $\{\tau^b-\delta\leq\tau^*<\tau^b\}$ is decreasing in $\delta$ and converges to $\emptyset$, consequently $\mathbb{P}(\tau^b-\delta\leq\tau^*<\tau^b)\underset{\delta\rightarrow 0}{\rightarrow} 0$. Hence
\begin{equation*}
\begin{split}
    \mathbb{E}\big[\mathbf{1}_{\tau^b-\delta\leq\tau^*<\tau^b} (\frac{L^a_{\tau^*}+g^{a,\text{first}}({\tau^*},P^*_{\tau^*},N^a_{\tau^*},N^b_{\tau^*})}{{\tau^*}+h}-\frac{L^a_{\tau^{*,d}}+g^{a,\text{sim}}({\tau^{*,d}},P^*_{\tau^{*,d}},N^a_{\tau^{*,d}},N^b_{\tau^{*,d}})}{{\tau^{*,d}}+h})\big]\underset{\delta\rightarrow 0}{\rightarrow} 0.
\end{split}
\end{equation*}
We thus obtain 
\begin{equation}\label{second_term}
   \mathbb{E}\big[ \frac{L^a_{\tau^*_d\wedge \tau^b}+\xi^a_{\tau^*_d\wedge\tau^b}}{\tau^*_d\wedge \tau^b+h}-\frac{L^a_{\tau^*\wedge \tau^b}+\xi^a_{\tau^*\wedge\tau^b}}{\tau^*\wedge \tau^b+h}\big]\underset{\delta\rightarrow 0}{\rightarrow} 0.
\end{equation}
We finally study \eqref{EH}. By using the Lipschitz property of $H^{a,*}$ and classical \textit{a priori} estimates for the BSDEs \eqref{bsde:d} and \eqref{bsde:continue} we get for some positive constant $C_0,C
_1,C_2$ \begin{equation*}
    \begin{split}
        &\mathbb{E}\big[\int_0^{\tau^*\wedge\tau^b}[(H^{a,*}(P^*_t,Z^{a,d}_t,\lambda^b_t)-\Lambda_t\cdot Z^{a,d}_t)-(H^{a,*}(P^*_t,Z^a_t,\lambda^b_t)-\Lambda_t\cdot Z^{a,d}_t)]dt\big]\\
        &\leq C_0 \mathbb{E}\big[\int_0^{\tau^*\wedge\tau^b}|Z^{a,d}_t-Z^a_t|dt\big]\\
        &\leq C_1 \mathbb{E}\big[\int_0^{\tau^*\wedge\tau^b}|Z^{a,d}_t-Z^a_t|^2dt\big]^{\frac{1}{2}}\\
        &\leq C_2(\mathbb{E}\big[ \big(\frac{L^a_{\tau^*_d\wedge \tau^b}+\xi^a_{\tau^*_d\wedge\tau^b}}{\tau^*_d\wedge \tau^b+h}-\frac{L^a_{\tau^*\wedge \tau^b}+\xi^a_{\tau^*\wedge\tau^b}}{\tau^*\wedge \tau^b+h}\big)^2\big]+\mathbb{E}\big[\int_{\tau^*\wedge\tau^b}^{\tau^{*,d}\wedge\tau^b}|H^{a,*}(P^*_t,Z^a_t,\lambda^b_t)-\Lambda_t\cdot Z^{a,d}_t|dt\big])
    \end{split}
\end{equation*}
The first term goes to zero by \eqref{second_term}, and the second one goes to zero by \eqref{first_term}. We deduce that
\begin{equation}\label{third_term}
    \mathbb{E}\big[\int_0^{\tau^*\wedge\tau^b}[(H^{a,*}(P^*_t,Z^{a,d}_t,\lambda^b_t)-\Lambda_t\cdot Z^{a,d}_t)-(H^{a,*}(P^*_t,Z^a_t,\lambda^b_t)-\Lambda_t\cdot Z^{a,d}_t)]dt\big]\underset{\delta\rightarrow 0}{\rightarrow} 0
\end{equation}

From \eqref{first_term}, \eqref{second_term} and \eqref{third_term} we finally conclude that, for $\delta$ small enough, 
\begin{equation*}
\mathbb{E}^{\lambda^a,\lambda^b} \big[\frac{L^a_{\tau^a\wedge \tau^b}+\xi^a_{\tau^a\wedge\tau^b}}{\tau^a\wedge \tau^b+h}\big]\leq\underset{\tau\in\mathcal{T}_{0,T},\lambda\in\mathcal{U}}{\inf} \mathbb{E}^{\lambda,\lambda^b} \big[\frac{L^a_{\tau}+\xi^a_{\tau\wedge\tau^b}}{\tau^a+h}\big]+\varepsilon.
\end{equation*}
}

\comment{
\section{Extension to semi-open-loop equilibria}\label{section:SOLNE}
In this section, we consider a relaxed definition of OLNE which may seem more natural in our context. We are looking for an open-loop with respect to the stopping times but closed-loop for the controlled intensities. In fact, when a player updates his strategy, the other one might detect it \textit{via} a change in the trading speed. Consequently he may update his trading speed accordingly. In this case, under some assumptions on the optimal trading rates for fixed stopping times, we are able to build Nash equilibria for the discretised game in the general case, without requiring randomised stopping times. To do so, we extend the results of \cite{grigorova2017optimal}.\\

 First, we need to consider the following assumption.
\begin{assumption}\label{assum_n2}
For any $t\in[0,T]$, $(\tau^a,\tau^b)\in\mathcal{T}_{t,T}^2$, we set
\begin{equation*}
    \begin{split}
\mathcal N^{\tau^a,\tau^b}_t = \Big\{&(\lambda^{a},\lambda^{b})\in\mathcal{U}^2_{[t,T+h]}\textit{, }\mathbb{E}^{\lambda^a,\lambda^b}_t\big[\frac{L^a_\tau+\xi^a_\tau}{\tau+h}\big] = \underset{\lambda\in\mathcal{U}_{[t,T+h]}}{\einf}\; \mathbb{E}^{\lambda,\lambda^b}_t\big[\frac{L^a_\tau+\xi^a_\tau}{\tau+h}\big]\\
&\hspace{3.8em}\textit{ and }\mathbb{E}^{\lambda^a,\lambda^b}_t\big[\frac{-L^b_\tau+\xi^b_\tau}{\tau+h}\big] = \underset{\lambda\in\mathcal{U}_{[t,T+h]}}{\esup}\; \mathbb{E}^{\lambda^a,\lambda}_t\big[\frac{-L^b_\tau+\xi^b_\tau}{\tau+h}\big]\Big\},
    \end{split}
\end{equation*} with $\tau=\tau^a\wedge \tau^b$. We suppose that for any $(\tau^a,\tau^b)\in\mathcal{T}_{t,T}^2$, the set
\[
\big\{(\mathbb{E}^{\lambda^a,\lambda^b}_t\big[\frac{L^a_\tau+\xi^a_\tau}{\tau+h}\big],\mathbb{E}^{\lambda^a,\lambda^b}_t\big[\frac{-L^b_\tau+\xi^b_\tau}{\tau+h}\big])\textit{, } (\lambda^{a},\lambda^{b})\in \mathcal N_t^{\tau^a,\tau^b}\big\}
\]
has only one element denoted by $(\mathcal{E}^ a_t(\tau^a,\tau^b),\mathcal{E}^ b_t(\tau^a,\tau^b)))$.

 In addition to that, for any $(\tau^a_n,\tau^b_n)\in \mathcal T_{t,T}^2$ converging almost surely to $(\tau^a,\tau^b)$ a.s., we suppose that $(\mathcal{E}^ a_t(\tau^a_n,\tau^b_n),\mathcal{E}^ b_t(\tau^a_n,\tau^b_n))\rightarrow(\mathcal{E}^ a_t(\tau^a,\tau^b),\mathcal{E}^ b_t(\tau^a,\tau^b))$ and $(\dfrac{L^a_{\tau_n}+\xi^a_{\tau_n}}{\tau_n+h},\dfrac{-L^b_{\tau_n}+\xi^b_{\tau_n}}{\tau_n+h})\rightarrow(\dfrac{L^a_\tau+\xi^a_\tau}{\tau+h},\dfrac{-L^b_\tau+\xi^b_\tau}{\tau+h})$ a.s. where $\tau_n = \tau^a_n\wedge\tau^b_n\wedge T$.
\end{assumption}
\begin{remark}
This assumption is equivalent to assuming the uniqueness of the values of the Nash equilibra once the stopping times are fixed, and that those values are continuous with respect to the terminal values.
\end{remark}

\begin{remark}
Assumption \ref{assum_n2} is quite a strong assumption. It might be relaxed by considering that the players have a functional giving them a Nash equilibrium (in the trading speeds once the stopping times are fixed). We suppose Assumption \ref{assum_n2} for simplicity.
\end{remark}

 We then relax the definition of an OLNE with the following definition of a semi open-loop Nash equilibrium.
\begin{definition}[semi open-loop Nash equilibrium (SOLNE)]
Then, we say that $(\tau^{a,*},\tau^{b,*})\in (\mathcal{T}_{0,T})^2$ is a semi open-loop Nash equilibrium (SOLNE) if 
\begin{equation*}
\begin{cases}
    \mathcal{E}^a_0(\tau^{a,*},\tau^{b,*})&\leq \mathcal{E}^a_0(\tau^a,\tau^{b,*}) \hspace{5mm}\forall \tau^a\in\mathcal{T}_{0,T},\\
    \mathcal{E}^b_0(\tau^{a,*},\tau^{b,*})&\geq \mathcal{E}^b_0(\tau^{a,*},\tau^b)\hspace{5mm}\forall \tau^b\in\mathcal{T}_{0,T}
\end{cases}
\end{equation*}
\end{definition}

 In this framework, the value functions are implicitly given by the stopping times, and we look for an equilibrium in the game where both players choose their stopping times, and the trading speeds are given by a couple $(\lambda^a,\lambda^b)$ for which the value functions are equal to the (unique) optimal values given by Assumption \ref{assum_n2}.\\

 Now, we extend our definition to the discretised game and show the existence of a (pure) Nash equilibrium.

\begin{definition}[SOLNE for the discrete game (SOLNED)]\label{def_n2_disc}
Then, we say that $(\tau^{a,*},\tau^{b,*})\in (\mathcal{T}_{0,T}^d)^2$ is a semi open-loop Nash equilibrium of the discretised game (SOLNED) if 
\begin{equation*}
\begin{cases}
    \mathcal{E}^a_0(\tau^{a,*},\tau^{b,*})&\leq \mathcal{E}^a_0(\tau^a,\tau^{b,*}) \hspace{5mm}\forall \tau^a\in\mathcal{T}^d_{0,T}\\
    \mathcal{E}^b_0(\tau^{a,*},\tau^{b,*})&\geq \mathcal{E}^b_0(\tau^{a,*},\tau^b)\hspace{5mm}\forall \tau^b\in\mathcal{T}^d_{0,T}
\end{cases}
\end{equation*}
\end{definition}

 We make a further assumption on the monotonicity of the stopping values.
\begin{assumption}\label{assum_monotonicity}
We suppose that $g^{a,\text{first}}\geq g^{a,\text{sim}}> g^{a,\text{second}}$ and $g^{b,\text{first}}\geq g^{b,\text{sim}}> g^{b,\text{second}}$.
\end{assumption}

\begin{remark}
Numerically, this assumption seems to be met for a large domain of parameters. Moreover, it is exactly the situation we wish to have in real {\it ad-hoc} auctions, in which it is costly to trigger an auction first. The strict inequalities can be replaced by large inequalities at the cost of some technicalities in the following proof. 
\end{remark}

\begin{theorem}\label{thm:SOLNED}
Under Assumption \ref{assum_n2} and Assumption \ref{assum_monotonicity}, there exists a SOLNED in the sense of Definition \ref{def_n2_disc}.
\end{theorem}
\begin{proof}
The proof follows the lines of \cite{grigorova2017optimal}. We divide the proof in several steps, first we introduce the notations required, second we build a sequence of stopping times to solve the game, third we prove that this sequence is decreasing, finally we prove that the values associated with this sequence converges to a SOLNED.\\

 \textit{Step 1: notations.} For $i\in\{a,b\}$ and $k\in\llbracket 0,T/\delta-1\rrbracket$, let
\begin{equation*}
    \begin{split}
        X^i_k &= \frac{\beta_i L^i_{k\delta}+g_i^{\text{first}}({k\delta},N^a_{k\delta},N^b_{k\delta})}{{k\delta}+h}\\
        Y^i_k &= \frac{\beta_i L^i_{k\delta}+g_i^{\text{sim}}({k\delta},N^a_{k\delta},N^b_{k\delta})}{{k\delta}+h}\\
        Z^i_k &= \frac{\beta_i L^i_{k\delta}+g_i^{\text{second}}({k\delta},N^a_{k\delta},N^b_{k\delta})}{{k\delta}+h}
    \end{split}
\end{equation*}
where $\beta_i = \mathbf{1}_{i=a}-\mathbf{1}_{i=b}$. For $t=T$, let
\begin{equation*}
    X^i_{T/\delta} = Y^i_{T/\delta} = Z^i_{T/\delta} = \frac{\beta_i L^i_T+g_i^{\text{T}}(T,N^a_T,N^b_T)}{T+h}
\end{equation*}
Under Assumption \ref{assum_monotonicity}, for $k\in\llbracket 0,T/\delta\rrbracket$, we have
\begin{equation*}
        X^a_k\geq Y^a_k\geq Z^a_k\text{ and }
        X^b_k\leq Y^b_k\leq Z^b_k.
\end{equation*}
A semi-open-loop Nash equilibrium of the discretised game is thus described by a couple of stopping times $(\tau^a,\tau^b)\in\mathcal{T}^d$ such that
\begin{equation*}
    \begin{cases}
        \mathcal{E}^a_0(\tau^a,\tau^b) = \underset{\tau\in\mathcal{T}^d_{0,T}}{\einf}\,\mathcal{E}^a_0(\tau,\tau^b)\\
        \mathcal{E}^b_0(\tau^a,\tau^b) = \underset{\tau\in\mathcal{T}^d_{0,T}}{\esup}\,\mathcal{E}^b_0(\tau^a,\tau)
    \end{cases}
\end{equation*}

 \textit{Step 2: Construction of a sequence of stopping times.}

 We build a sequence of stopping times recursively. We start by setting $\tau_1=\tau_2=T$. $\tau_1$ and $\tau_2$ are the stopping time chosen by Player $a$ and Player $b$ at the start of our algorithm. At step $m\in\mathbb{N}^*$, we update the stopping time of Player $a$ if $m$ is odd, or Player $b$ if $m$ is even, and call this new stopping time $\tau_m$. In the following we explain how $\tau_m$ is chosen.\\

 If $m=2n+1$, we define, for $k\in\llbracket 0,T/\delta\rrbracket$, $i\in\{a,b\}$,
\begin{equation*}
    \Xi_k^{2n+1,i} = X_k^i\mathbf{1}_{k<\tau_{2n}}+Y_k^i\mathbf{1}_{k=\tau_{2n}}+Z_k^i\mathbf{1}_{k>\tau_{2n}}, 
\end{equation*}
the discrete Snell envelope for Player $a$
\begin{equation*}
    W_k^{2n+1} = \underset{\tau\in\mathcal{T}^d_{k\delta,T}}{\einf}\mathcal{E}^a_{k\delta}(\tau,\tau_{2n})
\end{equation*}
and the first hitting time of the envelope
\begin{equation*}
    \hat\tau_{2n+1} = \delta\inf\{k\in\{0,T/\delta\}\textit{, }W_k^{2n+1} = \Xi_k^{2n+1,a} \}.
\end{equation*}
Next we update the stopping time for Player $a$ only if $\hat\tau_{2n+1}\leq\tau_{2n}$, 
\begin{equation*}
    \tau_{2n+1} = (\hat\tau_{2n+1}\wedge\tau_{2n-1})\mathbf{1}_{(\hat\tau_{2n+1}\wedge\tau_{2n-1})\leq\tau_{2n}} + \tau_{2n-1}\mathbf{1}_{(\hat\tau_{2n+1}\wedge\tau_{2n-1})>\tau_{2n}}.
\end{equation*}
By construction $\hat\tau_{2n+1}\in\mathcal{T}^d$ and $\tau_{2n+1}\in\mathcal{T}^d$. We notice that, for $k\in\llbracket 0,T/\delta\rrbracket$,
\begin{equation}\label{equ_apres}
W_k^{2n+1}\mathbf{1}_{\tau_{2n}<k\delta}= \underset{\tau\in\mathcal{T}^d_{k\delta,T}}{\einf}\;\mathcal{E}^a_{k\delta}(\tau,\tau_{2n})\mathbf{1}_{\tau_{2n}<k\delta}= \Xi_k^{2n+1,a}\mathbf{1}_{\tau_{2n}<k\delta}= Z^a_{\tau_{2n}/\delta}\mathbf{1}_{\tau_{2n}<k\delta}
\end{equation}
Also, if we introduce the stopping times $\tau^<_{2n+1} = \delta\inf\{k\in\{0,..,T\}\textit{, }W_k^{2n+1}=X_k^a\}$ and $\tau^=_{2n+1} = \delta\inf\{k\in\{0,..,T\}\textit{, }W_k^{2n+1}=Y_k^a\}$, we obtain by using \eqref{equ_apres}
\begin{equation}\label{tau_decomp}
\begin{split}
    \hat\tau_{2n+1} &=\delta\inf\{k\in\{0,T/\delta\}\textit{, }W_k^{2n+1} = \Xi_k^{2n+1,1} \}\\
    &=\delta\inf\{k\in\{0,T/\delta\}\textit{, }W_k^{2n+1}\mathbf{1}_{\tau_{2n}>k\delta}+W_k^{2n+1}\mathbf{1}_{\tau_{2n}=k\delta} = X_k^{a}\mathbf{1}_{\tau_{2n}>k\delta}+Y_k^{a}\mathbf{1}_{\tau_{2n}=k\delta} \}\\\vspace{1cm}
    &= \tau^<_{2n+1}\mathbf{1}_{\tau^<_{2n+1}<\tau_{2n}}+\tau^=_{2n+1}\mathbf{1}_{\tau^=_{2n+1}=\tau_{2n}\textit{, }\tau^<_{2n+1}\geq\tau_{2n}}+((\tau_{2n}+\delta)\wedge T)\mathbf{1}_{\tau^<_{2n+1}\geq\tau_{2n}\textit{, }\tau^=_{2n+1}!=\tau_{2n}}.
\end{split}
\end{equation}
Similarly, we introduce the same objects for Player $b$ at step $2n+2$. For $k\in\llbracket 0,T/\delta\rrbracket$, $i\in\{a,b\}$, let
\begin{equation*}
\begin{split}
    &\Xi_k^{2n+2,i} = X_k^i\mathbf{1}_{k<\tau_{2n+1}}+Y_k^i\mathbf{1}_{k=\tau_{2n+1}}+Z_k^i\mathbf{1}_{k>\tau_{2n+1}}\\
    &W_k^{2n+2} = \underset{\tau\in\mathcal{T}^d_{k\delta,T}}{\esup}\mathcal{E}^b_{k\delta}(\tau_{2n+1},\tau)\\
    &\hat\tau_{2n+2} = \delta\inf\{k\in\{0,T/\delta\}\textit{, }W_k^{2n+2} = \Xi_k^{2n+2,b} \}\\
    &\tau_{2n+2} = (\hat\tau_{2n2}\wedge\tau_{2n})\mathbf{1}_{(\hat\tau_{2n+2}\wedge\tau_{2n})\leq\tau_{2n+1}} + \tau_{2n}\mathbf{1}_{(\hat\tau_{2n+2}\wedge\tau_{2n})>\tau_{2n+1}}.
\end{split}
\end{equation*}
Then $\hat\tau_{2n+n}\in\mathcal{T}^d$, $\tau_{2n+n}\in\mathcal{T}^d$ and we have, similarly to \eqref{equ_apres} and \eqref{tau_decomp}, that
\begin{equation*}
W_k^{2n+2}\mathbf{1}_{\tau_{2n+1}<k\delta}=  Z^b_{\tau_{2n+1}/\delta}\mathbf{1}_{\tau_{2n+1}<k\delta}
\end{equation*} and 
\begin{equation}\label{tau_decomp2}
    \hat\tau_{2n+2} = \tau^<_{2n+2}\mathbf{1}_{\tau^<_{2n+2}<\tau_{2n+1}}+\tau^=_{2n+2}\mathbf{1}_{\tau^=_{2n+2}=\tau_{2n+1}\textit{, }\tau^<_{2n+2}\geq\tau_{2n+1}}+((\tau_{2n+1}+\delta)\wedge T)\mathbf{1}_{\tau^<_{2n+2}\geq\tau_{2n+1}\textit{, }\tau^=_{2n+2}\neq\tau_{2n+1}}.
\end{equation}

 \textit{Step 3: Proof that the stopping times are decreasing.}

 We now show that $\forall m\geq 1$,
\begin{equation*}
    \hat \tau_{m+2}\leq \tau_m\textit{ a.s..}
\end{equation*}
Indeed, suppose this is false and let $n\geq 1$ the smallest integer such that $\mathbb{P}(\tau_n<\hat \tau_{n+2})>0$. Then by definition $n\geq 3$ and $\hat \tau_{n+1}\leq \tau_{n-1}$ a.s.. By using \eqref{tau_decomp} and \eqref{tau_decomp2}, we have
\begin{equation}\label{n1etn}
    \begin{split}
        &\tau_{n+1}= \hat \tau_{n+1}\mathbf{1}_{\hat \tau_{n+1}\leq \tau_{n}}+\tau_{n-1}\mathbf{1}_{\hat \tau_{n+1}> \tau_{n}}\\
        &\tau_{n}= \hat \tau_{n}\mathbf{1}_{\hat \tau_{n}\leq \tau_{n-1}}+\tau_{n-2}\mathbf{1}_{\hat \tau_{n}> \tau_{n-1}}.
    \end{split}
\end{equation}
Let $\Gamma = \{\tau_n<\hat\tau_{n+2}\}$. Then $\mathbb{P}(\Gamma)>0$. We prove some intermediate results and reach a contradiction. On $\Gamma$, we have
\begin{enumerate}[(i)]
    \item $\tau_n<\hat \tau_{n+2}\leq (\tau_{n+1}+1)\wedge T\textit{ a.s.}$
    
    Indeed, the first inequality comes from the definition of $\Gamma$, and the second one comes from \eqref{tau_decomp} or \eqref{tau_decomp2}.
    \item $\tau_{n+1} = \tau_{n-1}\textit{ a.s.}$
    
    Indeed, suppose it is false. Then if $\tau_{n+1}=T$, then, by \eqref{n1etn}, necessarily $\tau_{n+1}=\hat\tau_{n+1}=\tau_{n}=T$, but $\tau_n<T$ by (i), so we have a contradiction. If $\tau_{n+1}<T$, then also $\tau_{n+1}=\hat\tau_{n+1}\leq\tau_n$ by \eqref{n1etn}. Using this and (i), $\tau_{n+1}\leq\tau_n<\hat\tau_{n+2}\leq\tau_{n+1}+1\leq\tau_{n}+1$ so $\hat\tau_{n+2}=\tau_{n+1}+1$ and $\tau_n=\tau_{n+1}$. By the definition of $n$, $\tau_{n-1}>\tau_{n+1}=\tau_n$ a.s.. In other words, where in a situation where a player (strictly) prefers stopping at the same time as the other player rather than after the other player, which is absurd given Assumption \ref{assum_monotonicity}.
    \item $\Xi^{n+2,a} = \Xi^{n,a}$ if $n$ is odd, $\Xi^{n+2,b} = \Xi^{n,b}$ if $n$ is even. This result comes directly from (ii) and the definition of $\Xi^{.,a}$ and $\Xi^{.,b}$.
    \item $\tau_n = \hat\tau_n$ a.s..
    
    Given \eqref{n1etn}, it is enough to show $\mathbb{P}(\{\hat\tau_n>\tau_{n-1}\}\cap\Gamma)=0$. Notice $\hat\tau_n\leq\tau_{n-2}$ a.s. on $\Gamma$ given the definition of $n$, and $\{\hat\tau_n>\tau_{n-1}\}\subset\{\tau_n=\tau_{n-2}\}$. Also b (ii), $\tau_{n+1} = \tau_{n-1}\textit{ a.s.}$, so $\mathbb{P}(\{\hat\tau_n>\tau_{n-1}\}\cap\Gamma) = \mathbb{P}(\tau_{n+1}=\tau_{n-1}<\hat\tau_n\leq\tau_{n-2}=\tau_n<\hat\tau_{n+2})\leq \mathbb{P}(\hat\tau_{n+2}-\tau_{n+1}\geq 2)=0$ given (i).
    \item $\hat\tau_n<\hat\tau_{n+2}$ a.s..
    
    This comes directly from (i) and (iv).
    \item $W_{\hat\tau_n}^{n+2} = \Xi_{\hat\tau_n}^{n+2,a}$ if $n$ is odd and $W_{\hat\tau_n}^{n+2} = \Xi_{\hat\tau_n}^{n+2,b}$ if $n$ is even.
    
    Indeed, let $i=b$ if $n$ is odd, and $i=b$ if $n$ is even. By definition, $W_{\hat\tau_n}^{n} = \Xi_{\hat\tau_n}^{n,i}$. Using (iii), we get $\Xi_{\hat\tau_n}^{n,a} = \Xi_{\hat\tau_n}^{n+2,a}$ a.s. on $\Gamma$. Using (ii) and the definition of $W^.$, we also get $W_{\hat\tau_n}^{n} = W_{\hat\tau_n}^{n+2}$, which yields the result.
    \item $\hat\tau_n\geq\hat\tau_{n+2}$ a.s..
    This comes directly from (vi) and the definition of $\hat\tau_{n+2}$.
\end{enumerate}
Putting (v) and (vii) together, we conclude that $\mathcal{P}(\Gamma)=0$. We deduce that $\hat \tau_{m+2}\leq \tau_m$ for all $m\geq 1$, a.s., and consequently $\tau_{m+2}\leq \tau_m$ for all $m\geq 1$, a.s..\\

 \textit{Step 4: Convergence towards a SOLNED.}

 Given that $\hat \tau_{m+2}\leq \tau_m$ for all $m\geq 1$, a.s., we have that
\begin{equation*}
    \mathcal{E}^a_{0}(\tau_{2n+1},\tau_{2n}) = \mathcal{E}^a_{0}(\hat\tau_{2n+1},\tau_{2n}) = \underset{\tau\in\mathcal{T}^d_{0,T}}{\einf}\; \mathcal{E}^a_{0}(\tau,\tau_{2n})
\end{equation*}
for all $n>0$. 
The sequences $(\tau_{2n+1})_{n\in\mathbb{N}^*}$ and $(\tau_{2n+2})_{n\in\mathbb{N}^*}$ are decreasing a.s. so converge a.s. to stopping times $\tau^a\in\mathcal{T}^d_{0,T}$ and $\tau^b\in\mathcal{T}^d_{0,T}$. It suffices to show that
\begin{equation*}
    \mathcal{E}^a_{0}(\tau_{2n+1},\tau_{2n})\rightarrow\mathcal{E}^a_{0}(\tau^a,\tau^b)
\end{equation*}
and 
\begin{equation*}
    \mathcal{E}^a_{0}(\tau,\tau_{2n})\rightarrow\mathcal{E}^a_{0}(\tau,\tau^b)
\end{equation*}
as well as the analogous results for Player $b$ to conclude. We show the first convergence result. Notice that, for a given $\omega$, the sequences are stationary, so 
\begin{equation*}
    \begin{split}
        &X^a_{\tau/\delta}\mathbf{1}_{\tau<\tau_{2n}} + Y^a_{\tau/\delta}\mathbf{1}_{\tau=\tau_{2n}} + Z^a_{\tau_{2n}/\delta}\mathbf{1}_{\tau>\tau_{2n}}\rightarrow X^a_{\tau^a/\delta}\mathbf{1}_{\tau^a<\tau^b} + Y^a_{\tau^a/\delta}\mathbf{1}_{\tau^a=\tau^b} + Z^a_{\tau/\delta}\mathbf{1}_{\tau^a>\tau^b}\\
        &X^b_{\tau_{2n}/\delta}\mathbf{1}_{\tau_{2n}<\tau} + Y^b_{\tau_{2n}/\delta}\mathbf{1}_{\tau_{2n}=\tau} + Z^b_{\tau/\delta}\mathbf{1}_{\tau_{2n}>\tau}\rightarrow X^b_{\tau^b/\delta}\mathbf{1}_{\tau^b<\tau^a} + Y^b_{\tau^b/\delta}\mathbf{1}_{\tau^b=\tau^a} + Z^b_{\tau^a/\delta}\mathbf{1}_{\tau^b>\tau^a}
    \end{split}
\end{equation*}
a.s., as $n$ goes to $\infty$. Using Assumption \ref{assum_n2}, we obtain the result.
\end{proof}
}
\section{Algorithms}\label{algo}

\subsection{Value functions when $\hat n=\hat n_{ab}=0$}
\begin{algorithm}[H]\label{algovaluediscrete}
\SetAlgoLined
\KwResult{Value functions and probability of triggering an auction}
 Set 
 \begin{equation*}
    (U^a_{T/\delta},U^b_{T/\delta}) = (\frac{L^a_T+g^{\text{T}}_a(T,P^*_T,N^a_T,N^b_T)}{T+h}, \frac{-L^b_T+g^{\text{T}}_b(T,P^*_T,N^a_T,N^b_T)}{T+h})
\end{equation*}
 \For{$k\in\{T/\delta-1,...,0\}$}{
  Let
$(\lambda^{a,*}_{k},\lambda^b_{k})\in\mathcal{U}_{k}^2$ such that
\begin{equation*}
\begin{cases}
  \mathbb{E}^{\lambda^{a,*}_k,\lambda^{b,*}_k}_{k\delta}\big[U^a_{k+1}\big] = \underset{\lambda^a\in\mathcal{U}_{{[k\delta,(k+1)\delta]}}}{\einf}\;\mathbb{E}^{\lambda^a,\lambda^{b,*}_k}_{k\delta}\big[U^a_{k+1}\big]\\
\mathbb{E}^{\lambda^{a,*}_k,\lambda^{b,*}_k}_{k\delta}\big[U^b_{k+1}\big] = \underset{\lambda^b\in\mathcal{U}_{{[k\delta,(k+1)\delta]}}}{\esup}\;\mathbb{E}^{\lambda^{a,*}_k,\lambda^b}_{k\delta}\big[U^b_{k+1}\big],\\
\end{cases}
\end{equation*}
\eIf{$\mathbb{E}^{\lambda^{a,*}_{k},\lambda^{b,*}_{k}}_{k\delta}\big[U^a_{k+1}\big]\leq\frac{L^a_{k\delta}+g^{\text{first}}_a({k\delta},P^*_{k\delta},N^a_{k\delta},N^b_{k\delta})}{k\delta+h}$ and $ \mathbb{E}^{\lambda^{a,*}_{k},\lambda^{b,*}_{k}}_{k\delta}\big[U^b_{k+1}\big]\geq\frac{-L^b_{k\delta}+g^{\text{first}}_b({k\delta},P^*_{k\delta},N^a_{k\delta},N^b_{k\delta})}{k\delta+h}$}{
Set
$(U^a_k,U^b_k) = (\mathbb{E}^{\lambda^{a,*}_{k},\lambda^{b,*}_{k}}_{k\delta}\big[U^a_{k+1}\big],\mathbb{E}^{\lambda^{a,*}_{k},\lambda^{b,*}_{k}}_{k\delta}\big[U^b_{k+1}\big])$
}{Set
$(U^a_k,U^b_k) = (\frac{L^a_{k\delta}+g^{\text{first}}_a({k\delta},P^*_{k\delta},N^a_{k\delta},N^b_{k\delta})}{k\delta+h}, \frac{-L^b_{k\delta}+g^{\text{first}}_b({k\delta},P^*_{k\delta},N^a_{k\delta},N^b_{k\delta})}{k\delta+h})$}

 }
 \caption{Computation of the value functions in the discretised game}
\end{algorithm}

\subsection{Value functions in the general case: randomised discrete stopping time}
\begin{algorithm}[H]\label{algovaluediscretegeneral}
\SetAlgoLined
\KwResult{Value functions and probability of triggering an auction}
 Set 
 \begin{equation*}
    (U^a_{T/\delta},U^b_{T/\delta}) = (\frac{L^a_T+g^{\text{T}}_a(T,P^*_T,N^a_T,N^b_T)}{T+h}, \frac{-L^b_T+g^{\text{T}}_b(T,P^*_T,N^a_T,N^b_T)}{T+h})
\end{equation*}
 \For{$k\in\{T/\delta-1,...,0\}$}{
  Let
\begin{itemize}
    \item $(\lambda^{a,*}_{k},\lambda^{b,*}_{k})\in\mathcal{U}_{k}^2$ such that
\begin{equation*}
\begin{cases}
  \mathbb{E}^{\lambda^{a,*}_k,\lambda^{b,*}_k}_{k\delta}\big[U^a_{k+1}\big] = \underset{\lambda^a\in\mathcal{U}_{[k\delta,(k+1)\delta]}}{\einf}\;\mathbb{E}^{\lambda^a,\lambda^{b,*}_k}_{k\delta}\big[U^a_{k+1}\big]\\
\mathbb{E}^{\lambda^{a,*}_k,\lambda^{b,*}_k}_{k\delta}\big[U^b_{k+1}\big] = \underset{\lambda^b\in\mathcal{U}_{[k\delta,(k+1)\delta]}}{\esup}\;\mathbb{E}^{\lambda^{a,*}_k,\lambda^b}_{k\delta}\big[U^b_{k+1}\big],\\
\end{cases}
\end{equation*}
    \item $(p^a_k,p^b_k)\in[0,1]^2$ such that $(p^a_k,1-p^a_k)$ and $(p^b_k,1-p^b_k)$ define the mixed strategies\\
    of a Nash equilibrium for the discrete game of Table \ref{tab::mixed}.
    \end{itemize}
Set
    \begin{equation*}
    \begin{cases}
        \resizebox{\hsize}{!}{$U^a_k = p^a_k p^b_k \frac{L^a_{k\delta}+g^{\text{sim}}_a}{k\delta+h} + p^a_k(1-p^b_k)\frac{L^a_{k\delta}+g^{\text{first}}_a}{k\delta+h} + (1-p^a_k)p^b_k\frac{L^a_{k\delta}+g^{\text{second}}_a}{k\delta+h}+ (1-p^a_k)(1-p^b_k)\mathbb{E}^{\lambda^{a,*}_k,\lambda^{b,*}_k}_{k\delta}\big[U^a_{k+1}\big]$}\\[0.3em]
        \resizebox{\hsize}{!}{$U^b_k = p^a_k p^b_k \frac{-L^b_{k\delta}+g^{\text{sim}}_b}{k\delta+h} + p^a_k(1-p^b_k)\frac{-L^b_{k\delta}+g^{\text{second}}_b}{k\delta+h} + (1-p^a_k)p^b_k\frac{-L^b_{k\delta}+g^{\text{first}}_b}{k\delta+h}+ (1-p^a_k)(1-p^b_k)\mathbb{E}^{\lambda^{a,*}_k,\lambda^{b,*}_k}_{k\delta}\big[U^b_{k+1}\big]$}
    \end{cases}
    \end{equation*}
    where the arguments of the functions $g^{\text{first}}_a$, $g^{\text{second}}_a$, $g^{\text{sim}}_a$, $g^{\text{first}}_b$, $g^{\text{second}}_b$ and $g^{\text{sim}}_b$ are $(k\delta,P^*_{k\delta},N^a_{k\delta},N^b_{k\delta})$.
  
 }
 \caption{Computation of the value functions in the discretised game}
\end{algorithm}

\subsection{Average duration of the continuous trading phase}\label{section:algo_duration}

The average duration is defined by $$E_k = \mathbb{E}^{\lambda^{a,*}_k,\lambda^{b,*}_k}_{k\delta}\big[\tau-k\delta| \tau\geq k\delta\big]$$ for $k\in\{0,...,T/\delta\}$. In particular $E_0=\mathbb{E}^{\lambda^{a,*},\lambda^{b,*}}[\tau]$. Assuming that it is a Markovian function of the state variables $P^*,N^a,N^b,L^a,L^b$ and that the PDE obtained from the Feyman-Kac formula has a unique solution, we compute $E_k$ with the following algorithm. 

\begin{algorithm}[H]\label{algo:meantime}
\SetAlgoLined
\KwResult{Average duration of the continuous phase}
 Set 
 \begin{equation*}
    E_{T/\delta} = 0
\end{equation*}
 \For{$k\in\{T/\delta-1,...,0\}$}{
  Compute $\mathbb{E}^{\lambda^{a,*}_k,\lambda^{b,*}_k}_{k\delta}\big[E_{k+1}\big]$ with the Feyman-Kac formula: $\mathbb{E}^{\lambda^{a,*}_k,\lambda^{b,*}_k}_{k\delta}\big[E_{k+1}\big] = e_{k\delta}$ where $e$ is the solution of the equation
\begin{equation*}
\begin{cases}
    &e_{(k+1)\delta} = E_{k+1}\\ 
    &\partial_t e+\frac{\sigma^2}{2}\partial^2_{pp}e +1+q(v^a t-n^a)^2\partial_{l^a}e+q(v^b t-n^b)^2\partial_{l^b}e+\mathbf{1}_{P>p}\lambda^{a,*,k}_t D^a e+\mathbf{1}_{P<p}\lambda^{b,*,k}_t D^b e = 0 \\
    &\text{  for }t\in[k\delta,(k+1)\delta)\\
\end{cases}
\end{equation*}
Set
    \begin{equation*}
    E_k = (1-p^a_k)(1-p^b_k)\mathbb{E}^{\lambda^{a,*}_k,\lambda^{b,*}_k}_{k\delta}\big[E_{k+1}\big]
    \end{equation*}
  
 }
 \caption{Computation of the average duration of the continuous phase in the discretised game}
\end{algorithm}
$~$\\
\noindent We recall that the operators $D^a,D^b$ are defined in Appendix \ref{app:verif} as the change in the value functions due to a trade between Player $a$ and the market maker, and Player $b$ and the market maker, respectively. Also, $\partial_{l^a}$ and $\partial_{l^b}$ are the spatial derivatives with respect to the cash processes $L^a$ and $L^b$ of Player $a$ and Player $b$.
\newpage
\section{Comparison \textit{ad-hoc} auctions, periodic auctions and CLOB for small penalty}
\label{app:q005}
\begin{table}[h!]
\centering
    \begin{tabular}{|c || c c | c | c | c c | c | c||} 
         \hline
        &\multicolumn{4}{c|}{$V^a$ (1e-6)}&\multicolumn{4}{c||}{Average duration}\\
        \hline
        Market design & \multicolumn{2}{c|}{\shortstack{$h = 20$,\\ $\hat{n}=3$}}&\shortstack{$h = 20$,\\ $\hat{n}=1$}&CLOB& \multicolumn{2}{c|}{\shortstack{$h = 20$,\\ $\hat{n}=3$}}&\shortstack{$h = 20$,\\ $\hat{n}=1$}&CLOB\\
        \hline
         continuous trading allowed&Yes&No&No&No&Yes&No&No&No  \\ [0.9ex] 
         \hline\hline
         $v^a=0.1$, $v^b=0.1$&16458.6&27424.3&12000.5&10000.0&40.9s&5.6s&0.0s&10.0s \\ 
         \hline
         $v^a=0.05$, $v^b=0.1$&6533.5&11581.1&2000.5&5000.0&56.9s&14.9s&0.0s&10.0s \\  
         \hline
         $v^a=0.1$, $v^b=0.05$&14196.4&22874.1&12000.5&10000.0&56.9s&14.9s&0.0s&20.0s \\  
         \hline
         $v^a=0.15$, $v^b=0.1$&27287.7&30051.5&31501.0&15000.0&27.1s&3.9s&0.0s&6.7s \\  
         \hline
         $v^a=0.1$, $v^b=0.15$&16450.0&16309.9&9510.7&10000.0&27.1s&3.9s&0.0s&10.0s \\  
         \hline
    \end{tabular}
    \caption{$V^a$ and average duration of the continuous trading phase for different values of $v^a$ and $v^b$ with $q=0.005$.}
    \label{tab::compa_q0005}
\end{table}

\end{appendix}

\end{document}